\documentclass[lettersize,journal]{IEEEtran}
\usepackage{comment}
\usepackage{xcolor}
\usepackage{amsthm}
\usepackage{verbatim}
\newtheorem{definition}{Definition}
\newtheorem{theorem}{Theorem}
\usepackage{amssymb}
\usepackage{amsfonts}
\usepackage{amsmath}
\usepackage{graphicx} 
\usepackage{mathrsfs}
\usepackage{placeins}
\usepackage{algorithm}
\usepackage{algpseudocode}
\newtheorem{lemma}{Lemma}
 \usepackage{subfig }
 \usepackage{amssymb}
 \ifCLASSOPTIONcompsoc
  \usepackage[nocompress]{cite}
\else
  \usepackage{cite}
\fi
\usepackage{circuitikz}
\usetikzlibrary{fit}  
 \usepackage{nicematrix,booktabs,enumitem}
\newcommand{\prob}{\mathbb{P}}

\newcommand{\Aset}{{\mathscr{A}}}
\newcommand{\Bset}{\mathscr{B}}
\newcommand{\Eset}{{\mathscr{E}}}
\newcommand{\Fset}{{\mathscr{F}}}
\newcommand{\Gset}{{\mathscr{G}}}
\newcommand{\C}{\mathcal{C}}
\newcommand{\R}{\mathcal{R}}
\newcommand{\W}{\mathcal{W}}

\newcommand{\X}{\mathcal{X}}
\newcommand{\Y}{\mathcal{Y}}
\newcommand{\Z}{\mathcal{Z}}
\newcommand{\rpset}{\Xi}
\newcommand{\pcritical}{\rho}

\hyphenation{op-tical net-works semi-conduc-tor IEEE-Xplore}
\def\BibTeX{{\rm B\kern-.05em{\sc i\kern-.025em b}\kern-.08em
T\kern-.1667em\lower.7ex\hbox{E}\kern-.125emX}}

\usepackage{etoolbox}
\newcommand{\zerodisplayskips}{%
  \setlength{\abovedisplayskip}{1pt}%
  \setlength{\belowdisplayskip}{1pt}%
  \setlength{\abovedisplayshortskip}{1pt}%
  \setlength{\belowdisplayshortskip}{1pt}}
\appto{\normalsize}{\zerodisplayskips}
\appto{\small}{\zerodisplayskips}
\appto{\footnotesize}{\zerodisplayskips}

\makeatletter
\renewcommand{\align@preamble}{%
&\hfil%
\strut@%
\setboxz@h{\@lign$\m@th\textstyle{##}$}%
\ifmeasuring@\savefieldlength@\fi%
\set@field%
\tabskip\z@skip%
&\setboxz@h{\@lign$\m@th\textstyle{{}##}$}%
\ifmeasuring@\savefieldlength@\fi%
\set@field%
\hfil%
\tabskip\alignsep@%
}
\makeatother
\allowdisplaybreaks
\setlength{\textfloatsep}{0.7\baselineskip plus 0.2\baselineskip minus 0.5\baselineskip}
\usepackage{balance}
\usepackage{caption}
\captionsetup[figure]{font=small}
\captionsetup[table]{font=small}
\captionsetup[algorithm]{font=small}

\begin{document}

\title{Exploring Busy Period for Worst-Case Deadline Failure Probability Analysis}
\author{Junyi Liu, Xu Jiang, Yuanzhen Mu, Wang Yi,~\IEEEmembership{Fellow,~IEEE,}, Nan Guan
\thanks{Junyi Liu, Xu Jiang and Yuanzhen Mu are with School of Computer Science and Engineering, University of Electronic Science and Technology of China, Chengdu, China; Junyi Liu is also with  Department of Computing, City University of Hong Kong, Hong Kong, China; Nan Guan is with  Department of Computing, City University of Hong Kong, Hong Kong, China; Wang Yi is with the Uppsala University, Sweden.}

\thanks{(Corresponding authors: Xu Jiang.)}}
\markboth{Journal of \LaTeX\ Class Files,~Vol.~18, No.~9, September~2020}%
{Exploring Busy Period for Worst-Case Deadline Failure Probability Analysis}
\maketitle
\begin{abstract} 
Busy period is a fundamental concept in classical \textit{deterministic} real-time scheduling analysis. In this deterministic context, only one busy period - which starts at the critical instant - needs to be considered, 
which identifies the worst-case scenario
and thus paves the way for the development of efficient and safe analysis techniques. 
However, a recent work has revealed that, in the context of \textit{probabilistic} real-time scheduling analysis, only considering critical instant is not safe. 
In this paper, we address this gap by systematically analyzing deadline miss probabilities across varying busy period starting points. We propose a novel method of \textit{Worst-Case Deadline Failure Probability (WCDFP) }for \textit{probabilistic fixed-priority preemptive scheduling}. Experimental results demonstrate significant improvements over state-of-the-art methods achieved by our proposed method.

\end{abstract}


%
\IEEEpeerreviewmaketitle

\section{Introduction}

\emph{Probabilistic} real-time scheduling analysis has emerged as a crucial approach that acknowledges and addresses the inherent variability in task execution times in real-time systems. In contrast to traditional \emph{deterministic} analysis, which assumes a single worst-case execution time for each task, probabilistic analysis leverages probability distributions to model the execution time of tasks. This approach provides a more accurate representation of task behavior, taking into account the intricate software and hardware complexities prevalent in modern real-time systems.
A key advantage of probabilistic analysis shifting the focus from guaranteeing absolute certainty in meeting timing requirements to evaluating the likelihood of meeting those requirements. By quantifying the probabilities of meeting deadlines, system designers are empowered to make informed decisions based on predefined limits. For example, the automotive standard ISO-26262 specifies specific failure rates for each Automotive Safety Integrity Level (ASIL), providing a framework for decision-making in the automotive industry.
As real-time systems become increasingly complex, probabilistic real-time scheduling analysis has gained significant attention and recognition. 

Busy period serves as a fundamental tool in classical deterministic real-time scheduling analysis. It is defined as a time interval during the job under analysis and which jobs of higher priority tasks continuously occupy the processor, preventing the task under analysis from executing. For a deterministic sporadic task set under fixed-priority preemptive scheduling, all higher priority tasks release jobs simultaneously with the task under analysis (known as the critical instant\cite{liu}) will lead the worst-case scenario among all possible busy period caused by different task release times that may occur at runtime.
However, a recent work \cite{CJJ} revealed that, for analyzing the Worst-Case Deadline Failure Probability (WCDFP) of the probabilistic version of this problem, only focusing on this busy period scenario is unsafe. 

In probabilistic real-time scheduling analysis, even with fixed task release times, multiple busy period starting points may exist. Consequently, a safe WCDFP analysis must account for the cumulative deadline miss probability across all possible starting points. The difficulty lies in how to bound the 
joint probability of deadline miss probability under a possible busy period starting point.
This paper presents a novel technique for analyzing the WCDFP of sporadic tasks under fixed-priority preemptive scheduling. Our method establishes two effective upper bounds of the deadline miss probability for each possible busy period starting points. For the busy period starting points that lead to a long busy period, we derive the deadline miss probability by analyzing the upper bound on the probability of the busy period length. For others, we derive the deadline miss probability by using the property of the busy period that we only need to consider the interference from higher priority tasks
occurring after the starting point.

Through extensive empirical evaluation encompassing a wide range of parameter settings, we demonstrate that our new technique significantly outperforms state-of-the-art techniques in terms of analysis precision. In most cases, our new technique can improve the analysis precision by several orders of magnitude.

\section{System model and definitions}

We assume a discrete-time model and the smallest quantity $1$ represents an indivisible unit of time (e.g., a processor cycle).
  We consider a task set of \(n\) independent sporadic real-time tasks $\mathcal{T}=\{\tau_1,\tau_2,...,\tau_n\}$ executed on one processor by a task-level fixed-priority preemptive scheduling policy and each task has a unique priority. At each point in time, the scheduler ensures that the ready job with the highest priority is executed.
  Without loss of generality, we assume tasks are indexed in their priority order, i.e.,   
  \(\tau_i\) has a higher priority than \(\tau_j\) if $i< j$. For simplicity of presentation, we also denote \(hp(\tau_j)\) as the set of tasks with priorities higher than \(\tau_i\), i.e.,  \(hp(\tau_j) = \{ \tau_i ~|~ i < j\}\).

Each task \(\tau_i\) is modeled by a tuple $(\C_i,D_i,T_i)$, where \(D_i\) and \(T_i\) are two constant parameters, representing the relative deadline and the \emph{minimum} inter-arrival time of \(\tau_i\), respectively. We consider tasks with \emph{constrained deadlines} \cite{leugn82}, i.e., $\forall \tau_i \in \mathcal{T}$: $ D_i\leq T_i$. 
At run-time, a task $\tau_i$ releases a potentially infinite sequence of \emph{jobs} $\tau_{i,1}, \tau_{i,2}, ...$, where $\tau_{i,x}$ denotes the $x^{\textrm{th}}$ job released by $\tau_i$. 
We use $r_{i,x}$ to denote the \emph{release time} of job $\tau_{i,x}$, 
and $d_{i,x}=r_{i,x}+D_i$ is its \emph{absolute deadline}.
Moreover, the same as previous work \cite{CJJ,2013}, we assume that a job is aborted as soon as it misses its deadline.


$\C_i$ is the \emph{probabilistic worst-case execution time} (pWCET) \cite{bozhko2023really} of task $\tau_i$, following a discrete distribution with $h_i$ distinct values $\mathcal{C}_{i}=\begin{pNiceMatrix}[small]
{c_i ^1} & c_i ^2&\ldots&{c_i ^{h_i} }\\
p_i^1 & p_i^2&\ldots&p_i^{h_i}\end{pNiceMatrix}$.

Each job $\tau_{i,x}$ of \(\tau_i\)
has an execution time $\C_{i,x}$, which is an independent copy of $\C_i$. In other words, $\C_{i,x}$ is an independent and identically distributed (i.i.d.) random variable drawn from a distribution characterized by $\C_i$, being one of these $h_i$ distinct values with probability $\prob(\C_{i,x} = c_{i}^m) = p_i ^m$.
The sum of their probabilities is $100\%$, i.e., $\sum_{m=1}^{h_i} p_i ^m =1$.

As $T_i$ is the \textit{minimal} inter-arrival time, the exact release time of each job is non-deterministic, i.e., $(r_{i,1}, r_{i,2}, ...)$ may have different values. 
We call a concrete value assignment of 
$(r_{i,1}, r_{i,2}, ...)$ a {release time pattern} of $\tau_i$. A combination of 
the release time patterns of all tasks is called a release time pattern of the task set, or simply a \emph{release time pattern} for short. 
We use \(\rpset\) to denote the set of all possible release time patterns
and each $\xi \in \rpset$ is a particular release time pattern. 

We use \(\R_{i,x}^\xi\) to denote the random variable of the response time of job $\tau_{i,x}$ under release time pattern $\xi$. If \(\tau_{i,x}\) misses its deadline, \(\tau_{k,x}\) is aborted and we let \(\R_{i,x}^\xi=\infty\) in such cases as the job never finishes.
The \textit{Deadline Failure Probability} (DFP) of a job \(\tau_{i,x}\) under release time pattern $\xi$ is
\begin{equation}\label{l:model-1}
\textrm{DFP}^{\xi}_{i,x} = \prob(\R_{i,x}^\xi>D_i)
\end{equation}  

\begin{definition}
   The Worst-Case Deadline Failure Probability (WCDFP) of a task \(\tau_i\) is 
   \begin{equation*}
\textrm{WCDFP}_{i} = \max \limits_{\xi\in \rpset}\max \limits_{x\geq 1}\{\prob(\R_{i,x}^\xi>D_i)\}
\end{equation*}  
\end{definition}
The aim of this paper is to derive a safe (yet as precise as possible) upper bound of WCDFP$_i$.

\section{Review of Existing Results}\label{s:review}
The deterministic version of our problem (i.e., each $\C_i$ has only one possible value with probability $1$ and the WCDFP of each task is either $0$ or $1$) is the classic 
problem studied in Liu and Layland's seminal paper  \cite{liu}. 
Its analysis is based on the concept of the \emph{busy period}, which is defined as a time interval during which jobs of higher priority tasks continuously occupy the processor, preventing the task under analysis from executing. 
An issue of using busy periods in the analysis is that its starting time is unknown - it is different in different release time patterns.
For the deterministic version of our problem, only one specific release time pattern needs to be considered, where all higher priority
tasks release jobs simultaneously with the task under analysis (known as the \textit{critical instant} \cite{liu}). This scenario yields the worst-case response time for the task under analysis, meaning that its response time under this pattern is the largest among all possible release patterns..

For the probabilistic version of our problem, 
some early work proposed to analyze WCDFP also based on the critical instant, i.e., only considering the simultaneous release scenario\cite{CB,2013,2017RTNS,ren2019}, where 
the WCDFP is computed by
    \begin{equation}\label{e:p-critical}
    \pcritical =   \inf \limits_{0<t\leq D_k} \prob(\C_{k,1} + \sum_{j < k} \sum_{x=1}^{\lceil \frac{t}{T_j}\rceil} \C_{j,x}>t ) 
    \end{equation}
However, a recent work \cite{CJJ} revealed that this leads to unsafe analysis results. Consider a task set of two tasks with the following parameters:
\begin{itemize}
    \item $\tau_1$: $T_1 = D_1=6, \, \mathbb{P}(\C_{1} = 2) = 0.5, \, \mathbb{P}(\C_{1} = 5) = 0.5$
    \item $\tau_2$: $T_2 = D_2=5, \, \mathbb{P}(\C_{2} = 3) = 1$
\end{itemize}
\begin{figure}
\centering 
\subfloat[{{The DFP with synchronous release is $50\%$}}]{
\includegraphics[width=0.32\textwidth]{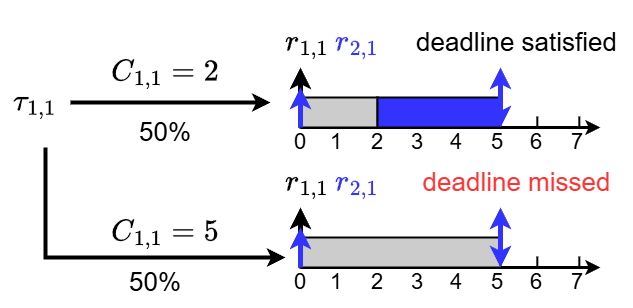} 
\label{subfig}}

 \vfill 
\subfloat[The actual WCDFP is $75\%$]{
\includegraphics[width=0.44\textwidth]{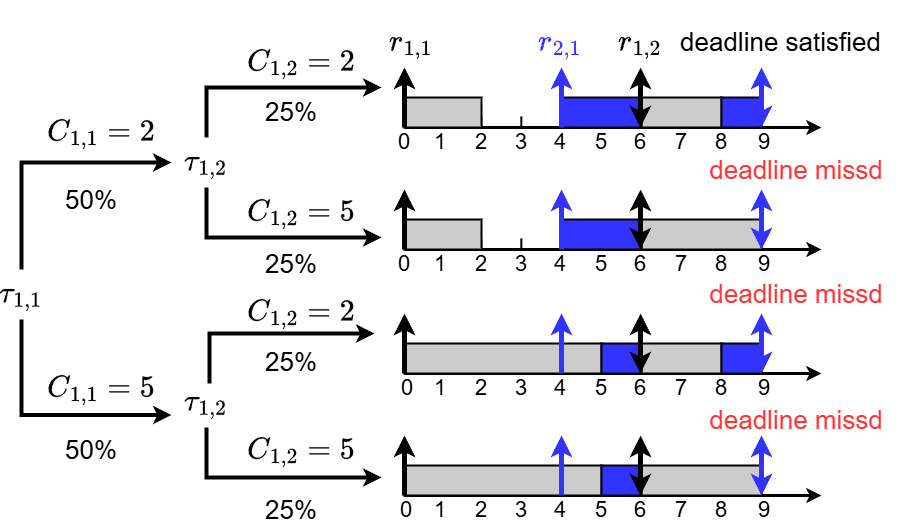} 
\label{subfig2}
}
\caption{The counterexample showing that only considering the simultaneous release is unsafe for WCDFP analysis}
\label{fig1}
\end{figure}

Fig. \ref{fig1}-(a) shows the two possible execution sequences with simultaneous release, i.e., both $\tau_1$ and $\tau_2$ release the first job at time $0$. In this case, the probability for $\tau_{2,1}$ to miss its deadline 
equals the probability for $\tau_{1,1}$ to execute for $5$, so 
$ \pcritical = 0.5$.

However, Fig. \ref{fig1}-(b) shows the scenario where the DFP of $\tau_2$
is actually $0.75$. $\tau_1$ releases the first job at time $0$ and $\tau_2$
releases the first job at time $4$. $\tau_{2,1}$ can meet its deadline only if 
both of the first two jobs of $\tau_i$ executed $2$, leading to a deadline failure probability of $0.75$.

Therefore, we can see that in Fig. \ref{fig1}-(b),  there are two possible starting points of busy period \(t=0\) and \(t=4\). So we need to sum the deadline miss probability in all possible starting points of busy period to safely analyze the DFP. In general, the actual $   \textrm{DFP}_{k,x}= \!\!\! \sum_{\textrm{all possible t}} \! \prob(X \wedge Y)$, where $X$ represents ``busy period starts at $t$'' and  $Y$ represents ``$\tau_{k,x}\textrm{ misses deadline}$''.



The difficulty of deriving DFP lies in how to bound this \emph{joint} probability for every possible busy period.
Note that ``\(\textrm{busy period starts at }t\)'' and ``\(\tau_{k,x}\textrm{ misses its deadline}\)'' are not independent from each other, so 
$\prob(X \wedge Y) \neq \prob(X) * \prob(Y)$.

\section{New Analysis}

For the probabilistic version of our problem, we propose a new analysis to derive a safe but more precise upper bound for WCDFP. Without loss of generality, in the following, we focus on the deadline failure probability of an arbitrary job $\tau_{k,x}$ of task $\tau_k$.
We call $\tau_k$ the \emph{analyzed task} and 
$\tau_{k,x}$ the \emph{analyzed job}.

Under an arbitrary release time pattern, while multiple busy period starting points must be considered, we prove that the set of possible starting points is finite. We investigate the maximum number of possible busy period starting points and analyze the DFP of \(\tau_{k,x}\) under each of these possible starting points individually.

Firstly, we observe that the probabilities associated with different starting points vary significantly. 
For a given possible busy period starting point, if \(\tau_{k,x}\) misses its deadline, the processor must remain busy from that point up to \(r_{k,x}+D_k\). It follows that the upper bound on the probability of a busy period of this length can be applied to bound \(\prob(X \wedge Y)\). This approach is effective for busy period starting points leading to a long busy period. However, for starting points close to \(r_{k,x}\), the corresponding busy periods tend to be short, and this method cannot provide a tight bound. Secondly, for such busy period starting points, we analyze the  interference from higher priority tasks in a busy period. We prove that (\ref{e:p-critical}) can serve as a valid upper bound of \(\prob(X \wedge Y)\) in this case. 

\subsection{Preparation}

We call a concrete value assignment of $(\C_{i,1}, \C_{i,2}, ...)$ an \emph{execution time pattern} of $\tau_i$. The combination of all tasks' execution time patterns is called an execution time pattern of the task set, or simply an \emph{execution time pattern} for short. We use \(\Omega\) to denote the set (i.e., the entire \emph{sample space} \cite{prob}) of all possible execution time patterns and each \(\omega \in \Omega\) is a particular execution time pattern. We use
$C_{j,y}^\omega$ to denote the execution time of 
job $\tau_{j, y}$ under  $\omega$. Note that $C_{j,y}^\omega$ is 
a fixed value but \emph{not} a random variable since the execution time of each job is fixed in a particular execution time pattern $\omega$. 

Our target is to analyze the \emph{worst-case} release time pattern among all possible release time patterns, where 
the deadline failure probability of the analyzed job 
$\tau_{k,x}$ is the highest.
For a particular release time pattern, the deadline failure probability
is obtained over the entire sample space of execution time patterns complying with the execution time distribution characterized by the pWCET of each task.
Formally, we use $\textrm{DFP}_{k,x}^{\xi}$ to denote the deadline failure probability of 
$\tau_{k,x}$ under a particular release time pattern $\xi$:
\(
\textrm{DFP}_{{k,x}}^\xi = \prob(\Omega_{\xi\textrm{-miss}}) = \sum_{\omega \in \Omega_{\xi\textrm{-miss}}}\prob(\omega)\),
where  $\Omega_{\xi\textrm{-miss}} \subseteq \Omega$ denotes the set of execution time patterns such that $\tau_{k,x}$ misses its deadline.

We aim to upper-bound the largest ${\textrm{DFP}_{k,x}^\xi}$ among all $\xi \in \Xi$.
In the following analysis, we focus on an \emph{arbitrary} $\xi$. 
Since $\xi$ is arbitrarily chosen, the obtained upper bound of ${\textrm{DFP}_{k,x}^\xi}$ is a safe upper bound of the deadline failure probability of
$\tau_{k,x}$ under any release time pattern. Moreover, since $\tau_{k,x}$ is an arbitrarily chosen job of $\tau_k$, the obtained upper bound of ${\textrm{DFP}_{k,x}^\xi}$ also serves as a safe upper bound of $\textrm{WCDFP}_k$.

Now we first define the starting points of busy period in the following iterative way \cite{CJJ}.
Let $t_k = r_{k,x}$ and
starting from \(i=k-1\),  \(t_i\) is defined as
\begin{equation}
    t_i := \min (t_{i+1},r_{i,\phi_i} )
\end{equation}
where \(\tau_{i, \phi_i}\) is the first job of \(\tau_i\) that is executed after \(t_{i+1}\).  
These time points have the following useful properties:
\begin{lemma}[Theorem 14 in\cite{CJJ}]
\label{3.2}
During \([t_1,t_i)\), the processor is busy executing jobs of tasks in \(hp(\tau_i)\).
\end{lemma}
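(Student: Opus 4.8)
The plan is to prove the statement by induction on $i$, running from $i=1$ up to $i=k$, with the invariant: throughout $[t_1,t_i)$ the processor is never idle and executes only jobs of tasks in $hp(\tau_i)$. Two things are worth fixing first. (i) Everything is relative to a fixed execution-time pattern $\omega$, so each $t_i$ is a concrete number, not a random variable, and the statement is a purely deterministic fact about the schedule induced by $\omega$. (ii) The points satisfy $t_1\le t_2\le\cdots\le t_k$ because $t_i=\min(t_{i+1},r_{i,\phi_i})\le t_{i+1}$, so the intervals $[t_j,t_{j+1})$ are consecutive and their union up to $i$ is exactly $[t_1,t_i)$. The base case $i=1$ is immediate since $[t_1,t_1)=\emptyset$ and $hp(\tau_1)=\emptyset$ impose no constraint.

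For the inductive step, assume the invariant for $i$ and consider $i+1$. I would split $[t_1,t_{i+1})=[t_1,t_i)\cup[t_i,t_{i+1})$; on $[t_1,t_i)$ the hypothesis already gives that only jobs of $hp(\tau_i)\subseteq hp(\tau_{i+1})$ run, so it remains to treat the new segment $[t_i,t_{i+1})$. If $t_i=t_{i+1}$ this segment is empty and there is nothing to prove; this also absorbs the degenerate case where no job of $\tau_i$ executes at or after $t_{i+1}$, in which one takes $t_i:=t_{i+1}$. Otherwise $t_i<t_{i+1}$, so the minimum defining $t_i$ is attained at $r_{i,\phi_i}$ and $t_i=r_{i,\phi_i}<t_{i+1}$. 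The heart of the argument is then to show that the job $\tau_{i,\phi_i}$ is \emph{pending} at every instant $s\in[t_i,t_{i+1})$: it is released at $r_{i,\phi_i}=t_i\le s$; it is not complete by time $s$, since by definition it executes at some instant $u\ge t_{i+1}>s$ and hence still has unfinished work at $s$; and it is not yet aborted by time $s$, since under the abort-on-miss rule a job never executes at or beyond its deadline, forcing $d_{i,\phi_i}=r_{i,\phi_i}+D_i>u\ge t_{i+1}>s$. Having a pending job of priority equal to that of $\tau_i$ at every such $s$, work-conservation together with highest-priority-first scheduling forces the processor to run, at each $s$ in the segment, a job whose task index is at most $i$, i.e., a job of $hp(\tau_{i+1})=\{\tau_1,\dots,\tau_i\}$. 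Combining the two segments closes the induction, and the instance $i=k$ gives the stated ``level-$k$ busy period'' conclusion.

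The step I expect to be the real pitfall is exactly this inductive step, and the natural wrong move is to try to prove that $\tau_i$ \emph{itself} never runs during $[t_i,t_{i+1})$. The definition of $\phi_i$ only says that $\tau_{i,\phi_i}$ is the first job of $\tau_i$ with \emph{some} execution at or after $t_{i+1}$; that job may straddle $t_{i+1}$ and in fact run inside $[t_i,t_{i+1})$, which is precisely why the target set for the new segment must be $hp(\tau_{i+1})$ rather than $hp(\tau_i)$, and why the argument should establish ``the running job has index $\le i$'' rather than ``$\tau_i$ is excluded''. The only structural fact used beyond work-conserving highest-priority-first scheduling is the abort-on-miss convention, which is what turns ``$\tau_{i,\phi_i}$ executes at some $u\ge t_{i+1}$'' into ``$d_{i,\phi_i}>t_{i+1}$'' --- the one quantitative fact on which the whole argument rests --- and no probabilistic reasoning enters at all, since $t_1,\dots,t_k$ and the entire schedule are determined by the fixed pattern $\omega$.
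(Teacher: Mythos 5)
Your proof is correct and follows essentially the same route as the paper's: an induction over the consecutive segments $[t_j,t_{j+1})$, where in each non-empty segment the job of $\tau_j$ released at $t_j=r_{j,\phi_j}$ is still pending (it executes at or after $t_{j+1}$), so the fixed-priority work-conserving scheduler can only run jobs of tasks with index at most $j$. The only difference is cosmetic (you induct upward and spell out the pending/abort-on-miss details that the paper leaves implicit).
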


\begin{proof}
    By definition, \(  t_{i-1} := \min (t_{i},r_{i-1,\phi_{i-1}} )\) where \(\tau_{i-1, \phi_{i-1}}\) is the first job of \(\tau_{i-1}\) executed after \(t_{i}\), so \(t_{i-1}<t_{i}\) or \(t_{i-1}=t_i\). 
    If  \(t_{i-1}<t_i\), a job of \(\tau_{i-1}\) is released at \(t_{i-1}\) and it is not finished before \(t_i\) (because it is the first job of \(\tau_{i-1}\) that executed after \(t_i\)) which means that during \([t_{i-1},t_i)\), the processor is busy executing jobs of \(hp(\tau_i)\). Otherwise \(t_{i-1}=t_i\) and  \([t_{i-1},t_i)=\emptyset\).
Therefore, by induction, during \([t_{1},t_i)\), the processor is busy executing jobs of \(hp(\tau_i)\). 
\end{proof}

\begin{lemma}[Theorem 14 in\cite{CJJ}]
\label{3.4}
All jobs of tasks in $hp(\tau_k)$ released before \(t_1\) do not execute after $t_1$. 
\end{lemma}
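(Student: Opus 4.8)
The plan is to fix an arbitrary task $\tau_i \in hp(\tau_k)$ (so $1 \le i < k$) together with an arbitrary job $\tau_{i,y}$ of $\tau_i$ whose release satisfies $r_{i,y} < t_1$, and to show that $\tau_{i,y}$ performs no execution at any time $\ge t_1$. Since $\tau_i$ and $\tau_{i,y}$ are arbitrary, the lemma follows.

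I would first gather some elementary facts. The sequence $(t_1,\ldots,t_k)$ is non-decreasing; this is exactly what is established inside the proof of Lemma~\ref{3.2}, where $t_{j-1}\le t_j$ is shown for every $j$. Hence $t_1 \le t_i \le t_{i+1}$, and from the defining relation $t_i := \min(t_{i+1}, r_{i,\phi_i})$ we also get $t_i \le r_{i,\phi_i}$. Chaining these, $r_{i,y} < t_1 \le t_i \le r_{i,\phi_i}$; since a task's release times are strictly increasing ($r_{i,z+1} \ge r_{i,z}+T_i > r_{i,z}$), this forces $y < \phi_i$, so $r_{i,y+1} \le r_{i,\phi_i}$. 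Finally, because deadlines are constrained and a job is aborted as soon as it misses its deadline, every job $\tau_{i,z}$ has left the system by its absolute deadline $d_{i,z}=r_{i,z}+D_i \le r_{i,z}+T_i \le r_{i,z+1}$; in particular the jobs of $\tau_i$ occupy disjoint time intervals in index order, and $\tau_{i,y}$ does no execution at or after $d_{i,y} \le r_{i,y+1} \le r_{i,\phi_i}$.

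Next I would partition $[t_1,\infty)$ into $[t_1,t_i)$, $[t_i,t_{i+1})$, and $[t_{i+1},\infty)$ and rule out any execution of $\tau_{i,y}$ on each piece. On $[t_1,t_i)$: Lemma~\ref{3.2} says the processor runs only jobs of tasks in $hp(\tau_i)$ there, and $\tau_i \notin hp(\tau_i)$. On $[t_{i+1},\infty)$: since $r_{i,y} < t_1 \le t_{i+1}$, job $\tau_{i,y}$ is released before $t_{i+1}$, and because $y < \phi_i$ while $\tau_{i,\phi_i}$ is by definition the \emph{first} job of $\tau_i$ executed after $t_{i+1}$ (and same-task jobs execute in index order), any execution of $\tau_{i,y}$ after $t_{i+1}$ would contradict that definition. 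On $[t_i,t_{i+1})$: this interval is empty when $t_i = t_{i+1}$, and when $t_i < t_{i+1}$ the minimum defining $t_i$ must be attained at $r_{i,\phi_i}$, so $t_i = r_{i,\phi_i} \ge d_{i,y}$, whence $\tau_{i,y}$, having already left the system by $d_{i,y}$, does no execution at or after $t_i$.

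I expect the middle interval $[t_i,t_{i+1})$ to be the main obstacle: on that sub-interval neither Lemma~\ref{3.2} (which only controls $[t_1,t_i)$) nor the ``first job executed after $t_{i+1}$'' property of $\phi_i$ applies, so one is forced back on the constrained-deadline and abort-on-miss assumptions to certify that a job of $\tau_i$ released before $t_1$ has already finished by $t_i$. A secondary point needing care is the justification that jobs of one task execute in index order (so that ``first job executed after $t_{i+1}$'' really does exclude every lower-indexed job), which again rests on $D_i \le T_i$ and the abort-on-miss rule.
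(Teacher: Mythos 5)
Your proof is correct and takes essentially the same route as the paper's: the paper argues by contradiction, using Lemma~\ref{3.2} to push any execution of the offending job past $t_i$ and then splitting on $t_i=t_{i+1}$ (contradicting the minimality in the definition of $t_i$ via $\phi_i$) versus $t_i<t_{i+1}$ (contradicting $D_i\le T_i$ together with the abort-on-miss rule), which are exactly your three intervals handled directly instead of by contradiction. No gap; your explicit remarks on in-order execution of same-task jobs and on the chain $r_{i,y}<t_1\le t_i\le r_{i,\phi_i}$ merely spell out details the paper leaves implicit.
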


\begin{proof}
    We assume that there is a job \(\tau_{i,j}\) of \(\tau_i\) in \(hp(\tau_k)\) released before \(t_1\) at \(r_{i,j}\) and executed after \(t_1\).
   Due to Lemma \ref{3.2}, during \([t_{1},t_i)\), the processor is busy executing jobs of \(hp(\tau_i)\).  Therefore, \(\tau_{i,j}\) is finished after \(t_i\). 
   Firstly, if \(t_i=t_{i+1}\), \(\tau_{i,j}\) is a job released before \(t_{i+1}\) and executed after \(t_{i+1}\) which is contrary to the definition of \(t_i\).
   Secondly, if \(t_i<t_{i+1}\), there is a job of \(\tau_i\) released at \(t_i\). Hence 
   \(r_{i,j}+D_i>t_i\geq r_{i,j}+T_i\) which contradicts with that \(\tau_i\) has a constrained deadline.
\end{proof}

\begin{lemma}[Theorem 14 in\cite{CJJ}]
\label{3.5}
\(t_i+D_i>t_{i+1}\). 
\end{lemma}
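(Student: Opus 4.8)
The plan is to do a two-case analysis driven entirely by the definition of $t_i$ together with the standing assumption that a job is aborted the instant it misses its deadline; I do not expect to need the earlier Lemmas \ref{3.2} or \ref{3.4} for this one.

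First, if $t_i = t_{i+1}$, the claim is immediate, since $D_i > 0$ in the discrete-time model gives $t_i + D_i = t_{i+1} + D_i > t_{i+1}$. This case also absorbs the degenerate situation in which no job of $\tau_i$ executes after $t_{i+1}$, so that $\tau_{i,\phi_i}$ is vacuous and $t_i$ collapses to $t_{i+1}$ through the $\min$ in its definition.

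The substantive case is $t_i < t_{i+1}$. Then $t_i = \min(t_{i+1}, r_{i,\phi_i}) < t_{i+1}$ forces $t_i = r_{i,\phi_i}$, i.e.\ $t_i$ is the release time of the job $\tau_{i,\phi_i}$, which by construction executes during some slot $[s,s+1)$ with $s \ge t_{i+1}$. I would then invoke the abort-on-miss assumption: a job can occupy the processor in slot $[s,s+1)$ only if it has not already been aborted there, that is, only if $s < d_{i,\phi_i} = r_{i,\phi_i} + D_i$. Chaining this with $s \ge t_{i+1}$ gives $t_{i+1} \le s < r_{i,\phi_i} + D_i = t_i + D_i$, which is exactly the claim.

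The hard part, such as it is, is not a real obstacle: it is just being careful about the discrete-time reading of ``executes after $t_{i+1}$'' (namely ``executes in some slot $[s,s+1)$ with $s \ge t_{i+1}$'') and noting that, because jobs are aborted on a deadline miss, the mere existence of such an executing slot certifies that $\tau_{i,\phi_i}$'s absolute deadline lies strictly past $t_{i+1}$. Everything else is a one-line inequality chain.
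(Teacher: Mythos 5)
Your proposal is correct and follows essentially the same two-case argument as the paper: the case $t_i = t_{i+1}$ is immediate, and when $t_i < t_{i+1}$ both proofs use that a job of $\tau_i$ released at $t_i$ executes after $t_{i+1}$ together with the abort-on-deadline-miss assumption to force $t_i + D_i > t_{i+1}$. Your version merely spells out the discrete-time slot reasoning that the paper leaves implicit.
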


\begin{proof}
    \(t_i=t_{i+1}\) or \(t_i<t_{i+1}\). If \(t_i=t_{i+1}\), \(t_i+D_i>t_{i+1}\) satisfies. If \(t_i<t_{i+1}\), there exists a job of \(\tau_i\) released at \(t_i\) and executed after \(t_{i+1}\). By the assumption that a job is aborted after its deadline is missed, we have \(t_i+D_i>t_{i+1}\).
    \end{proof}

These three properties have been justified in \cite{CJJ}. Nevertheless, we also provide their proofs here to be self-contained.
 By Lemma \ref{3.2} and \ref{3.4}, $t_1$ is essentially the starting point of the busy period. The workload released before the busy period does not need to be counted in the analysis of the busy period.

The actual value of $t_1$ is non-deterministic (even if we are currently focusing on a particular release time pattern) since it depends on the execution time pattern. 
Fig. \ref{fig:3.1} shows an example of different possible values of $t_1$, where both $\tau_1$ and $\tau_2$ have two possible execution times (long or short). Under a given release time pattern, their first jobs have $4$ possible combinations, leading to $3$ different values of $t_1$.
\begin{figure}   
  \centering          
  {
      \includegraphics[width=0.325
      \textwidth,height=3.5cm]{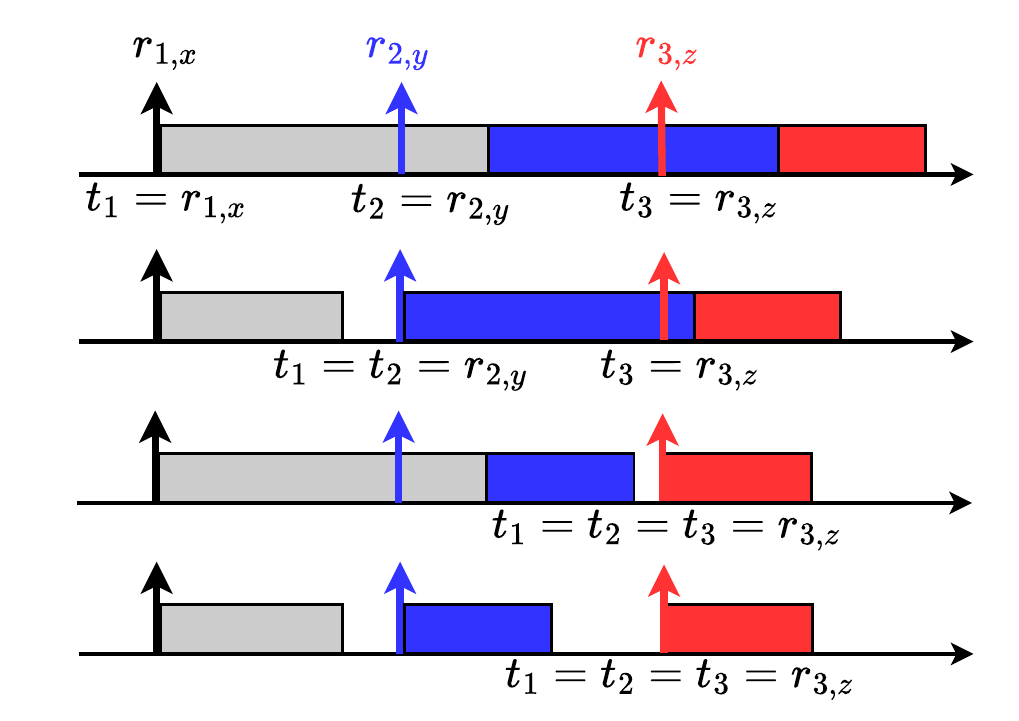}
  }
  \caption{An example demonstrating different values of \(t_1\)} 
  \label{fig:3.1}           
\end{figure}
Although \(t_1\) is non-deterministic, the number of values of \(t_1\) is not infinite, it depends on the period and deadline of the tasks in \(hp(\tau_k)\). In the next subsection, we discuss the maximal number of different values of \(t_1\) under any release time pattern \(\xi\).

\subsection{Bounding the number of different \(t_1\)}

According to the definition of \(t_1\), \(t_1\) is the release time of a job of \(\tau_i\) (\(1\leq i \leq k\)).
\begin{theorem}
\label{number}
    For any release time pattern \(\xi\) and any task \(\tau_i,\ i\in [1,k-1]\), there are at most \(\lceil \frac{\sum_{j=k-1}^i D_j}{T_i}\rceil\)  different values of \(t_1\) such that \(t_1\) is the release time of a job of  \(\tau_i\).
\end{theorem}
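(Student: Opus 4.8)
The plan is to reduce the claim to counting release times of $\tau_i$ inside a single short window. I would first show that any value of $t_1$ that is a release time of a job of $\tau_i$ must in fact equal $t_i$: by the iterative definition, ``$t_1$ is the release time of a job of $\tau_i$'' means $t_1 = r_{i,\phi_i}$, and since $t_i = \min(t_{i+1}, r_{i,\phi_i}) \le r_{i,\phi_i} = t_1$ while the nested minima give $t_1 \le t_2 \le \dots \le t_i$, we get $t_1 = t_i$. So it suffices to bound how many distinct values $t_i$ can take, over all execution time patterns, that are release times of $\tau_i$.

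Next I would localize $t_i$. Chaining Lemma~\ref{3.5} over the indices $i, i+1, \dots, k-1$ gives $t_i > t_{i+1} - D_i > t_{i+2} - D_{i+1} - D_i > \dots > t_k - \sum_{j=i}^{k-1} D_j$, and the same nested minima give $t_i \le t_{i+1} \le \dots \le t_k$. Since $t_k = r_{k,x}$ is fixed once $\xi$ and $x$ are fixed, every admissible value of $t_i$ (hence every value of $t_1$ that is a release time of $\tau_i$) lies in the half-open interval $\bigl(t_k - L,\ t_k\bigr]$ with $L := \sum_{j=k-1}^i D_j$, and this interval is the same for all execution time patterns.

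It then remains to count release times of $\tau_i$ in this interval. Consecutive release times of $\tau_i$ differ by at least $T_i$ (the minimum inter-arrival time), so if $m$ of them lay in an interval of length $L$ that is open on the left, the span between the earliest and the latest would be at least $(m-1)T_i$ and strictly less than $L$; hence $(m-1)T_i < L$, i.e. $m \le \lceil L/T_i\rceil$. This yields the stated bound $\bigl\lceil \frac{\sum_{j=k-1}^i D_j}{T_i}\bigr\rceil$ on the number of values of $t_1$ that are release times of a job of $\tau_i$.

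The step I expect to be the most delicate is the localization of $t_i$: the lower endpoint requires chaining the strict inequality of Lemma~\ref{3.5} exactly $k-i$ times, so an off-by-one in the index range would spoil the window length and hence the ceiling; it is also essential that the window be open on the left, since that is precisely what makes the final count $\lceil L/T_i\rceil$ rather than $\lfloor L/T_i\rfloor + 1$. The implication ``$t_1$ a release time of $\tau_i \Rightarrow t_1 = t_i$'' is the other point worth spelling out carefully, as it is what ties the a priori uncontrolled quantity $t_1$ to the tightly bounded $t_i$.
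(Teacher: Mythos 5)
Your proof is correct and takes essentially the same route as the paper's: reduce ``$t_1$ is a release of $\tau_i$'' to $t_1=t_i$, chain Lemma~\ref{3.5} to confine $t_i$ to a window of length $\sum_{j=k-1}^{i} D_j$ ending at $r_{k,x}$, and count releases of $\tau_i$ spaced at least $T_i$ apart in that window. Your explicit handling of the left-open endpoint (which is what yields the ceiling rather than $\lfloor L/T_i\rfloor+1$) is in fact stated more carefully than in the paper, which loosely writes the window as a closed interval.
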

\begin{proof}
    By the definition of \(t_1\), if \(t_1\) is the release times of jobs of \(\tau_i\), \(t_1=t_i\). By Lemma \ref{3.5}, \(t_i+D_i>t_{i+1}\), the lower bound of \(t_i\) is \(r^\xi_{k,x}-\sum_{j=k-1}^i D_j\), There are at most \(\lceil \frac{\sum_{j=k-1}^i D_j}{T_i}\rceil\) jobs released during \([r_{k,x}-\sum_{j=k-1}^i D_j,r_{k,x}]\) in release time pattern \(\xi\). Therefore, when \(t_1\) are the release times of jobs of \(\tau_i\), the maximum number of different values of \(t_1\) is \(\lceil \frac{\sum_{j=k-1}^i D_j}{T_i}\rceil\).
\end{proof}
We use \(n_i\) to denote the maximal number of different values of \(t_1\) such that \(t_1\) is the release time of a job of  \(\tau_i\) i.e. \(n_i=\lceil \frac{\sum_{j=k-1}^i D_j}{T_i}\rceil\). For \(\tau_k\), \(t_1\) has only one possible value \(r_{k,x}\) and we let \(n_k=1\).
We use \(N\) to denote the maximal number of different values of \(t_1\) for any release time pattern: 
\[ N=\sum_{i=1}^{k}n_i=\sum_{i=1}^{k-1}\lceil \frac{\sum_{j=k-1}^i D_j}{T_i}\rceil+1\]
For release time pattern \(\xi\), we let \(\eta^i_1,...\eta^i_{n_i}\) be the \(n_i\) possible values of \(t_1\) when  \(t_1\) are the release times of jobs of \(\tau_i\). They are in descending order. i.e. \(\eta^i_j\geq\eta^i_{j+1}, \forall j\in [1,n_i-1]\). We can divide \(\Omega_{\xi\textrm{-miss}}\) associated with a particular value $\eta^i_j$ of $t_1$:
\begin{equation*}
\Aset^i_j = 
\{\omega\in \Omega_{\xi\textrm{-miss}}~|~t_1=\eta^i_j\}, ~~ 1\leq i\leq k, 1\leq j \leq n_i
\end{equation*}
The number of \(\Aset_j^i\) is \(N\). For a particular release time pattern, the number of distinct \(t_1\) may be less than \(N\) because under \(\xi\), a \(\tau_i\) may release fewer than \(n_i\) jobs, or two jobs from different tasks may release at same time. However, for any release pattern, \(N\) is the upper bound on the number of \(t_1\) . 
The union of all $\Aset^{i}_j$ covers the total possible execution time patterns under which the analyzed job $\tau_{k,x}$ misses deadline, i.e.,
\[
\bigcup_{1\leq i \leq k}\bigcup_{1\leq j \leq n_i}\!\!\!\Aset^i_{j}   =  \Omega_{\xi\textrm{-miss}}
\]

Recall the discussion of DFP in last section, \(\prob(\Aset_j^i)\) is \(\prob(X \wedge Y)\) for the case that busy period starts at \(\eta_j^i\). Our ultimate goal is to derive an upper bound for
\begin{equation}
\label{DFP}
     \textrm{DFP}_{k,x}^\xi \leq
\sum_{i=1}^k\sum_{j=1}^{n_i}\prob(\Aset_j^i)
\end{equation}
Based on this, for any \(\Aset_j^i\), we need to calculate (an upper bound of ) its probability. 
Although there may be many \(\Aset_i^j\), the probability of some \(\Aset_j^i\) may be very small. 
Clearly, the smaller the \(t_1\) is, the longer the length of busy period starting from \(t_1\) will be. The probability of a long busy period occurring in the system is very small, possibly approaching 0. As shown in Fig. \ref{fig:3.1}, only when \(\tau_{1,x}\) and \(\tau_{2,y}\) both execute long times, the busy period will start from \(r_{1,x}\).
Based on this intuition, in the next subsection, we derive an upper bound for \(\prob(\Aset_j^i)\) by examining the probability that the busy period starts from \(\eta_j^i\). 
\subsection{Bounding $\prob(\Aset_j^i)$ Based On The Length Of Busy Period}

In this section, we propose a method to derive an upper bound of \(\prob(\Aset_j^i)\). The core of the method lies in the fact that different \(\eta_j^i\) leads to varying length of the busy period starting from \(\eta_j^i\), resulting in different probabilities. If \(t_1=\eta_j^i\) and \(\tau_{k,x}\) misses its deadline, the processor will be busy executing \(\tau_{k,x}\) and jobs in \(hp(\tau_k)\) during \([\eta_j^i,r_{k,x}+D_k]\). The length of busy period is longer than \(r_{k,x}+D_k-\eta_j^i\). Moreover, we find that when all tasks release their jobs simultaneously, the probability of the length of busy period exceeding \(r_{k,x}+D_k-\eta_j^i\) is maximized.

We define \( \textrm{BUSY}(t_x,t_y)\) to represent the set of \(\omega\in \Omega\) such that busy period that starts at \(t_x\) and ends after \(t_y\).
\begin{align*}
 \textrm{BUSY}(t_x,t_y)=\{\omega\in\Omega| 
 &\textrm{A busy period starts at } t_x
 \notag
 \\&\textrm{ and ends after } t_y\  \textrm{under }\omega\}
 \end{align*}
 \begin{lemma}
     \label{B1}
     For any \(1\leq i \leq k\) and \(1\leq j \leq n_i\), 
     \begin{align*}
       \prob(\Aset^i_j) \leq \prob(\mathrm{BUSY}(\eta^i_j,r_{k,x}+D_k))  
     \end{align*}
 \end{lemma}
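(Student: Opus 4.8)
The plan is to prove the set inclusion $\Aset^i_j \subseteq \mathrm{BUSY}(\eta^i_j, r_{k,x}+D_k)$ and then finish by monotonicity of $\prob(\cdot)$. So I would fix an arbitrary execution time pattern $\omega \in \Aset^i_j$; by definition this means that, under $\omega$, we have $t_1 = \eta^i_j$ and $\tau_{k,x}$ misses its deadline. The goal is to show that, under $\omega$, there is a busy period that starts at $\eta^i_j$ and has not ended by $r_{k,x}+D_k$ --- equivalently, that the processor is continuously occupied throughout $[\eta^i_j, r_{k,x}+D_k)$ by the analyzed job or by jobs of $hp(\tau_k)$, all of which are released no earlier than $\eta^i_j$.

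First I would cover the interval $[\eta^i_j, r_{k,x})$. Instantiating Lemma \ref{3.2} with $i = k$ and recalling $t_k = r_{k,x}$, the processor is busy executing jobs of $hp(\tau_k)$ throughout $[t_1, r_{k,x}) = [\eta^i_j, r_{k,x})$. By Lemma \ref{3.4}, every job of a task in $hp(\tau_k)$ released before $t_1$ has finished by $t_1$, so the workload executed during $[\eta^i_j, r_{k,x})$ is released no earlier than $\eta^i_j$; combined with the fact that $\eta^i_j$ is itself the release time of a job (of $\tau_i$), this confirms that a busy period genuinely begins at $\eta^i_j$ rather than being a continuation of earlier work.

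Next I would extend this busy interval across $r_{k,x}$ up to $r_{k,x}+D_k$. Since $\omega \in \Omega_{\xi\textrm{-miss}}$, job $\tau_{k,x}$ does not complete by its absolute deadline $r_{k,x}+D_k$. Suppose for contradiction that the processor were idle at some instant $s \in [r_{k,x}, r_{k,x}+D_k)$: then $\tau_{k,x}$ would be ready at $s$ (released since $r_{k,x}\le s$, not completed since it misses its deadline, and not yet aborted since $s < r_{k,x}+D_k$), so an idle processor would be assigned to execute $\tau_{k,x}$ or a ready higher-priority job --- either way contradicting idleness. Hence the processor is continuously busy throughout $[r_{k,x}, r_{k,x}+D_k)$, executing $\tau_{k,x}$ or jobs of $hp(\tau_k)$. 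Splicing the two intervals yields a single busy period that starts at $\eta^i_j$ and has not ended by $r_{k,x}+D_k$, so $\omega \in \mathrm{BUSY}(\eta^i_j, r_{k,x}+D_k)$; this establishes the inclusion, and $\prob(\Aset^i_j) \le \prob(\mathrm{BUSY}(\eta^i_j, r_{k,x}+D_k))$ follows immediately.

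I expect the main obstacle to be the careful splicing argument: making precise that the two busy stretches $[\eta^i_j, r_{k,x})$ and $[r_{k,x}, r_{k,x}+D_k)$ really merge into one busy period in the sense used in the definition of $\mathrm{BUSY}$, in particular that no idle instant slips in around $r_{k,x}$ and that the period is correctly anchored at $\eta^i_j$ rather than at an earlier time --- which is exactly the role played by Lemma \ref{3.4}. Once those points are nailed down, the remainder is a direct application of the already-established properties of the points $t_i$.
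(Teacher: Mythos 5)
Your proposal is correct and follows essentially the same route as the paper: both establish the set inclusion $\Aset^i_j \subseteq \mathrm{BUSY}(\eta^i_j, r_{k,x}+D_k)$, using Lemma \ref{3.2} to show the processor is busy on $[\eta^i_j, r_{k,x})$, the deadline miss to rule out idleness on $[r_{k,x}, r_{k,x}+D_k)$, and the fact that no earlier work carries across $\eta^i_j$ (you via Lemma \ref{3.4}, the paper via the definition of $t_1$) to anchor the busy period there. The only difference is presentational — you argue directly while the paper argues by contradiction over three cases — so this is not a genuinely different approach.
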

\begin{proof}
For any \(\omega\in \Aset^i_j\), we have \(t_1=\eta^i_j\) and \(\tau_{k,x}\) misses its deadline.
If for all \(\omega\in\Aset^i_j\), we have \(\omega\in \textrm{BUSY}(\eta^i_j,r_{k,x}+D_k)\), the Lemma satisfies.
  
We prove this by contradiction. If \(\omega \notin \textrm{BUSY}(\eta^i_j,r_{k,x}+D_k)\), the processor exists at least one \textit{idle time} in \([\eta^i_j,r_{k,x}+D_k]\) or the busy period starts before \(\eta^i_j\).

Firstly, we assume that there exists an idle time in \([\eta^i_j,r_{k,x}]\). According to Lemma \ref{3.2}, it leads to a contradiction with \(t_1=\eta^i_j\). Secondly, we assume there exists an idle time in \([r_{k,x},r_{k,x}+D_k]\). It leads to a contradiction with \(\tau_{k,x}\) misses its deadline. Thirdly, if there exist a busy period starts before \(\eta^i_j\) and ends after \(r_{k,x}+D_k\), according to the definition of \(t_1\), we have \(t_1<\eta_j^i\), which leads to a contradiction with \(\omega\in \Aset^i_j\). Therefore, for any \(\omega\in \Aset^i_j\), \(\omega\in \textrm{BUSY}(\eta^i_j,r_{k,x}+D_k)\).
\end{proof} 
Next, we consider the busy period caused by all tasks released simultaneously. 
We define \(\xi^*\) as the release time pattern where the analyzed task \(\tau_k\) and all higher priority tasks simultaneously release their jobs initially and all higher priority tasks release their jobs at  their minimum possible intervals. i.e. periodic.

We use \(\textrm{BW}^{\xi^*}\) to denote the length of busy period under release time pattern \(\xi^*\). Since tasks' execution times are random variables, the length of busy period  \(\textrm{BW}^{\xi^*}\) in this release time pattern is also a random variable.  \(\prob(\textrm{BW}^{\xi^*}>t)\) represents the probability of the busy period length to be longer than \(t\).

\begin{definition}
We use $\beta_j$ to denote the index of 
\(\tau_j\)'s first job released after or equal to \(\eta_j^i\)
under the considered release time pattern (i.e., this job is denoted as $\tau_{j, \beta_j}$).
\end{definition}

\begin{lemma}
    \label{B2}
    \begin{flalign*}
        \prob(\mathrm{BUSY}(\eta^i_j,r_{k,x}+D_k))\leq \prob(\mathrm{BW}^{\xi^*}>r_{k,x}+D_k-\eta_j^i)
    \end{flalign*}

\end{lemma}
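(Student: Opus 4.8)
# Proof Proposal for Lemma \ref{B2}

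The plan is to prove Lemma \ref{B2} by a \emph{coupling} argument that maps every execution‑time pattern witnessing $\text{BUSY}(\eta^i_j,r_{k,x}+D_k)$ under $\xi$ to an (equiprobable) execution‑time pattern witnessing $\text{BW}^{\xi^*}>r_{k,x}+D_k-\eta^i_j$ under $\xi^*$. First I would shift the time origin so that $\eta^i_j$ becomes $0$ and write $L:=r_{k,x}+D_k-\eta^i_j$. By Lemmas \ref{3.2} and \ref{3.4} a busy period starting at $\eta^i_j$ carries in no workload released before $\eta^i_j$, so the event $\text{BUSY}(\eta^i_j,r_{k,x}+D_k)$ is determined solely by the execution times of the jobs of $hp(\tau_k)$ released at or after $\eta^i_j$ — namely $\tau_{j,\beta_j},\tau_{j,\beta_j+1},\dots$ for each $j<k$ — together with the execution time of the analyzed job $\tau_{k,x}$; and since $T_j\ge 1$ in the discrete model only finitely many of these are released in $[0,L]$, so the event depends on finitely many i.i.d.\ execution‑time draws.

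Next I would define the coupling map $\sigma$: send the execution time of the $(m{+}1)$‑st relevant job of $\tau_j$ under $\xi$ (that is, $\tau_{j,\beta_j+m}$, $m\ge 0$) to the $(m{+}1)$‑st job of $\tau_j$ under $\xi^*$, and the execution time of $\tau_{k,x}$ to $\tau_{k,1}$ under $\xi^*$. Because all jobs of a task are i.i.d.\ copies of its pWCET, $\sigma$ carries each cylinder event on the relevant execution times under $\xi$ to an equiprobable cylinder event under $\xi^*$, so it suffices to prove the \emph{monotonicity} claim: if $\omega\in\text{BUSY}(\eta^i_j,r_{k,x}+D_k)$ then the $\xi^*$‑pattern $\sigma(\omega)$ yields a busy period of length $>L$. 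The reason is that $\xi^*$ releases jobs \emph{no later} than $\xi$ relative to the shifted origin: under $\xi^*$ the $(m{+}1)$‑st job of $\tau_j$ arrives at $mT_j$, while under $\xi$ it arrives at $r_{j,\beta_j+m}-\eta^i_j\ge mT_j$, and $\tau_{k,1}$ arrives at $0\le r_{k,x}-\eta^i_j$. Hence, with the $\sigma$‑matched execution times, the cumulative released demand over $[0,t)$ under $\xi^*$ dominates that over $[\eta^i_j,\eta^i_j+t)$ under $\xi$ for every $t$; and by Lemma \ref{3.2} the latter strictly exceeds the elapsed time for all $t\in(0,L]$ (the processor is continuously busy on $[\eta^i_j,r_{k,x}+D_k]$ with exactly these jobs and still has work pending past $L$), so the busy period starting at $0$ under $\xi^*$ cannot terminate by $L$, i.e.\ $\text{BW}^{\xi^*}>L$. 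Summing the equiprobable cylinders over all relevant value assignments consistent with some $\omega\in\text{BUSY}(\eta^i_j,r_{k,x}+D_k)$ then gives $\prob(\text{BUSY}(\eta^i_j,r_{k,x}+D_k))\le\prob(\text{BW}^{\xi^*}>L)$.

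The main obstacle is making the workload‑domination step airtight in the presence of the abort‑on‑deadline‑miss rule: an aborted job frees the processor, so "more / earlier released demand" does not a priori imply "longer busy period." I expect to handle this either by comparing the \emph{pending‑work} processes rather than the released‑workload functions (showing the pending‑work curve under $\xi^*$ dominates that under $\xi$ on $(0,L]$, using that the extra or shifted jobs appearing under $\xi^*$ can only be aborted at or after the instants to which they contribute work), or — more cleanly — by taking $\text{BW}^{\xi^*}$ to be the classical workload‑based busy‑period length of $\xi^*$ (ignoring aborts), which over‑approximates the true busy period and hence keeps the bound safe while making the synchronous‑release domination the standard critical‑instant argument. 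A secondary point, the strict inequality $\text{BW}^{\xi^*}>L$ versus $\ge L$, follows from a routine discrete‑time bookkeeping: $\text{BUSY}(\eta^i_j,r_{k,x}+D_k)$ already requires the $\xi$‑busy period to extend \emph{strictly} past $r_{k,x}+D_k$, and the pointwise $\ge$‑domination preserves that strictness.
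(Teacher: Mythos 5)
Your proposal is correct and follows essentially the same route as the paper's proof: by Lemma \ref{3.4} the carry-in workload before $\eta_j^i$ is excluded, the relevant execution times are identified with i.i.d.\ copies indexed from the synchronous release, and the workload under $\xi$ is dominated pathwise by the workload under $\xi^*$ (jobs arrive no later under $\xi^*$), yielding $\prob(\mathrm{BUSY}(\eta^i_j,r_{k,x}+D_k))\leq \prob(\mathrm{BW}^{\xi^*}>r_{k,x}+D_k-\eta_j^i)$; your explicit coupling formulation is just a more formal phrasing of the paper's replacement of random variables by identically distributed copies. Your extra care about the abort-on-deadline-miss rule is a reasonable refinement the paper does not spell out, and your ``cleaner'' resolution (treating $\mathrm{BW}^{\xi^*}$ as the workload-based busy-period length) matches how the paper actually computes $\mathrm{BW}^{\xi^*}$.
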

\begin{proof}
    We use \(WL^\xi_j(\Delta)\) to denote the workload of \(\tau_j\) during \([\eta_j^i,\eta_j^i+\Delta)\) under release time pattern \(\xi\). If the busy period starts at \(\eta_j^i\) and ends after \(r_{k,x}+D_k\), we have for \(\forall \Delta \in (0,r_{k,x}+D_k-\eta_j^i]\), 
    $\C_{k,x}+\sum_{j<k}WL_j^\xi(\Delta)>\Delta$.
    
    According to Lemma \ref{3.4}, all jobs of \(\tau_j\in hp(\tau_k)\) released before \(\eta_j^i\) can not be executed after \(\eta_j^i\).
    Therefore, \(WL_j^\xi(\Delta)\) only includes the jobs of \(\tau_j\) released after \(\eta_j^i\) and 
    \[\forall \Delta \in (0,r_{k,x}+D_k-\eta_j^i] ~ ~~~~WL_j^\xi(\Delta)\leq \sum _{y=\beta_j}^{\beta_j+\lceil \frac{\Delta}{T_j}\rceil}C_{j,y}\]
    Since \(\C_{k,x}\) has the same probability distribution as \(\C_{k,1}\) (they are both independent copies of $\C_{k}$), we can replace \(\C_{k,x}\) by \(\C_{k,1}\). 
  Similarly,  each \(C_{j,y}\) is an independent copy of \(\C_j\), so\(\sum _{y=\beta_j}^{\beta_j+\lceil \frac{\Delta}{T_j}\rceil}C_{j,y}=\sum _{y=1}^{\lceil \frac{\Delta}{T_j}\rceil}C_{j,y}\)
 \[\]
 
    Considering the release time pattern \(\xi^*\) and the busy period  \(\textrm{BW}^{\xi^*}\), we use \(WL^{\xi^*}_j(\Delta)\) to denote the workload of \(\tau_j\) during \([0,\Delta)\) under release time pattern \(\xi^*\). Because both the analyzed task \(\tau_k\) and all higher priority tasks simultaneously release their jobs. Moreover, all higher priority tasks release periodically, we have
    \[\forall \Delta \in (0,r_{k,x}+D_k-\eta_j^i] ~ ~~~~WL_j^{\xi^*}(\Delta)= \sum _{y=1}^{\lceil \frac{\Delta}{T_j}\rceil}C_{j,y}\]
  Therefore, for \(\forall \Delta \in (0,r_{k,x}+D_k-\eta_j^i] \)
     \[\prob(\C_{k,1}+\sum_{j<k}WL_j^\xi(\Delta)>\Delta)\leq \prob(\C_{k,1}+\sum_{j<k}WL_j^{\xi^*}(\Delta)>\Delta)\]
  For 
  \[
\forall \Delta \in (0,r_{k,x}+D_k-\eta_j^i],~ \prob(\textrm{BUSY}(\eta^i_j,\eta^i_j+\Delta)\leq \prob(\textrm{BW}^{\xi^*}>\Delta)\]
\end{proof}

Since our analysis applies to any release time pattern, we do not know the exact value of \(\eta_j^i\) and \(r_{k,x}\). In addition, we cannot determine the exact length of the busy period starting at \(\eta_j^i\). However, the following lemma demonstrates that providing a lower bound for the length of busy period is helpful to upper bound the probability of the length of busy period. 
\begin{lemma}
   \label{B3}
    For any \(t'\!>\!t\!>\!0\), \(\prob(\mathrm{BW}^{\xi^*}>t')\leq \prob(\mathrm{BW}^{\xi^*}>t)\)
\end{lemma}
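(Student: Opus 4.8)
The plan is to recognise that Lemma~\ref{B3} is simply the monotonicity of the probability measure $\prob$ with respect to set inclusion, once $\mathrm{BW}^{\xi^*}$ is read correctly. Under the fixed release time pattern $\xi^*$ the analysed task and all higher priority tasks release jobs at time $0$, so the processor is busy from time $0$ onwards and there is a single, well-defined busy period that begins at $0$; for each execution time pattern $\omega\in\Omega$ its length $\mathrm{BW}^{\xi^*}(\omega)\in\mathbb{R}_{\ge 0}\cup\{\infty\}$ is determined by $\omega$. The only thing to be careful about is to treat $\mathrm{BW}^{\xi^*}$ as one random variable (a function of $\omega$), rather than as a family of quantities indexed by the threshold being tested; once that reading is fixed, everything else is immediate.

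With this in place the argument is pointwise. Fix $\omega\in\Omega$: if $\mathrm{BW}^{\xi^*}(\omega)>t'$ then, since $t'>t$, also $\mathrm{BW}^{\xi^*}(\omega)>t$. Hence, as subsets of $\Omega$,
\[
\{\omega\in\Omega \mid \mathrm{BW}^{\xi^*}(\omega)>t'\}\ \subseteq\ \{\omega\in\Omega \mid \mathrm{BW}^{\xi^*}(\omega)>t\},
\]
and applying monotonicity of $\prob$ (a subset receives no more mass than its superset) yields $\prob(\mathrm{BW}^{\xi^*}>t')\le\prob(\mathrm{BW}^{\xi^*}>t)$, which is exactly the claim. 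The case $\mathrm{BW}^{\xi^*}(\omega)=\infty$ is covered automatically, since then $\omega$ lies in both events.

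There is essentially no obstacle in the proof itself; after the reading of $\mathrm{BW}^{\xi^*}$ above it reduces to the elementary fact that $\{X>t'\}\subseteq\{X>t\}$ whenever $t'>t$. The reason the lemma is worth isolating is its downstream role: chaining Lemmas~\ref{B1} and \ref{B2} bounds $\prob(\Aset^i_j)$ by $\prob(\mathrm{BW}^{\xi^*}>r_{k,x}+D_k-\eta^i_j)$, whose right-hand side still involves the unknown quantities $r_{k,x}$ and $\eta^i_j$. Since $\eta^i_j=t_1\le t_k=r_{k,x}$ (indeed $\eta^i_j\le r_{k,x}-(j-1)T_i$, because distinct release times of $\tau_i$ are at least $T_i$ apart), one obtains analysis-independent lower bounds on $r_{k,x}+D_k-\eta^i_j$, and Lemma~\ref{B3} is precisely what allows substituting such a lower bound for the unknown threshold while keeping the resulting probability bound safe.
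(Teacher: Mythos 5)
Your proof is correct and follows essentially the same route as the paper: both observe that the event $\{\mathrm{BW}^{\xi^*}>t'\}$ is contained in $\{\mathrm{BW}^{\xi^*}>t\}$ when $t'>t$ and invoke monotonicity of the probability measure. Your write-up is merely a more explicit, pointwise-in-$\omega$ version of the paper's one-line argument, so there is nothing to fix.
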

\begin{proof}
    If \(\textrm{BW}^{\xi^*}>t'\) , we have for any \(0\leq\Delta\leq t'\), \(\textrm{BW}^{\xi^*}>\Delta\).
    Therefore, \(\prob(\textrm{BW}^{\xi^*}>t')\leq \prob(\textrm{BW}^{\xi^*}>t)\)
\end{proof}
According to the definition of \(\eta_j^i\), there are \(j\) jobs of \(\tau_i\) released in \([\eta_j^i,r_{k,x})\).
Although we do not know the exact value of \(\eta_j^i\), we know \(\eta_{j}^i-\eta^i_{j+1}\geq T_i\).  Therefore,
\begin{flalign*}
    & r_{k,x}+D_k-\eta_j^i\leq  (j-1)T_i+D_k
\end{flalign*}

\begin{lemma}
    \
  \label{B4}
   \(\prob(\mathrm{BUSY}(\eta^i_j,r_{k,x}\!\!+\!\!D_k])\!\!\leq \!\! \prob(\mathrm{BW}^{\xi^*}\!\!\!\geq\!\! (j\!-\!1)T_i\!+\!D_k)\)
   \end{lemma}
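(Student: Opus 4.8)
The plan is to chain together the three preceding lemmas, Lemma \ref{B1}, Lemma \ref{B2}, and Lemma \ref{B3}, together with the elementary arithmetic bound on $r_{k,x}+D_k-\eta_j^i$ that precedes the statement. First I would invoke Lemma \ref{B1} to get $\prob(\Aset_j^i) \le \prob(\mathrm{BUSY}(\eta_j^i, r_{k,x}+D_k))$ — wait, actually this lemma is about $\prob(\Aset_j^i)$; for the present statement we only need the right-hand chain, so I would start directly from $\prob(\mathrm{BUSY}(\eta_j^i, r_{k,x}+D_k))$. Applying Lemma \ref{B2} gives $\prob(\mathrm{BUSY}(\eta_j^i, r_{k,x}+D_k)) \le \prob(\mathrm{BW}^{\xi^*} > r_{k,x}+D_k-\eta_j^i)$.

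Next I would use the geometric observation recorded just before the lemma: since there are $j$ jobs of $\tau_i$ released in $[\eta_j^i, r_{k,x})$ and consecutive such release points are separated by at least $T_i$ (because $\eta_j^i - \eta_{j+1}^i \ge T_i$, and more directly because $\tau_i$ has minimum inter-arrival time $T_i$), the distance from $\eta_j^i$ up to $r_{k,x}$ is at least $(j-1)T_i$, hence $r_{k,x}+D_k - \eta_j^i \ge (j-1)T_i + D_k$. Then I would apply Lemma \ref{B3} with $t' = r_{k,x}+D_k-\eta_j^i$ and $t = (j-1)T_i + D_k$ (noting $t' \ge t$; the strict-inequality hypothesis $t' > t > 0$ in Lemma \ref{B3} is harmless since equality only makes the bound trivially tight, and $t>0$ holds because $D_k>0$) to conclude $\prob(\mathrm{BW}^{\xi^*} > r_{k,x}+D_k-\eta_j^i) \le \prob(\mathrm{BW}^{\xi^*} \ge (j-1)T_i + D_k)$. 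Composing the three inequalities yields the claim.

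There is essentially no obstacle here; the lemma is a bookkeeping corollary. The only point requiring mild care is reconciling the strict ``$>$'' in Lemma \ref{B3} with the non-strict ``$\ge$'' appearing in the statement of Lemma \ref{B4}: one should note that $\prob(\mathrm{BW}^{\xi^*} > t') \le \prob(\mathrm{BW}^{\xi^*} \ge t')$ trivially, and more importantly that when $r_{k,x}+D_k-\eta_j^i = (j-1)T_i+D_k$ exactly, monotonicity of the busy-period tail still gives $\prob(\mathrm{BW}^{\xi^*} > r_{k,x}+D_k-\eta_j^i) \le \prob(\mathrm{BW}^{\xi^*} \ge (j-1)T_i+D_k)$ since the event $\{\mathrm{BW}^{\xi^*} > t\}$ is contained in $\{\mathrm{BW}^{\xi^*} \ge t\}$. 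So the proof is: apply Lemma \ref{B2}, substitute the lower bound $r_{k,x}+D_k-\eta_j^i \ge (j-1)T_i+D_k$, apply Lemma \ref{B3} (extended to the non-strict case as just noted), and combine.
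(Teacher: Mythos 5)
Your proposal is correct and follows essentially the same route as the paper: apply Lemma~\ref{B2}, use the fact that the $j$ releases of $\tau_i$ in $[\eta_j^i,r_{k,x})$, spaced at least $T_i$ apart, give $r_{k,x}+D_k-\eta_j^i\geq (j-1)T_i+D_k$, and then apply Lemma~\ref{B3}. Your extra care in reconciling the strict inequality of Lemma~\ref{B3} with the non-strict ``$\geq$'' in the statement (and the possible equality case) is a minor point the paper glosses over, but it does not change the argument.
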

\begin{proof}
By Lemma \ref{B2},
\begin{flalign*}
    \mathbb{P}(\mathrm{BUSY}(\eta^i_j,r_{k,x}+D_k)\leq \prob(\mathrm{BW}^{\xi^*}>r_{k,x}+D_k-\eta_j^i)
\end{flalign*}
According to the definition \(\eta_j^i\), 
there are \(j\) jobs of \(\tau_i\) released in \([\eta_j^i,r_{k,x})\).
Therefore, \(r_{k,x}+D_k-\eta^i_j\geq (j-1)T_i+D_k\). By Lemma \ref{B3}, we have
\begin{align*}
    \prob(\textrm{BW}^{\xi^*}>r_{k,x}+D_k-\eta^i_j)\leq \prob(\textrm{BW}^{\xi^*}>(j-1)T_i+D_k)
\end{align*}
\end{proof}

\begin{theorem}\label{bound1}
    For any \(1\leq i \leq k\) and \(1\leq j \leq n_i\), 
    \[\prob(\Aset^i_j) \leq \prob(\mathrm{BW}^{\xi^*}\geq (j-1)T_i+D_k)\]
   
\end{theorem}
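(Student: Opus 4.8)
The plan is to obtain Theorem~\ref{bound1} almost for free, as the transitive composition of the lemmas already proved in this subsection. First I would apply Lemma~\ref{B1} to pass from the joint event $\Aset^i_j$ (``$t_1=\eta^i_j$ \emph{and} $\tau_{k,x}$ misses its deadline'') to the purely timing-based event that a busy period beginning at $\eta^i_j$ is still running at $r_{k,x}+D_k$, giving $\prob(\Aset^i_j)\le \prob(\mathrm{BUSY}(\eta^i_j,r_{k,x}+D_k))$. Then I would apply Lemma~\ref{B4}, which already packages the two genuinely substantive steps: (i) the workload comparison on the interval $[\eta^i_j,r_{k,x}+D_k]$ between the arbitrary pattern $\xi$ and the synchronous periodic pattern $\xi^*$ (Lemma~\ref{B2}, where Lemma~\ref{3.4} is used to discard carry-in released before $\eta^i_j$), and (ii) the monotonicity of the busy-period-length tail (Lemma~\ref{B3}) combined with the spacing bound $r_{k,x}+D_k-\eta^i_j \ge (j-1)T_i+D_k$. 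Chaining the two inequalities by transitivity yields exactly $\prob(\Aset^i_j)\le \prob(\mathrm{BW}^{\xi^*}\ge (j-1)T_i+D_k)$, as claimed.

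The spacing bound is the only point that deserves an explicit sentence of justification: by the descending-order labelling $\eta^i_1\ge\eta^i_2\ge\cdots$, having $t_1$ equal to the $j$-th such value means there are $j$ releases of $\tau_i$ in $[\eta^i_j,r_{k,x})$, consecutive ones at least $T_i$ apart, so the last of them lies at least at $\eta^i_j+(j-1)T_i$ and still strictly below $r_{k,x}$; hence $r_{k,x}-\eta^i_j > (j-1)T_i$, from which the non-strict inequality required by Lemma~\ref{B4} follows a fortiori. I would also sanity-check the degenerate cases: $i=k$ (then $n_k=1$, $j=1$, and the bound reads $\prob(\mathrm{BW}^{\xi^*}\ge D_k)$) and $j=1$ for general $i$ (the same bound), confirming the formula behaves sensibly at the boundary.

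I do not expect any real obstacle here, since the hard analytical content already lives in Lemmas~\ref{B1}--\ref{B4}; the theorem's proof is purely their composition. The one thing to be careful about is keeping the $\ge$ versus $>$ bookkeeping consistent between Lemmas~\ref{B3}, \ref{B4} and the theorem statement, so that the final quantity is genuinely an over-estimate of $\prob(\Aset^i_j)$ and the bound remains safe.
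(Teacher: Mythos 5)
Your proposal is correct and matches the paper's own proof, which is exactly the composition of Lemma~\ref{B1} and Lemma~\ref{B4} (the spacing bound $r_{k,x}+D_k-\eta^i_j\geq (j-1)T_i+D_k$ you spell out is the content already inside the paper's proof of Lemma~\ref{B4}). Your extra care about the strict-versus-non-strict inequality only makes the stated bound slightly looser, hence still safe.
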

\begin{proof}
    Proved by combining Lemma \ref{B1} and \ref{B4}.
\end{proof}
Since \(\xi^*\) is a deterministic release time pattern, we can accurately calculate the distribution of \(\textrm{BW}^{\xi^*}\). We use the recursion method proposed in \cite{axer2013,diaz2002} to derive the distribution of \(\textrm{BW}^{\xi^*}\).

This method involves two operations on random variables: convolution and coalescion.

\begin{definition}
    
 The sum \(\Z\) of two independent random variables \(\X\) and \(\Y\) is the convolution \(\X \otimes \Y\) where \(P\{\Z = z\} = \sum_{k=-\infty}^{+\infty} P\{\X = k\}P\{\Y = z - k\}\).
\end{definition}

\begin{definition}
The coalescion of two partial random variables, denoted by \( \oplus \), represents the combination of them into a single (partial) random variable so that values that appear multiple times are kept only once gathering the summed probability mass of the respective values. An example of coalescion is 
\begin{flalign*}
    \begin{pNiceMatrix}[small]5 & 8 \\
0.18 & 0.02
    \end{pNiceMatrix} \oplus 
    \begin{pNiceMatrix}[small] 5 & 6 \\
0.72 & 0.08
    \end{pNiceMatrix}
    =\begin{pNiceMatrix}[small] 5 & 6 & 8 \\
0.9 & 0.08 & 0.02
    \end{pNiceMatrix}
\end{flalign*}
\end{definition}

The principle of the method involves convolving the workload of the jobs into the busy period in the order of their release time (in the release time pattern \(\xi^*\), the release times of all jobs are fixed and known). Since the execution time of the jobs is a random variable, each job probabilistically contributes to the busy period. 

When the busy period recursively reaches a certain job \(\tau_{i,j}\), the method consists of three steps:

\begin{itemize}
\item Divide the busy period distribution into two parts. The first part, called stable part, consists of busy periods with length less than or equal to the release time of \(\tau_{i,j}\), representing the possible scenarios where the busy period ends before \(\tau_{i,j}\) is released, i.e., \(\tau_{i,j}\) is not in the busy period. The second part, called unstable part, consists of busy periods with length longer than the release time of \(\tau_{i,j}\), representing \(\tau_{i,j}\) is in the busy period. 
\item Convolve the workload of \(\tau_{i,j}\) into the unstable part (\(\tau_{i,j}\) has no influence on stable part) 
\item Coalescing stable part and unstable part to obtain the new busy period distribution and repeating the steps for next released job.
\end{itemize}
This  stops when the probability of a job to be in the busy period falls below a threshold. In this paper, we choose threshold as \(\epsilon=0.0001\pcritical\).

 We use an example to illustrate this method as shown in Fig \ref{BWfig}. In this example, the next job \(\tau_{i,j}\) to be analyzed is released at \(t=7\) with a workload of  \(\begin{pNiceMatrix}[small]2 &3\\
0.5&0.5
    \end{pNiceMatrix}\), and the busy period distribution at that time is \(\begin{pNiceMatrix}[small]4&5&8&9\\
    0.1&0.5&0.2&0.2
    \end{pNiceMatrix}\). The stable part is  \(\begin{pNiceMatrix}[small]4 & 5\\
0.1 &0.5
    \end{pNiceMatrix}\) including the values less than or equal to 7. The unstable part is  \(\begin{pNiceMatrix}[small]8&9\\
0.2&0.2
    \end{pNiceMatrix}\). The workload of \(\tau_{i,j}\) will only be convolved into the unstable part. \(\begin{pNiceMatrix}[small]8&9\\
0.2&0.2
    \end{pNiceMatrix} \otimes\begin{pNiceMatrix}[small]2 &3\\
0.5&0.5
    \end{pNiceMatrix}=\begin{pNiceMatrix}[small]10&11&13\\
    0.1&0.2&0.1
    \end{pNiceMatrix}\).
The resulting new busy period distribution is the coalescion of two parts:\(\begin{pNiceMatrix}[small]4 & 5\\
0.1 &0.5
    \end{pNiceMatrix}\oplus\begin{pNiceMatrix}[small]10&11&13\\
    0.1&0.2&0.1
    \end{pNiceMatrix}=\begin{pNiceMatrix}[small]4&5&10&11&13\\
    0.1&0.5&0.1&0.2&0.1
    \end{pNiceMatrix}  \)

\begin{figure}   
  \centering          
  {
      \includegraphics[width=0.5
      \textwidth]{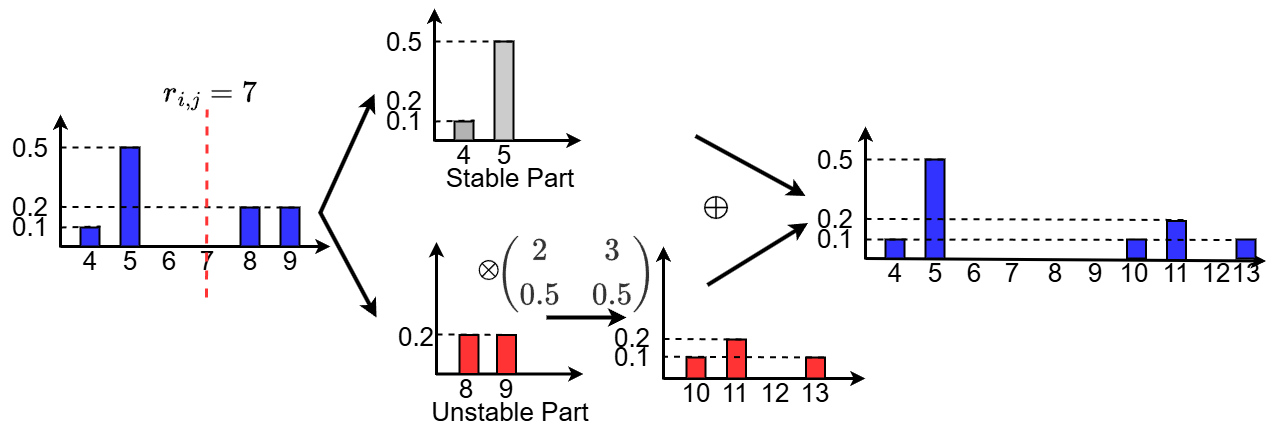}
  }
  \caption{An example demonstrating the calculation of \(\textrm{BW}^{\xi^*}\)} 
  \label{BWfig}           
\end{figure}

The method based on the length of busy period effectively bounds \(\prob(\Aset_j^i)\) that a long busy period starts from \(\eta_j^i\). However, for the \(\eta_j^i\) closer to \(r_{k,x}\), which results in shorter busy periods, using this method is pessimistic. For such \(\eta_j^i\), we examined the relationship of \(\prob(\Aset_j^i)\) with \(\pcritical\)  in (\ref{e:p-critical}) and derived a new upper bound for \(\prob(\Aset_j^i)\). 

\subsection{Bounding $\prob(\Aset_j^i)$ Based On Interference}
In this section, we propose a method based on the interference in busy period to derive another upper bound of \(\prob(\Aset_j^i)\). The core of the method is using the property of busy period and
quantifying the interference under an $\omega \in \Aset_j^i$, then we find that \(\prob(\Aset_j^i) \leq \pcritical\).

\begin{definition}
We use $I_j^{\omega}(a, b)$ to denote 
the amount of time that $\tau_j$ is actually executed in the interval $[a, b)$ under execution time pattern $\omega$.
\end{definition}
\begin{lemma}\label{l:7}
If \(\Aset_j^i \neq \varnothing\), for any \(\omega \in \Aset_j^i\), it holds
\begin{flalign*}
\label{eq1}
\forall t\in (0,D_k],\ C_{k,x}^\omega+\sum_{j<k}I_j^\omega(r_{k,x},r_{k,x}+t)>t
\end{flalign*}
\end{lemma}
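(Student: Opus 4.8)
The plan is to reformulate the statement as a straightforward consequence of the deadline-miss condition, combined with the fundamental property of a busy period (Lemma \ref{3.4}). Fix an $\omega \in \Aset_j^i$, so that under $\omega$ the busy period starts at $t_1 = \eta_j^i$ and $\tau_{k,x}$ misses its deadline. I want to show that for every $t \in (0, D_k]$, the processor is kept busy throughout $[r_{k,x}, r_{k,x}+t)$ executing $\tau_{k,x}$ together with the higher-priority interference, and that the total work done equals $C_{k,x}^\omega + \sum_{j<k} I_j^\omega(r_{k,x}, r_{k,x}+t)$, which must therefore strictly exceed the available time $t$.

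First I would argue that if $\tau_{k,x}$ misses its deadline then the processor is never idle during $[r_{k,x}, r_{k,x}+D_k)$: were there an idle instant $s$ in this interval, the scheduler would run the ready job $\tau_{k,x}$ at $s$ (it has not yet finished, since it misses its deadline, and no higher-priority job is pending, else the processor would not be idle), contradicting idleness. Hence for every $t \in (0, D_k]$ the interval $[r_{k,x}, r_{k,x}+t)$ is fully occupied. Next I would decompose the work executed in $[r_{k,x}, r_{k,x}+t)$ into the portion spent on $\tau_{k,x}$ itself, the portion spent on jobs of tasks in $hp(\tau_k)$, and note that no lower-priority job can execute while $\tau_{k,x}$ is pending. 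The work spent on $\tau_{k,x}$ in this window is at most $C_{k,x}^\omega$ (and since the job has not finished by $r_{k,x}+D_k$, and starts no execution before $r_{k,x}$, it is exactly the amount processed, bounded by $C_{k,x}^\omega$); the work spent on $\tau_j \in hp(\tau_k)$ is exactly $I_j^\omega(r_{k,x}, r_{k,x}+t)$ by definition. Summing, the total processed work in the fully-busy window of length $t$ is at most $C_{k,x}^\omega + \sum_{j<k} I_j^\omega(r_{k,x}, r_{k,x}+t)$; since the window is busy for its entire length $t$, this sum is at least $t$, and because $\tau_{k,x}$ has strictly positive remaining work at $r_{k,x}+t$ for $t \le D_k$ (it misses its deadline at $r_{k,x}+D_k$), the inequality is strict: $C_{k,x}^\omega + \sum_{j<k} I_j^\omega(r_{k,x}, r_{k,x}+t) > t$.

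The one subtlety that needs care, and which I expect to be the main obstacle, is justifying that only $\tau_{k,x}$ and jobs of $hp(\tau_k)$ (and no jobs of lower-priority tasks, and no \emph{other} jobs of $\tau_k$) execute during $[r_{k,x}, r_{k,x}+t)$. Since $\tau_k$ has a constrained deadline and a minimum inter-arrival time $T_k \ge D_k$, no job $\tau_{k,y}$ with $y \neq x$ can be pending in $[r_{k,x}, r_{k,x}+D_k)$: earlier jobs must have finished (or been aborted) by $r_{k,x}$, and later jobs are not released until at least $r_{k,x}+T_k \ge r_{k,x}+D_k$. And a lower-priority job cannot execute while $\tau_{k,x}$ is ready and unfinished, which it is throughout $[r_{k,x}, r_{k,x}+D_k)$ by the deadline-miss assumption and the fixed-priority preemptive rule. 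This pins the busy time in the window to exactly the three contributions above. Note that Lemma \ref{3.4} (no higher-priority job released before $t_1 = \eta_j^i$ executes after $t_1$) is not strictly needed for the window $[r_{k,x}, r_{k,x}+t)$ itself, but it is what makes $I_j^\omega(r_{k,x}, r_{k,x}+t)$ amenable to being bounded by fresh releases in the subsequent lemmas; here I only need the local busy-window argument. Assembling these observations gives the claim.
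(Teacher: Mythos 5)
Your proposal is correct and rests on the same core fact as the paper's proof: during $[r_{k,x}, r_{k,x}+t)$ only $\tau_{k,x}$ and jobs of $hp(\tau_k)$ can execute, so a deadline miss is equivalent to the demand $C_{k,x}^\omega+\sum_{j<k}I_j^\omega(r_{k,x},r_{k,x}+t)$ exceeding the window length $t$. The paper merely phrases this as a one-line contradiction (if the sum were at most $t^*$, the job would finish by $r_{k,x}+t^*$), whereas you argue the contrapositive directly and spell out the scheduling details (no idle time, no lower-priority or other $\tau_k$ jobs in the window) that the paper leaves implicit.
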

\begin{proof}
We prove by contradiction. Suppose 
there exists some $t^* \in (0, D_k]$ such that
\begin{flalign*}
   C_{k,x}^\omega+\sum_{j<k}I_j^\omega(r_{k,x},r_{k,x}+t^*) \leq t^* 
\end{flalign*}
This implies that $\tau_{k,x}$ can be finished by time $r_{k,x}+t^*$, i.e., finish by $r_{k,x}+D_k$ (since $t^* \in (0, D_k]$). This contradicts with the fact
that $\tau_{k,x}$ misses its deadline for any 
$\omega$ in $\Aset_j^i$.
\end{proof}

\begin{lemma}
  \label{n1}
      If \(\Aset_j^i \neq \varnothing\), for any \(\omega \in \Aset_j^i\), it holds
\begin{flalign*}
    \forall t\in (0,D_k],\ C_{k,x}^\omega+\sum_{ j<k}I_j^\omega(\eta_j^i,\eta_j^i+t)>t
     \end{flalign*}
        
\end{lemma}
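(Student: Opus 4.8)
The plan is to reduce the claim, which concerns interference accumulated from the busy-period start \(\eta_j^i\), to Lemma~\ref{l:7}, which concerns interference accumulated from the release \(r_{k,x}\) of the analyzed job, using the fact that everything between these two instants is consumed by higher-priority work. Fix \(\omega\in\Aset_j^i\) and set \(L := r_{k,x}-\eta_j^i\ge 0\) (recall \(\eta_j^i=t_1\le t_k=r_{k,x}\)). First I would invoke Lemma~\ref{3.2} with \(i=k\): during \([t_1,t_k)=[\eta_j^i,r_{k,x})\) the processor is busy executing jobs of \(hp(\tau_k)=\{\tau_j\mid j<k\}\). Hence no job of \(\tau_k\) (or of any lower-priority task) runs in that interval, so the total higher-priority execution there fills the whole interval, i.e. \(\sum_{j<k} I_j^\omega(\eta_j^i,r_{k,x}) = L\). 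I would also record the obvious additivity \(I_j^\omega(a,c)=I_j^\omega(a,b)+I_j^\omega(b,c)\) for \(a\le b\le c\), and that \(C_{k,x}^\omega\ge 1>0\) in the discrete-time model.

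Next I would split on the size of \(t\in(0,D_k]\) relative to \(L\). If \(t\le L\), then \([\eta_j^i,\eta_j^i+t)\subseteq[\eta_j^i,r_{k,x})\), which is entirely occupied by \(hp(\tau_k)\) work, so \(\sum_{j<k} I_j^\omega(\eta_j^i,\eta_j^i+t)=t\) and therefore \(C_{k,x}^\omega+\sum_{j<k} I_j^\omega(\eta_j^i,\eta_j^i+t)=C_{k,x}^\omega+t>t\). If instead \(t>L\) (this includes the case \(L=0\)), I would write \([\eta_j^i,\eta_j^i+t)=[\eta_j^i,r_{k,x})\cup[r_{k,x},r_{k,x}+(t-L))\); by additivity and the previous paragraph,
\[
C_{k,x}^\omega+\sum_{j<k} I_j^\omega(\eta_j^i,\eta_j^i+t)
= L + \Bigl( C_{k,x}^\omega+\sum_{j<k} I_j^\omega\bigl(r_{k,x},r_{k,x}+(t-L)\bigr) \Bigr).
\]
Since \(0<t-L\le t\le D_k\), Lemma~\ref{l:7} applies with the value \(t-L\in(0,D_k]\) and gives \(C_{k,x}^\omega+\sum_{j<k} I_j^\omega(r_{k,x},r_{k,x}+(t-L))>t-L\). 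Substituting, the right-hand side above exceeds \(L+(t-L)=t\), which is exactly what we want.

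The only delicate point is verifying that the interval \([\eta_j^i,r_{k,x})\) really is completely busy with \(hp(\tau_k)\) work so that its contribution is exactly \(L\); this is where Lemma~\ref{3.2} (which in turn rests on the definition of the \(t_i\)'s and constrained deadlines) does the heavy lifting, and it is the step I would state most carefully. Everything else is the interval decomposition, the additivity of \(I_j^\omega\), and the strict positivity of \(C_{k,x}^\omega\) to keep the inequality strict in the \(t\le L\) branch. No new probabilistic argument is needed: the statement is a pathwise fact about each \(\omega\in\Aset_j^i\).
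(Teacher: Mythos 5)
Your proof is correct, but it takes a different decomposition than the paper's. The paper writes, for each $\tau_j$, the identity $I_j^\omega(\eta_j^i,\eta_j^i+t)=I_j^\omega(\eta_j^i,r_{k,x})+I_j^\omega(r_{k,x},r_{k,x}+t)-I_j^\omega(\eta_j^i+t,r_{k,x}+t)$, uses Lemma~\ref{3.2} to get $\sum_{j<k}I_j^\omega(\eta_j^i,r_{k,x})=r_{k,x}-\eta_j^i$ and the trivial bound $\sum_{j<k}I_j^\omega(\eta_j^i+t,r_{k,x}+t)\leq r_{k,x}-\eta_j^i$, concludes that the window of length $t$ starting at $\eta_j^i$ contains at least as much higher-priority execution as the window of the same length starting at $r_{k,x}$, and then invokes Lemma~\ref{l:7} at the \emph{same} $t$; no case split and no positivity of $C_{k,x}^\omega$ is needed. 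You instead split the window $[\eta_j^i,\eta_j^i+t)$ exactly at $r_{k,x}$ and apply Lemma~\ref{l:7} at the \emph{shifted} time $t-L$, handling $t\leq L$ separately; both arguments rest on the same two pillars (Lemma~\ref{3.2} for the fully busy prefix $[\eta_j^i,r_{k,x})$ and Lemma~\ref{l:7} for the suffix), so the difference is one of bookkeeping rather than substance. The only point to flag is your $t\leq L$ branch, where strictness comes from $C_{k,x}^\omega>0$: this is not stated explicitly as an assumption in the paper (the paper's same-length-window comparison sidesteps it), though it is harmless here --- in the discrete-time model execution times are at least one unit, and an $\omega$ with $C_{k,x}^\omega=0$ could not put $\tau_{k,x}$ in $\Aset_j^i$ anyway, so it would be worth one sentence of justification but is not a gap.
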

\begin{proof}
First of all, for each $\tau_j$
we have 
\begin{equation*}
\begin{split}
I^{\omega}_j(\eta_j^i,\eta_j^i+t) =  ~& I^{\omega}_j(\eta_j^i,r_{k,x})
    +  I^{\omega}_j(r_{k,x},r_{k,x}+t) \\ & - I^{\omega}_j(\eta_j^i+t,r_{k,x}+t)
    \end{split}
    \end{equation*}
So we have
  \begin{equation}\label{l:n1-0}
  \begin{split}
    & \sum_{j <k} I^{\omega}_j(\eta_j^i,\eta_j^i\!+\!t) = \sum_{j <k}  I^{\omega}_j(\eta_j^i,r_{k,x})\\
    &~~~~+ \sum_{j <k} I^{\omega}_j(r_{k,x},r_{k,x}\!+\!t) -\sum_{j <k}  I^{\omega}_j(\eta_j^i\!+\!t,r_{k,x}\!+\!t)
      \end{split}
  \end{equation}
By Lemma \ref{3.2}, 
during \([t_1,t_k)\) the processor is busy executing jobs of tasks in $hp(\tau_k)$.
By the definition of $\Aset_i$, we know that for any $\omega \in \Aset_i$, $t_1 = \eta_j^i$. 
Moreover, $t_k = r_{k,x}$ (by the definition of $t_k$).
Therefore, we know that during \([t_1,t_k)\), i.e., \([\eta_j^i,r_{k,x})\) the processor is busy executing jobs of tasks in $hp(\tau_k)$, so
\begin{flalign}\label{l:n1-1}
\sum_{j<k} I^{\omega}_j(\eta_j^i,r_{k,x})=r_{k,x}-\eta_j^i
\end{flalign}
On the other hand, for any interval $[a,b)$, 
it must hold \( I^{\omega}_j(a,b)\leq b-a\), so we have
\begin{flalign}\label{l:n1-2}
\sum_{j<k} I^{\omega}_j(\eta_j^i+t,r_{k,x}+t)\leq r_{k,x}-\eta_j^i
\end{flalign}
Combining (\ref{l:n1-0}), (\ref{l:n1-1})  and  (\ref{l:n1-2}) gives
  \begin{equation*}
  \begin{split}
\sum_{j<k} \!I^{\omega}_j(\eta_j^i,\eta_j^i\!+\!t)     \!\geq & (r_{k,x}\!-\!\eta_j^i)\!+\!\!  \sum_{j<k}\!  I^{\omega}_j(r_{k,x},r_{k,x}\!+\!t) \!-\! (r_{k,x}\!-\!\eta_j^i)  \\
    =~&\sum_{j<k} I^{\omega}_j(r_{k,x},r_{k,x}+t)
    \end{split}
  \end{equation*}
So we can conclude
 \begin{flalign*}
  \forall t\in (0,D_k], \ \sum_{j<k}  I^{\omega}_j(r_{k,x},r_{k,x}+t) \leq   \sum_{j<k}  I^{\omega}_j(\eta_j^i,\eta_j^i+t) 
 \end{flalign*}
and by Lemma \ref{l:7}, the proof is completed.
\end{proof}

\begin{lemma}
\label{n2}
If \(\Aset_j^i \neq \varnothing\), for any $\omega \in\Aset_j^i$ and any task $\tau_j \in hp(\tau_k)$, it holds
\begin{equation}
    \forall t \in (0, D_k],\ I_j^{\omega}(\eta_j^i, \eta_j^i + t)\leq \!\!\! \sum _{y=\beta_j}^{\beta_j-1+\lceil \frac{t}{T_j}\rceil} \!\!\! C_{j,y}^\omega
\end{equation}

\end{lemma}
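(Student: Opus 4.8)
The plan is to bound $I_j^\omega(\eta_j^i,\eta_j^i+t)$ by pinning down exactly which jobs of $\tau_j$ are able to execute inside the window $[\eta_j^i,\eta_j^i+t)$ and then charging each of them at most its own execution time. Fix $\omega\in\Aset_j^i$ and $t\in(0,D_k]$; for such $\omega$ we have $t_1=\eta_j^i$ by the definition of $\Aset_j^i$. First I would rule out jobs released outside the window: a job of $\tau_j$ released at a time $\ge \eta_j^i+t$ cannot run before its release, hence contributes nothing to $[\eta_j^i,\eta_j^i+t)$; a job of $\tau_j$ released strictly before $\eta_j^i = t_1$ does not execute after $t_1$ by Lemma \ref{3.4}, hence also contributes nothing to $[\eta_j^i,\eta_j^i+t)$. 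Consequently, $I_j^\omega(\eta_j^i,\eta_j^i+t)$ accumulates only the execution of jobs of $\tau_j$ released in $[\eta_j^i,\eta_j^i+t)$.

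Next I would count those jobs. By the definition of $\beta_j$, the first job of $\tau_j$ released at or after $\eta_j^i$ is $\tau_{j,\beta_j}$, and since consecutive releases of $\tau_j$ are at least $T_j$ apart, the jobs released inside a half-open interval of length $t$ starting at $\eta_j^i$ are $\tau_{j,\beta_j},\tau_{j,\beta_j+1},\dots,\tau_{j,\beta_j-1+m}$ with $m\le \lceil t/T_j\rceil$ (allowing $m=0$ when no such job exists). Each such job $\tau_{j,y}$ is executed for a total of at most $C_{j,y}^\omega$ time units (exactly $C_{j,y}^\omega$ if it completes, and strictly less if it is aborted at its deadline), so the portion of its execution lying in $[\eta_j^i,\eta_j^i+t)$ is at most $C_{j,y}^\omega$. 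Summing over $y=\beta_j,\dots,\beta_j-1+m$ and using non-negativity of execution times to pad the sum up to index $\beta_j-1+\lceil t/T_j\rceil$ yields
\[
I_j^\omega(\eta_j^i,\eta_j^i+t)\ \le\ \sum_{y=\beta_j}^{\beta_j-1+m} C_{j,y}^\omega\ \le\ \sum_{y=\beta_j}^{\beta_j-1+\lceil t/T_j\rceil} C_{j,y}^\omega ,
\]
which is exactly the claimed inequality.

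The only subtle steps I anticipate are the two exclusion arguments. The exclusion of late-released jobs is pure causality, but the exclusion of early-released jobs hinges on Lemma \ref{3.4}, so I must explicitly invoke the fact that $t_1=\eta_j^i$ precisely because $\omega\in\Aset_j^i$; without this the left boundary of the window would leak execution from carried-in jobs. The job-counting bound is elementary but should be stated with care: in a half-open interval of length $t$ whose left endpoint is a release of $\tau_j$ occurring at or after $\eta_j^i$, the minimum inter-arrival separation $T_j$ permits at most $\lceil t/T_j\rceil$ releases, which is what fixes the upper summation index. This workload bound is then the key ingredient feeding into the subsequent comparison of $\prob(\Aset_j^i)$ with $\pcritical$.
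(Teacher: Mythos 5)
Your proof is correct and follows essentially the same route as the paper's: use $t_1=\eta_j^i$ together with Lemma \ref{3.4} to exclude jobs of $\tau_j$ released before $\eta_j^i$, observe that only jobs released in $[\eta_j^i,\eta_j^i+t)$ can contribute, and bound their number by $\lceil t/T_j\rceil$ via the minimum inter-arrival time, charging each at most $C_{j,y}^\omega$. Your version simply spells out details the paper leaves implicit (causality for late releases, padding the sum, aborted jobs), so there is nothing substantive to add.
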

\begin{proof}
Under any $\omega \in \Aset_j^i$, \(t_1=\eta_j^i\), so
by Lemma \ref{3.4} we know that all jobs of tasks in $hp(\tau_k)$ released before \(\eta_j^i\) do not execute after $\eta_j^i$. 
Therefore, any job of tasks in $hp(\tau_k)$ executed during $[\eta_j^i, \eta_j^i + t)$ must be released at or after $\eta_j^i$.
The number of jobs of task $\tau_j $ released in 
$[\eta_j^i, \eta_j^i + t)$ is at most \(\lceil \frac{t}{T_j}\rceil\).
\end{proof}

\begin{lemma}\label{l:asetworkload}
If \(\Aset_j^i \neq \varnothing\), for any $\omega \in \Aset_j^i$, it holds
\begin{equation}\label{e:asetworkload-1}
    \forall t \in (0, D_k], \ C_{k,x}^\omega + \sum_{j <k} \sum _{y=\beta_j}^{\beta_j-1+\lceil \frac{t}{T_j}\rceil} \!\! C_{j,y}^\omega > t
\end{equation}
\end{lemma}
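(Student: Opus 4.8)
The plan is to combine the two preceding lemmas directly. Lemma~\ref{n1} tells us that under any $\omega \in \Aset_j^i$ we have, for all $t \in (0, D_k]$, the inequality $C_{k,x}^\omega + \sum_{j<k} I_j^\omega(\eta_j^i, \eta_j^i + t) > t$, i.e., the total actual execution of the analyzed job plus the interference from higher-priority tasks, measured from the busy-period start $\eta_j^i$, strictly exceeds $t$. Lemma~\ref{n2} bounds each interference term from above by a sum of execution-time values of consecutive jobs of $\tau_j$ starting from index $\beta_j$: $I_j^\omega(\eta_j^i, \eta_j^i + t) \leq \sum_{y=\beta_j}^{\beta_j - 1 + \lceil t/T_j\rceil} C_{j,y}^\omega$.

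The key step is then a one-line chaining: fix an arbitrary $\omega \in \Aset_j^i$ and an arbitrary $t \in (0, D_k]$; apply Lemma~\ref{n1} to get the strict lower bound $t < C_{k,x}^\omega + \sum_{j<k} I_j^\omega(\eta_j^i, \eta_j^i + t)$, then substitute the per-task upper bound from Lemma~\ref{n2} into each summand to obtain $t < C_{k,x}^\omega + \sum_{j<k} \sum_{y=\beta_j}^{\beta_j - 1 + \lceil t/T_j\rceil} C_{j,y}^\omega$, which is exactly \eqref{e:asetworkload-1}. Since $t$ and $\omega$ were arbitrary, the claim holds for all $t \in (0,D_k]$ and all $\omega \in \Aset_j^i$. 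I would also note explicitly that $\beta_j$ is well-defined here because, by the hypothesis $\Aset_j^i \neq \varnothing$, there is at least one $\omega$ with $t_1 = \eta_j^i$, so $\eta_j^i$ is a genuine release-time value realized under the pattern and $\tau_j$ does have a first job released at or after it.

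I do not anticipate a real obstacle here: this lemma is purely a bookkeeping combination of two already-established facts, so the proof is essentially ``combining Lemma~\ref{n1} and Lemma~\ref{n2}'' with the substitution spelled out. The only point needing mild care is making sure the quantifier ``for all $t \in (0, D_k]$'' is handled uniformly — both input lemmas already quantify over the same range, so the substitution is valid termwise for every such $t$, and no interchange of quantifiers or limiting argument is needed. A secondary cosmetic check is that the summation index range $\sum_{y=\beta_j}^{\beta_j - 1 + \lceil t/T_j\rceil}$ matches exactly between Lemma~\ref{n2} and the statement \eqref{e:asetworkload-1}, which it does, so no re-indexing is required.
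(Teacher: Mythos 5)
Your proof is correct and is exactly the paper's argument: the paper proves this lemma simply by combining Lemma~\ref{n1} and Lemma~\ref{n2}, which is the termwise chaining you spell out. The extra remark about $\beta_j$ being well-defined is a harmless (and reasonable) addition, but no further work is needed.
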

\begin{proof}
Proved by combining Lemma \ref{n1} and \ref{n2}.
\end{proof}
\begin{theorem}\label{bound2}
    For any \(1\leq i\leq k\), \(1\leq j \leq n_i\), \(\prob(\Aset_j^i) \leq \pcritical\).
\end{theorem}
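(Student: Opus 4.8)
The plan is to show that, for every fixed $t^{*}\in(0,D_k]$, the event $\Aset_j^i$ is contained in the event whose probability appears inside the infimum defining $\pcritical$ in~(\ref{e:p-critical}), and then to take the infimum over $t^{*}$.

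First I would dispose of the degenerate case: if $\Aset_j^i=\varnothing$ then $\prob(\Aset_j^i)=0\le\pcritical$ and there is nothing to prove, so assume $\Aset_j^i\neq\varnothing$. Then I invoke Lemma~\ref{l:asetworkload}, which states that for every $\omega\in\Aset_j^i$ and every $t\in(0,D_k]$ we have $C_{k,x}^\omega+\sum_{j<k}\sum_{y=\beta_j}^{\beta_j-1+\lceil t/T_j\rceil}C_{j,y}^\omega>t$. Fixing an arbitrary $t^{*}\in(0,D_k]$ and keeping only that one value of $t$ gives the inclusion
\[
\Aset_j^i\subseteq\Big\{\omega\in\Omega:\ C_{k,x}^\omega+\sum_{j<k}\sum_{y=\beta_j}^{\beta_j-1+\lceil t^{*}/T_j\rceil}C_{j,y}^\omega>t^{*}\Big\},
\]
and hence $\prob(\Aset_j^i)\le\prob\!\big(\C_{k,x}+\sum_{j<k}\sum_{y=\beta_j}^{\beta_j-1+\lceil t^{*}/T_j\rceil}\C_{j,y}>t^{*}\big)$.

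The crucial step is then a distributional identification. All jobs indexed in the double sum are pairwise distinct and distinct from $\tau_{k,x}$, so by the model assumption that each job's execution time is an independent copy of its task's pWCET, $\C_{k,x}$ has the same law as $\C_{k,1}$ and, for each $j<k$, the $\lceil t^{*}/T_j\rceil$ consecutive copies $\C_{j,\beta_j},\dots,\C_{j,\beta_j-1+\lceil t^{*}/T_j\rceil}$ have the same joint law as $\C_{j,1},\dots,\C_{j,\lceil t^{*}/T_j\rceil}$, with mutual independence across tasks preserved. Consequently the right-hand probability equals $\prob\!\big(\C_{k,1}+\sum_{j<k}\sum_{x=1}^{\lceil t^{*}/T_j\rceil}\C_{j,x}>t^{*}\big)$. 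Since the inclusion above holds for \emph{every} $t^{*}\in(0,D_k]$, we get $\prob(\Aset_j^i)\le\inf_{0<t^{*}\le D_k}\prob\!\big(\C_{k,1}+\sum_{j<k}\sum_{x=1}^{\lceil t^{*}/T_j\rceil}\C_{j,x}>t^{*}\big)=\pcritical$ by~(\ref{e:p-critical}).

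I expect the only delicate point to be the re-indexing/distributional argument: one must be careful that the $\beta_j$-shifted block of execution-time variables is indeed an i.i.d.\ block with the same marginal as the first $\lceil t^{*}/T_j\rceil$ copies, and that the blocks for different tasks (together with the single variable $\C_{k,x}$) remain mutually independent — this is precisely where the i.i.d.-copies assumption of the system model is used. Everything else reduces to a one-line set inclusion followed by the elementary observation that $\prob(\Aset_j^i)\le\prob(E_{t^{*}})$ for each $t^{*}$ forces $\prob(\Aset_j^i)$ below the infimum, which is $\pcritical$ by definition.
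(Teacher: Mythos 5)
Your proposal is correct and follows essentially the same route as the paper's proof: both reduce to Lemma~\ref{l:asetworkload}, bound $\prob(\Aset_j^i)$ by the probability of the workload inequality at each fixed $t^{*}\in(0,D_k]$ (the paper phrases this one-line set inclusion as a Fr\'echet inequality), take the infimum, and then use the i.i.d.-copies assumption to replace $\C_{k,x}$ and the $\beta_j$-shifted blocks by $\C_{k,1}$ and the first $\lceil t^{*}/T_j\rceil$ copies so the infimum equals $\pcritical$ as in~(\ref{e:p-critical}). No gap; the delicate re-indexing point you flag is exactly the step the paper also justifies via the independent-copies model assumption.
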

\begin{proof}
     If \(\Aset_j^i=\varnothing\), \(\prob(\Aset_j^i)\leq\pcritical\) satisfies.
     If \(\Aset_j^i \neq \varnothing\), we define 

   \[\W(t):=\C_{k,x}+\sum_{\tau_j \in hp(\tau_k)}\sum _{y=\beta_j}^{\beta_j-1+\lceil \frac{t}{T_j}\rceil}\C_{j,y} \]
By Lemma \ref{l:asetworkload}, for any \(\omega \in \Aset_j^i\), it holds (\ref{e:asetworkload-1}), so
 \begin{equation}\label{e:Aset-1}
  \prob(\Aset_j^i) \leq \prob(\forall t \in (0, D_k], \W(t)>t)
\end{equation}
 By Fréchet inequality, \(\prob(\forall t \in (0, D_k], \W(t)>t)\leq \inf_{0 < t \leq D_k} \prob(\W(t)>t)\).

 we have 
 \begin{equation}\label{e:n4-1}
\prob(\Aset_j^i) \leq \inf_{0 < t \leq D_k} \prob(\W(t)>t)
\end{equation}
Since random variable \(\C_{k,x}\) has the same probability distribution as \(\C_{k,1}\) (they are both independent copies of $\C_{k}$), replacing \(\C_{k,x}\) by \(\C_{k,1}\) in $\W(t)$ does not change $\prob(\W(t)>t)$. 
  Similarly,  each \(C_{j,y}\) is an independent copy of \(\C_j\), so
 replacing \(\sum _{y=\beta_j}^{\beta_j-1+\lceil \frac{t}{T_j}\rceil}C_{j,y}\) by \(\sum _{y=1}^{\lceil \frac{t}{T_j}\rceil}C_{j,y}\) does not change $\prob(\W(t)>t)$ either.
So we have 
\begin{equation*}
 \inf_{0 < t \leq D_k} \!\!\! \prob(\W(t)>t) 
 = \!\!\! \inf_{0 < t \leq D_k} \!\!\!  \prob(\C_{k,1}+ \!\!\!\! \sum_{\tau_j \in hp(\tau_k)}\!\sum _{y=1}^{\lceil \frac{t}{T_j}\rceil}\C_{j,y} > t) 
\end{equation*}
which equals $\pcritical$ according to (\ref{e:p-critical}).
Combining this with (\ref{e:n4-1}) completes the proof.
\end{proof}
The \(\prob(\Aset_j^i)\) is independent from how $\pcritical$ is calculated. For example, $\pcritical$ can be calculated using either the convolution-based approach \cite{2013,von2018}, which is more 
precise but time consuming, or the 
Chernoff-bound-based approach \cite{CB}, which is more efficient but introduces considerable pessimism. 

\section{Bounding \(\textrm{WCDFP}\)}\label{s:algorithm}

\subsection{The Upper Bound Of \(\textrm{WCDFP}\)} 
According to Theorem \ref{bound1} and \ref{bound2}, for any \(\Aset^i_j\), the upper-bound of \(\prob(\Aset_j^i)\) is either \(\prob(\textrm{BW}^{\xi^*}\geq (j-1)T_i+D_k)\) or \(\pcritical\). We traverse all possible \(\Aset^i_j\) and use the minimum one of its two bounds. 

\begin{theorem}
    \label{3.9}
    For any task \(\tau_k\), \(\textrm{WCDFP}_  k\) is upper bounded by 
        \begin{equation} \label{WCDFP1}
     \textrm{WCDFP}_{k} \leq
\sum_{i=1}^k\sum_{j=1}^{n_i}\prob(\Aset_j^i)
\end{equation}
where \(\prob(\Aset_j^i)=\min\{\prob(\mathrm{BW}^{\xi^*}\geq (j-1)T_i+D_k),\pcritical\}\).
    
\end{theorem}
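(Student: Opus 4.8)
The plan is to assemble the claimed bound directly from the two per-starting-point estimates already established, together with a union bound over the finitely many busy-period starting points. First I would fix an arbitrary release time pattern $\xi \in \rpset$ and an arbitrary job $\tau_{k,x}$ of $\tau_k$, and recall that the sets $\Aset_j^i$ cover the event $\Omega_{\xi\textrm{-miss}}$ that $\tau_{k,x}$ misses its deadline, as observed just before (\ref{DFP}). Hence by sub-additivity of probability,
\[
\textrm{DFP}_{k,x}^\xi = \prob(\Omega_{\xi\textrm{-miss}}) = \prob\Bigl(\bigcup_{i=1}^k \bigcup_{j=1}^{n_i} \Aset_j^i\Bigr) \leq \sum_{i=1}^k \sum_{j=1}^{n_i} \prob(\Aset_j^i).
\]

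Next I would bound each term $\prob(\Aset_j^i)$ by the smaller of the two upper bounds supplied by Theorems \ref{bound1} and \ref{bound2}: since $\prob(\Aset_j^i) \leq \prob(\textrm{BW}^{\xi^*} \geq (j-1)T_i + D_k)$ and simultaneously $\prob(\Aset_j^i) \leq \pcritical$, it follows that $\prob(\Aset_j^i) \leq \min\{\prob(\textrm{BW}^{\xi^*} \geq (j-1)T_i + D_k),\, \pcritical\}$, which is exactly the quantity the theorem re-labels $\prob(\Aset_j^i)$. Substituting into the displayed inequality yields $\textrm{DFP}_{k,x}^\xi \leq \sum_{i=1}^k \sum_{j=1}^{n_i} \min\{\prob(\textrm{BW}^{\xi^*} \geq (j-1)T_i + D_k),\, \pcritical\}$.

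Finally I would discharge the dependence on $\xi$ and $x$: the right-hand side above is determined by the task set alone (the quantities $n_i$, $T_i$, $D_k$, $\xi^*$ and $\pcritical$ do not depend on the chosen release pattern or job index), so taking $\max_{\xi \in \rpset}\max_{x\geq 1}$ on the left gives $\textrm{WCDFP}_k \leq \sum_{i=1}^k \sum_{j=1}^{n_i} \prob(\Aset_j^i)$ with the stated definition. I do not expect a genuine obstacle here, since the substantive work lives in Theorems \ref{number}, \ref{bound1} and \ref{bound2}; the only points needing a sentence of care are that the union over $(i,j)$ still covers $\Omega_{\xi\textrm{-miss}}$ even when, for a particular $\xi$, some $\Aset_j^i$ are empty or coincide (which only tightens the bound), and that $N=\sum_i n_i$ from Theorem \ref{number} is a cap on the number of starting points that is uniform in $\xi$, which is what makes the finite double sum a legitimate bound for every release pattern.
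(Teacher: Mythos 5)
Your proposal is correct and follows essentially the same route as the paper: the covering of $\Omega_{\xi\textrm{-miss}}$ by the sets $\Aset_j^i$ plus sub-additivity (which is exactly what (\ref{DFP}) records), the per-term minimum of the bounds from Theorems \ref{bound1} and \ref{bound2}, and the observation that the resulting bound is independent of $\xi$ and $x$ so it upper-bounds $\textrm{WCDFP}_k$. Your explicit remarks on empty or coinciding $\Aset_j^i$ and on the uniformity of $N$ in $\xi$ are slightly more careful than the paper's wording (which even writes (\ref{DFP}) as an equality), but the substance is identical.
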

\begin{proof}
By (\ref{DFP}), \(\textrm{DFP}_{k,x}^\xi =
\sum_{i=1}^k\sum_{j=1}^{n_i}\prob(\Aset_j^i)\).
According to Theorem \ref{bound1} and \ref{bound2}, for any \(\Aset^i_j\), the upper-bound of \(\Aset_j^i\) is either \(\prob(\textrm{BW}^{\xi^*}\geq (j-1)T_i+D_k)\) or \(\pcritical\). Therefore,
\(\prob(\Aset_j^i)\leq \min\{\prob(\textrm{BW}^{\xi^*}\geq (j-1)T_i+D_k),\pcritical\}\).

Since $\tau_{k,x}$ is an arbitrarily chosen job of $\tau_k$ and $\xi$ is an arbitrarily 
chosen release time pattern among all possible 
release time patterns, the upper bound 
of $\textrm{DFP}_{k,x}^\xi$ applies to all jobs of $\tau_k$
and all release time patterns. Therefore, $\textrm{DFP}_{k,x}^\xi$ derived (\ref{WCDFP1}) is an upper bound of \(\textrm{WCDFP}_k\).
\end{proof}
According to Theorem \ref{3.9}, we can divide all possible \(\Aset_i^j\) into two parts. The first part \(B1\) includes \(\Aset_i^j\) with upper bound \(\prob(\textrm{BW}^{\xi^*}\geq (j-1)T_i+D_k)\). 
The second part \(B2\) includes \(\Aset_p^q \) with upper bound \(\pcritical\). 
\[B1=\{\Aset_i^j| \prob(\textrm{BW}^{\xi^*}\geq (j-1)T_i+D_k)<\pcritical\}\]
\[B2=\{\Aset_p^q|\prob(\textrm{BW}^{\xi^*}\geq (q-1)T_p+D_k)\geq\pcritical\}\]
We assume that there are \(m\) \(\Aset_p^q \in B2\). By Theorem \ref{3.9}
 \begin{align}
 \label{WCDFP2}
      \textrm{WCDFP}_{k} &\leq \sum_{i=1}^k\sum_{j=1}^{n_i}\prob(\Aset_j^i) \notag \\
   &= \sum_{\Aset_i^j\in B1}\prob(\Aset_i^j)+\sum_{\Aset_p^q\in B2}\prob(\Aset_p^q)\\
   &=\sum_{\Aset_i^j\in B1}\prob(\textrm{BW}^{\xi^*}\geq (j-1)T_i+D_k)+m*\pcritical   \notag
  \end{align}
 In this result, we consider the part of \(m*\pcritical\) is still pessimistic. We will optimize the results of \(\sum_{\Aset_p^q\in B2}\prob(\Aset_p^q)\). 
 
 \subsection{Further Improvement } \label{futher improvement}

In this section, we optimize the results of \(\sum_{\Aset_p^q\in B2}\prob(\Aset_p^q)\). We assume that there are \(m\) \(\Aset^q_p\) use \(\pcritical\) as its upper bound and we reorder them in ascending order as \(\Aset_{\pcritical_1},...,\Aset_{\pcritical_m}\).i.e. \(\eta_{\pcritical_i}\leq \eta_{\pcritical_j} \textrm{ if }i< j\). Therefore, \begin{flalign}
\sum_{\Aset_p^q\in B2}\prob(\Aset_p^q)=\sum_{i=1}^m \prob(\Aset_{\pcritical_i})
\end{flalign}
Intuitively, because \(\prob(\Aset_{\pcritical_i})\leq \pcritical\), the lower bound for \(\tau_{k,x}\) meeting deadline, given that the busy period starts at \(\eta_{\rho i}\), is \(1 - \rho\). Extending this logic, the lower bound for a job meeting its deadline when the busy period starts at any of \(n_{\rho_1}, \cdots, n_{\rho_m}\) is \((1 - \rho)^m\). Consequently, the upper bound for the probability that a job misses its deadline is \(1 - (1 - \rho)^m\). We prove this in this section.

We define \(\Omega_\pcritical=\{\omega\in\Omega~|~ t_1= \eta_{\pcritical_i}\} ,\ 1\leq i \leq m\) 
\(\Bset_{\pcritical_i} =
\{\omega\in \Omega_\pcritical~|~t_1=\eta_{\pcritical_i},\ \rho_i<\rho_{i+1}\}, ~~ 1\leq i \leq m\\ \) and \(\Bset_{\rho_i}=\emptyset\) if \(\rho_i=\rho_{i+1}\).
Note that \(\sum_{i=1}^m \prob(\Bset_{\pcritical_i})=\prob(\Omega_\pcritical)\leq1\).

First, we prove that 
\[\prob(\Aset_{\pcritical_i})/(\prob(\Bset_{\pcritical_i}) + \cdots + \prob(\Bset_{\pcritical_m}))\leq \pcritical\]

We define the following two conditions:

\begin{itemize}
\item \textbf{C-1}: All jobs in $hp(\tau_k)$ released before \(\eta_{\pcritical_i}\) do not execute after \(\eta_{\pcritical_i}\).
\item \textbf{C-2}: (\ref{e:asetworkload-1}) is satisfied.
\end{itemize}
Then we define the following sets of execution time patterns:
\begin{align*}
\Eset_{\pcritical_i} & = \{ \omega \in \Omega_\pcritical ~| ~ \textrm{$\omega$ satisfies \textbf{C-1}} \} \\
\Gset_{\pcritical_i} & = \{ \omega \in \Omega_\pcritical~| ~ \textrm{$\omega$ satisfies \textbf{C-2}} \} \\
\Fset_{\pcritical_i} & = \{ \omega \in \Omega_\pcritical ~| ~ \textrm{$\omega$ satisfies \textbf{C-1} and \textbf{C-2}} \}
\end{align*}
We first show the relationships between $\Aset_{\pcritical_i}$ and $\Fset_{\pcritical_i}$.
  
\begin{lemma}
\label{n3}
  For any $i \in [1, m]$, it holds 
  \( \prob(\Aset_{\pcritical_i}) \leq \prob(\Fset_{\pcritical_i})\)
\end{lemma}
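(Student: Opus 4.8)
The plan is to prove $\prob(\Aset_{\pcritical_i}) \leq \prob(\Fset_{\pcritical_i})$ by showing the set inclusion $\Aset_{\pcritical_i} \subseteq \Fset_{\pcritical_i}$, which immediately gives the probability inequality by monotonicity of $\prob$. Since $\Fset_{\pcritical_i}$ is the set of $\omega \in \Omega_\pcritical$ satisfying both \textbf{C-1} and \textbf{C-2}, it suffices to take an arbitrary $\omega \in \Aset_{\pcritical_i}$ and verify three things: that $\omega \in \Omega_\pcritical$, that $\omega$ satisfies \textbf{C-1}, and that $\omega$ satisfies \textbf{C-2}.

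First I would unwind the definition of $\Aset_{\pcritical_i}$: by construction $\Aset_{\pcritical_i}$ is one of the sets $\Aset_j^i$ (after reindexing the $B2$ members in ascending order of $\eta$), so for any $\omega \in \Aset_{\pcritical_i}$ we have $\omega \in \Omega_{\xi\textrm{-miss}}$ with $t_1 = \eta_{\pcritical_i}$; in particular $\omega \in \Omega_\pcritical$ and $\tau_{k,x}$ misses its deadline under $\omega$. For \textbf{C-1}, I would invoke Lemma \ref{3.4}: since $t_1 = \eta_{\pcritical_i}$ under $\omega$, all jobs of tasks in $hp(\tau_k)$ released before $\eta_{\pcritical_i}$ do not execute after $\eta_{\pcritical_i}$, which is exactly \textbf{C-1}. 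For \textbf{C-2}, I would note that $\Aset_{\pcritical_i} \neq \varnothing$ (it contains $\omega$), so Lemma \ref{l:asetworkload} applies and tells us that \eqref{e:asetworkload-1} holds for $\omega$, which is precisely \textbf{C-2}. Hence $\omega \in \Fset_{\pcritical_i}$.

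Since $\omega$ was arbitrary, $\Aset_{\pcritical_i} \subseteq \Fset_{\pcritical_i}$, and therefore $\prob(\Aset_{\pcritical_i}) \leq \prob(\Fset_{\pcritical_i})$, completing the proof. I do not expect a serious obstacle here: the lemma is essentially a bookkeeping step that repackages Lemmas \ref{3.4} and \ref{l:asetworkload} in the language of the sets $\Eset_{\pcritical_i}$, $\Gset_{\pcritical_i}$, $\Fset_{\pcritical_i}$ that will be used in the subsequent ratio argument. The only point requiring a little care is making sure the reindexing from $\Aset_j^i$ to $\Aset_{\pcritical_i}$ does not lose the property $t_1 = \eta_{\pcritical_i}$ on $\omega$ — but that is immediate from how $B2$ and the ascending ordering were defined, so the containment argument goes through directly.
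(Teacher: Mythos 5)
Your proof is correct and follows essentially the same route as the paper: establish the set inclusion $\Aset_{\pcritical_i} \subseteq \Fset_{\pcritical_i}$ by noting $\omega \in \Omega_\pcritical$, invoking Lemma \ref{3.4} for \textbf{C-1} (since $t_1=\eta_{\pcritical_i}$) and Lemma \ref{l:asetworkload} for \textbf{C-2}, then conclude by monotonicity of $\prob$. The paper additionally spells out the trivial case $\Aset_{\pcritical_i}=\varnothing$, which your vacuous-inclusion argument already covers.
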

\begin{proof}
      If \(\Aset_{\pcritical_i} = \varnothing\), then \(\prob(\Aset_{\pcritical_i})=0\leq \prob(\Fset_{\pcritical_i})\). If \(\Aset_{\pcritical_i}\neq \varnothing\), 
       for any $\omega \in \Aset_{\pcritical_i}$, firstly \(\omega\in \Omega_\pcritical\).
      Secondly, both \textbf{C-1} and \textbf{C-2} are satisfied, since
\begin{itemize}
\item Under $\omega \in \Aset_{\pcritical_i}$, we have $t_1 = \eta_{\pcritical_i}$, so by Lemma \ref{3.4}, \textbf{C-1} is satisfied.

\item  By Lemma \ref{l:asetworkload}, \textbf{C-2} is satisfied.
\end{itemize}
So we can conclude  
$\Aset_{\pcritical_i} \subseteq \Fset_{\pcritical_i}$,  
so \(\prob(\Aset_{\pcritical_i})\leq \prob(\Fset_{\pcritical_i})\).
\end{proof}

Then we show the relationships between $\Bset_{\pcritical_i}$ and $\Eset_{\pcritical_i}$. We first introduce an auxiliary lemma.

\begin{lemma}
   \label{l9}
    For any \(\omega\in\Eset_{\pcritical_i}\),  it holds \(t_1\geq \eta_{\pcritical_i}\).
 \end{lemma}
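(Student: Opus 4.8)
The plan is to prove the contrapositive, or rather to argue directly by examining what condition \textbf{C-1} forces on the iterative definition of $t_1$. Recall that $\omega \in \Eset_{\pcritical_i}$ means $\omega \in \Omega_\pcritical$ (so $t_1$ under $\omega$ is one of the designated values $\eta_{\pcritical_1}, \dots, \eta_{\pcritical_m}$) and that $\omega$ satisfies \textbf{C-1}: every job of $hp(\tau_k)$ released strictly before $\eta_{\pcritical_i}$ does not execute after $\eta_{\pcritical_i}$. I want to conclude $t_1 \geq \eta_{\pcritical_i}$ under $\omega$.

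First I would suppose for contradiction that $t_1 < \eta_{\pcritical_i}$ under $\omega$. By Lemma \ref{3.2} (applied with $i = k$), the processor is busy executing jobs of $hp(\tau_k)$ throughout $[t_1, t_k) = [t_1, r_{k,x})$; in particular it is busy throughout $[t_1, \eta_{\pcritical_i})$ since $t_1 < \eta_{\pcritical_i} \le r_{k,x}$ (note $\eta_{\pcritical_i}$ is a candidate value for $t_1$, hence a release time of some job of a task in $hp(\tau_k)$ occurring at or before $r_{k,x}$; I should make this ordering explicit). Now consider the job being executed immediately before time $\eta_{\pcritical_i}$: it is a job of some $\tau_j \in hp(\tau_k)$, and since the processor has been continuously busy with $hp(\tau_k)$-jobs since $t_1 < \eta_{\pcritical_i}$, and this job is still running just before $\eta_{\pcritical_i}$, it must have been released at or before a point from which it could not have completed before $\eta_{\pcritical_i}$. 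The key is to show it was released strictly before $\eta_{\pcritical_i}$ — if every job contributing to the busy interval $[t_1,\eta_{\pcritical_i})$ were released at exactly $\eta_{\pcritical_i}$ that would be vacuous, but that is impossible because work is being done in $[t_1,\eta_{\pcritical_i})$ which is nonempty. So some job released before $\eta_{\pcritical_i}$ is executing after $t_1$, and in fact (chasing the busy interval forward) executing at times arbitrarily close to $\eta_{\pcritical_i}$, hence after $\eta_{\pcritical_i}$ is where the contradiction with \textbf{C-1} must be squeezed out — more precisely, the job executing in the last instant before $\eta_{\pcritical_i}$ either finishes exactly at $\eta_{\pcritical_i}$ or continues past it; if it continues past it we contradict \textbf{C-1} directly, and if it finishes exactly at $\eta_{\pcritical_i}$ then the processor is still busy at $\eta_{\pcritical_i}$ so some $hp(\tau_k)$-job executes in $[\eta_{\pcritical_i}, \eta_{\pcritical_i}+\varepsilon)$, and tracing which job that is and when it was released closes the argument.

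The cleaner route, which I would actually write, is: if $t_1 < \eta_{\pcritical_i}$ then by the definition of $t_1$ as the minimum over the iterated $t_i$'s, there is some level $\ell \le k-1$ where $t_\ell = t_1 < \eta_{\pcritical_i} \le t_{\ell+1}$ is witnessed by a job $\tau_{\ell,\phi_\ell}$ released at $t_1$ that is the first job of $\tau_\ell$ executed after $t_{\ell+1}$; combined with Lemma \ref{3.2} this gives a job of $hp(\tau_k)$ (namely $\tau_{\ell,\phi_\ell}$, or a job blocking it) released before $\eta_{\pcritical_i}$ and still executing at or after $\eta_{\pcritical_i}$, contradicting \textbf{C-1}. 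Alternatively, and perhaps most simply, invoke Lemma \ref{3.4}: \textbf{C-1} is literally the statement "all jobs of $hp(\tau_k)$ released before $\eta_{\pcritical_i}$ do not execute after $\eta_{\pcritical_i}$", and if the true busy-period start $t_1$ were strictly less than $\eta_{\pcritical_i}$, then by Lemma \ref{3.2} the processor is busy on $hp(\tau_k)$-work throughout $[t_1, r_{k,x}) \supseteq [t_1,\eta_{\pcritical_i}]$, so there is $hp(\tau_k)$-work released at or before $t_1 < \eta_{\pcritical_i}$ executing at time $\eta_{\pcritical_i}$ — contradiction.

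The main obstacle I anticipate is the fencepost/boundary bookkeeping in the discrete-time model: being careful about "released before $\eta_{\pcritical_i}$" versus "released at $\eta_{\pcritical_i}$", and about "executes after $\eta_{\pcritical_i}$" meaning executes in some unit slot $[\eta_{\pcritical_i}, \eta_{\pcritical_i}+1)$ or later, so that the continuity of the busy interval from $t_1$ really does deposit a pre-$\eta_{\pcritical_i}$ job into the forbidden region. Once the definitions of "busy interval" and "executes after" are pinned down, the argument is a short contradiction leaning entirely on Lemma \ref{3.2} and the definition of $t_1$; no new machinery is needed.
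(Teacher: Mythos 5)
Your ``cleaner route'' is essentially the paper's own proof: the paper assumes $t_1<\eta_{\pcritical_i}$ and shows by induction along the chain $t_1\le t_2\le\cdots\le t_k$ that $t_j<\eta_{\pcritical_i}$ forces $t_{j+1}<\eta_{\pcritical_i}$ (otherwise the witnessing job $\tau_{j,\phi_j}$, released at $t_j<\eta_{\pcritical_i}$, would execute after $t_{j+1}\ge\eta_{\pcritical_i}$, violating \textbf{C-1}), ending in $t_k=r_{k,x}<\eta_{\pcritical_i}$, which contradicts the fact that every candidate value $\eta_{\pcritical_i}$ of $t_1$ is at most $r_{k,x}$; taking the first crossing level $\ell$ with $t_\ell<\eta_{\pcritical_i}\le t_{\ell+1}$ and contradicting \textbf{C-1} there is just the contrapositive of that induction step, so the substance is the same (one small slip: the witnessing job is released at $t_\ell$, which need not equal $t_1$). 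Do be careful, however, with the variant you label ``perhaps most simply'': arguing only from Lemma \ref{3.2} that the processor is busy on $hp(\tau_k)$ work throughout $[t_1,r_{k,x})$ does \emph{not} yield a contradiction with \textbf{C-1}, because the job executing at time $\eta_{\pcritical_i}$ may have been released exactly at $\eta_{\pcritical_i}$ while all earlier jobs finish by $\eta_{\pcritical_i}$; \textbf{C-1} constrains only jobs released strictly before $\eta_{\pcritical_i}$, and your claim that the work executing at $\eta_{\pcritical_i}$ was released at or before $t_1$ is unjustified. That is exactly why the argument must go through the iterative definition of the $t_j$'s, as in your main route and in the paper.
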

\begin{proof}
We prove by contradiction. Assume there exists an \(\omega \in 
    \Eset_{\pcritical_i}\) such that \(t_1<\eta_{\pcritical_i}\). 
  We will prove that this implies $t_k < \eta_{\pcritical_i}$ by induction.
  \begin{itemize}
\item \textbf{Base case}.  \(t_1<\eta_{\pcritical_i}\).
\item \textbf{Induction step}.      
 \(t_j<\eta_{\pcritical_i} \Rightarrow t_{j+1}<\eta_{\pcritical_i}\). 
  By definition,  
    \( t_{j} = \min (t_{j+1},r_{j,\phi_j} )\), where \(\tau_{j,\phi_j}\) is the first job of \(\tau_j\) executed after \(t_{j+1}\). 
    On the other hand, by the definition of $\Eset_{\pcritical_i}$, we know
    for any $\omega$, jobs in $hp(\tau_k)$ released before $\eta_{\pcritical_i}$ do not execute after $\eta_{\pcritical_i}$, so particularly,  $\tau_{j,\phi_j}$ do not execute after $\eta_{\pcritical_i}$. Therefore, we can conclude $t_{j+1} < \eta_{\pcritical_i}$.
  \end{itemize}
   So we have proved $t_k< \eta_{\pcritical_i}$, i.e., $r_{k, x} < \eta_{\pcritical_i}$.
   This contradicts that 
   all $\eta_{\pcritical_i}$, which are the candidate values of $t_1$, must be no larger than $r_{k, x}$ (according to the definition of $t_1$).
\end{proof}
\begin{lemma}
  \label{EandB}
   For any $i \in [1, m]$, it holds
   \[ \prob(\Eset_{\pcritical_i}) \leq  \prob(\Bset_{\pcritical_i}) +\cdots+ \prob(\Bset_{\pcritical_m})\] 
\end{lemma}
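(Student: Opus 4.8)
The plan is to prove the set inclusion $\Eset_{\pcritical_i}\subseteq\bigcup_{\ell=i}^{m}\Bset_{\pcritical_\ell}$ and then pass to probabilities by countable subadditivity (in fact by additivity, since the $\Bset_{\pcritical_\ell}$ are pairwise disjoint, but only the inequality is needed). Thus the whole argument reduces to locating, for each $\omega\in\Eset_{\pcritical_i}$, an index $\ell\geq i$ with $\omega\in\Bset_{\pcritical_\ell}$.

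Concretely, I would fix an arbitrary $\omega\in\Eset_{\pcritical_i}$. Since $\Eset_{\pcritical_i}\subseteq\Omega_\pcritical$, the value of $t_1$ under $\omega$ is one of the candidates $\eta_{\pcritical_1},\dots,\eta_{\pcritical_m}$. Because $\omega$ satisfies \textbf{C-1} by the definition of $\Eset_{\pcritical_i}$, Lemma~\ref{l9} applies and gives $t_1\geq\eta_{\pcritical_i}$. Now I would invoke the fact that the candidates are listed in ascending order, $\eta_{\pcritical_1}\leq\cdots\leq\eta_{\pcritical_m}$: the block $\{\ell:\eta_{\pcritical_\ell}=t_1\}$ is nonempty, and its maximal element $\ell^\ast$ satisfies $\ell^\ast\geq i$ (if $t_1=\eta_{\pcritical_i}$ then $i$ itself lies in the block; if $t_1>\eta_{\pcritical_i}$ then, by monotonicity, every index in the block exceeds $i$). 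By the definition of the $\Bset_{\pcritical_\ell}$ — which assigns each $\omega$ with $t_1=\eta_{\pcritical_\ell}$ to the last index carrying that $\eta$-value — one gets $\omega\in\Bset_{\pcritical_{\ell^\ast}}$ with $\ell^\ast\geq i$. Hence $\Eset_{\pcritical_i}\subseteq\bigcup_{\ell=i}^{m}\Bset_{\pcritical_\ell}$, and taking probabilities yields $\prob(\Eset_{\pcritical_i})\leq\sum_{\ell=i}^{m}\prob(\Bset_{\pcritical_\ell})$, which is the claim.

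The main obstacle is purely the bookkeeping around ties among the $\eta_{\pcritical_\ell}$: one must check that the tie-breaking convention in the definition of $\Bset_{\pcritical_\ell}$ (empty unless $\ell$ is the last index sharing its $\eta$-value) is precisely what forces "$t_1\geq\eta_{\pcritical_i}$" to land $\omega$ in a $\Bset$ of index at least $i$, rather than in one with a smaller index. If all $\eta_{\pcritical_\ell}$ were distinct this step would be immediate, since then $t_1=\eta_{\pcritical_\ell}$ for a unique $\ell$, and $t_1\geq\eta_{\pcritical_i}$ forces $\ell\geq i$. I do not expect any analytic difficulty beyond this indexing care, since the substantive content — that \textbf{C-1} prevents $t_1$ from preceding $\eta_{\pcritical_i}$ — is already isolated in Lemma~\ref{l9}.
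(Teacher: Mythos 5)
Your proposal is correct and follows essentially the same route as the paper: establish the inclusion $\Eset_{\pcritical_i}\subseteq\Bset_{\pcritical_i}\cup\cdots\cup\Bset_{\pcritical_m}$ via Lemma~\ref{l9} and then pass to probabilities. The only difference is that you spell out the tie-breaking among equal $\eta_{\pcritical_\ell}$ values, which the paper leaves implicit in the phrase ``by the definition of $\Bset_{\pcritical_i}$.''
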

\begin{proof}
To prove the lemma, it suffices to prove that \(\Eset_{\pcritical_i} \subseteq \Bset_{\pcritical_i}\cup...\cup \Bset_{\pcritical_m}\)

Let $\omega$ be an arbitrary element in $\Eset_{\pcritical_i}$, then by Lemma \ref{l9} we know that $t_1 \geq \eta_{\pcritical_i}$ under $\omega$. By the definition of $\Bset_{\pcritical_i}$, 
$\omega$ must also be in $\Bset_{\pcritical_i}\cup...\cup \Bset_{\pcritical_m}$. 
\end{proof}

Because of Lemma \ref{n3} and  Lemma \ref{EandB}, 
in order to prove 
$\prob(\Aset_{\pcritical_i})/(\prob(\Bset_{\pcritical_i}) + \cdots + \prob(\Bset_{\pcritical_m}))\leq \pcritical$, we only need to prove $\prob(\Fset_{\pcritical_i}) / \prob(\Eset_{\pcritical_i}) \leq \pcritical$. We will do this in the following.

\begin{lemma}\label{l:3}
For any $i \in [1,m]$, it holds 
\[
\ \prob(\Fset_{\pcritical_i})/ \prob(\Eset_{\pcritical_i}) = \prob(\Gset_{\pcritical_i})
\]
\end{lemma}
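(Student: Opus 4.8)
The plan is to recognise Lemma~\ref{l:3} as an independence statement. Since $\Eset_{\pcritical_i}$, $\Gset_{\pcritical_i}$ and $\Fset_{\pcritical_i}$ are the sets of execution patterns in $\Omega_\pcritical$ that satisfy \textbf{C-1}, \textbf{C-2}, and both conditions, respectively, we have $\Fset_{\pcritical_i}=\Eset_{\pcritical_i}\cap\Gset_{\pcritical_i}$, so $\prob(\Fset_{\pcritical_i})/\prob(\Eset_{\pcritical_i})=\prob(\Gset_{\pcritical_i}\mid\Eset_{\pcritical_i})$, and the asserted identity $\prob(\Fset_{\pcritical_i})/\prob(\Eset_{\pcritical_i})=\prob(\Gset_{\pcritical_i})$ is exactly the independence of the events $\Eset_{\pcritical_i}$ and $\Gset_{\pcritical_i}$. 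I would establish this by showing that the two defining conditions depend on two \emph{disjoint}, and hence independent, families of execution-time random variables.

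First I would pin down \textbf{C-1}. Under fixed-priority preemptive scheduling, whether a job of $hp(\tau_k)$ released before $\eta_{\pcritical_i}$ still executes after $\eta_{\pcritical_i}$ is determined solely by the (fixed) release pattern $\xi$ and the execution times of the $hp(\tau_k)$ jobs released before $\eta_{\pcritical_i}$ --- no lower-priority job, and in particular no job released at or after $\eta_{\pcritical_i}$, can delay their completion. Call that ``past'' set of jobs $J^{-}$; then \textbf{C-1} is a function of the execution times in $J^{-}$ only. Next I would pin down \textbf{C-2}: inequality~(\ref{e:asetworkload-1}) involves only $\C_{k,x}$ and the variables $\C_{j,y}$ with $y\geq\beta_j$, which by the definition of $\beta_j$ are precisely the execution times of $\tau_{k,x}$ and of the $hp(\tau_k)$ jobs released at or after $\eta_{\pcritical_i}$ --- the disjoint ``future'' set $J^{+}$. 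Since distinct jobs have independent execution times and $J^{-}\cap J^{+}=\varnothing$, any event determined by $J^{-}$ is independent of any event determined by $J^{+}$.

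The only subtlety is the factor $\Omega_\pcritical$ common to $\Eset_{\pcritical_i}$, $\Gset_{\pcritical_i}$ and $\Fset_{\pcritical_i}$, which a priori depends on the whole schedule. Here I would use Lemma~\ref{l9}: on \textbf{C-1} we have $t_1\geq\eta_{\pcritical_i}$, so there is zero carried-in $hp(\tau_k)$ workload at $\eta_{\pcritical_i}$, and therefore (via Lemmas~\ref{3.2} and~\ref{3.4}) the $hp(\tau_k)$ schedule from $\eta_{\pcritical_i}$ onward --- hence the value of $t_1$, hence the event $\omega\in\Omega_\pcritical$ --- is itself determined by $J^{+}$ alone. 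Thus $\Eset_{\pcritical_i}$ is the intersection of \textbf{C-1} with a $J^{+}$-event, and $\Fset_{\pcritical_i}$ the intersection of \textbf{C-1} with another $J^{+}$-event; factoring each probability across the independent families $J^{-}$ and $J^{+}$, the common factor equal to the probability of \textbf{C-1} cancels in the ratio, leaving precisely $\prob(\Gset_{\pcritical_i})$.

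The main obstacle I anticipate is exactly this last reconciliation: one must verify carefully that, once intersected with \textbf{C-1}, membership in $\Omega_\pcritical$ conveys nothing about $J^{-}$, so that the cancellation yields $\prob(\Gset_{\pcritical_i})$ itself and not a conditioned surrogate of it; and one must handle the boundary at $\eta_{\pcritical_i}$ (jobs released exactly there belong to $J^{+}$) and the gap between the iterative definition of $t_1$ and an informal ``busy-period start''. Lemmas~\ref{3.2}, \ref{3.4} and~\ref{l9} are the tools for this; the remaining bookkeeping, together with the degenerate case $\prob(\Eset_{\pcritical_i})=0$ (where the identity reads $0=0$), is routine.
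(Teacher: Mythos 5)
Your first two paragraphs reproduce the paper's proof almost verbatim: the paper observes $\Fset_{\pcritical_i}=\Eset_{\pcritical_i}\cap\Gset_{\pcritical_i}$, notes that \textbf{C-1} depends only on the execution times of $hp(\tau_k)$ jobs released \emph{before} $\eta_{\pcritical_i}$ while \textbf{C-2} depends only on $\C_{k,x}$ and the jobs released \emph{at or after} $\eta_{\pcritical_i}$, concludes that the two conditions are independent, and cancels $\prob(\Eset_{\pcritical_i})$ in the ratio. So on the core argument you and the paper coincide.

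Where you go beyond the paper is your third paragraph, and that is also where your argument does not close. The paper simply ignores the $\Omega_\pcritical$ clause in the definitions of $\Eset_{\pcritical_i},\Gset_{\pcritical_i},\Fset_{\pcritical_i}$ (consistently with its proof of Lemma \ref{Gset}, where $\prob(\Gset_{\pcritical_i})$ is equated with the probability of \textbf{C-2} alone), so for the paper independence of \textbf{C-1} and \textbf{C-2} immediately finishes the proof. Your attempted repair, taken on the literal definitions, stalls at the last step: granting (via Lemma \ref{l9} and \textbf{C-1}) that membership in $\Omega_\pcritical$ is decided by some $J^{+}$-measurable event $M$, factoring over the independent families gives $\prob(\Fset_{\pcritical_i})=\prob(\textbf{C-1})\,\prob(M\cap \textbf{C-2})$ and $\prob(\Eset_{\pcritical_i})=\prob(\textbf{C-1})\,\prob(M)$, so the ratio equals the conditional probability $\prob(\textbf{C-2}\mid M)$, not $\prob(\Gset_{\pcritical_i})=\prob(\Omega_\pcritical\cap \textbf{C-2})$ and not $\prob(\textbf{C-2})$; since $M$ and \textbf{C-2} are both $J^{+}$-events, no independence is available to remove the conditioning. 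In short: under the paper's (implicit) reading your paragraphs one and two already are the proof and the third is unnecessary; under the literal reading your final cancellation claim is a genuine gap — one that, to be fair, the paper's own one-line ``$\Eset_{\pcritical_i}$ and $\Gset_{\pcritical_i}$ are independent'' leaves equally unaddressed.
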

\begin{proof}
By the definitions of $\Eset_{\pcritical_i}$, $\Gset_{\pcritical_i}$ and $\Fset_{\pcritical_i}$, we know that $\Fset_{\pcritical_i} = \Eset_{\pcritical_i} \cap \Gset_{\pcritical_i}$, so
$\prob(\Fset_{\pcritical_i}) = \prob(\Eset_{\pcritical_i} \cap \Gset_{\pcritical_i})$.

\textbf{C-1} only depends on the jobs of $hp(\tau_k)$ released \emph{before} $\eta_{\pcritical_i}$, while \textbf{C-2} only depends on the jobs of $hp(\tau_k)$ released \emph{at or after} $\eta_{\pcritical_i}$, so \textbf{C-1} and \textbf{C-2} are independent from each other, so $\Eset_{\pcritical_i}$ and $\Gset_{\pcritical_i}$ are independent, and thus $\prob(\Eset_{\pcritical_i} \cap \Gset_{\pcritical_i}) = \prob(\Eset_{\pcritical_i})\prob(\Gset_{\pcritical_i})$.
Therefore, we have
\[\frac{\prob(\Fset_{\pcritical_i})}{\prob(\Eset_{\pcritical_i})} = \frac{\prob(\Eset_{\pcritical_i}\cap \Gset_{\pcritical_i})}{\prob(\Eset_{\pcritical_i})}=\frac{\prob(\Eset_{\pcritical_i})\prob(\Gset_{\pcritical_i})}{\prob(\Eset_{\pcritical_i})}=\prob(\Gset_{\pcritical_i})\]
\end{proof}
\begin{lemma}
\label{Gset}
For any $i \in [1,m]$, it holds \(\prob(\Gset_{\pcritical_i}) \leq \pcritical\)
\end{lemma}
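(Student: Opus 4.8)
The plan is to recall the definition of $\Gset_{\pcritical_i}$ as the set of execution time patterns $\omega \in \Omega_\pcritical$ satisfying condition \textbf{C-2}, i.e., satisfying (\ref{e:asetworkload-1}): $\forall t \in (0, D_k],\ C_{k,x}^\omega + \sum_{j<k}\sum_{y=\beta_j}^{\beta_j-1+\lceil t/T_j\rceil} C_{j,y}^\omega > t$. In other words, using the workload random variable $\W(t) = \C_{k,x} + \sum_{\tau_j \in hp(\tau_k)}\sum_{y=\beta_j}^{\beta_j-1+\lceil t/T_j\rceil}\C_{j,y}$ introduced in the proof of Theorem \ref{bound2}, we have $\prob(\Gset_{\pcritical_i}) = \prob(\forall t \in (0, D_k],\ \W(t) > t)$. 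This is essentially the same quantity already bounded inside the proof of Theorem \ref{bound2}, so the argument should be a direct reuse of that reasoning.

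The key steps, in order, would be: (i) identify $\prob(\Gset_{\pcritical_i})$ with $\prob(\forall t \in (0, D_k],\ \W(t) > t)$ directly from the definition of \textbf{C-2}; (ii) apply the Fréchet inequality (as in Theorem \ref{bound2}) to get $\prob(\forall t \in (0, D_k],\ \W(t) > t) \leq \inf_{0 < t \leq D_k} \prob(\W(t) > t)$; (iii) observe that since $\C_{k,x}$ is an independent copy of $\C_k$ and each $\C_{j,y}$ is an independent copy of $\C_j$, we may reindex/relabel so that $\sum_{y=\beta_j}^{\beta_j-1+\lceil t/T_j\rceil}\C_{j,y}$ has the same distribution as $\sum_{y=1}^{\lceil t/T_j\rceil}\C_{j,y}$ and $\C_{k,x}$ has the same distribution as $\C_{k,1}$, which does not change $\prob(\W(t)>t)$; (iv) conclude that $\inf_{0<t\leq D_k}\prob(\W(t)>t) = \inf_{0<t\leq D_k}\prob(\C_{k,1} + \sum_{\tau_j\in hp(\tau_k)}\sum_{y=1}^{\lceil t/T_j\rceil}\C_{j,y} > t)$, which by definition (\ref{e:p-critical}) equals $\pcritical$.

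I do not expect any serious obstacle here, since the heavy lifting (the interference argument establishing (\ref{e:asetworkload-1}), and the Fréchet-inequality step) was already done in Lemmas \ref{l:7}--\ref{l:asetworkload} and Theorem \ref{bound2}. The only mild subtlety worth being careful about is the independence/relabeling step (iii): one must note that $\Gset_{\pcritical_i}$ is defined over $\omega \in \Omega_\pcritical$ rather than all of $\Omega$, but since membership in $\Omega_\pcritical$ (i.e., $t_1 = \eta_{\pcritical_i}$) is \emph{not} part of condition \textbf{C-2} — indeed \textbf{C-2} is a pure statement about execution-time values — one should be careful whether $\prob(\Gset_{\pcritical_i})$ really means the probability of \textbf{C-2} holding over the whole sample space $\Omega$ or a conditional probability. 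Reading the definition literally, $\Gset_{\pcritical_i} = \{\omega \in \Omega_\pcritical : \omega \text{ satisfies } \textbf{C-2}\} \subseteq \{\omega \in \Omega : \omega \text{ satisfies } \textbf{C-2}\}$, so $\prob(\Gset_{\pcritical_i}) \leq \prob(\textbf{C-2 holds}) = \prob(\forall t \in (0,D_k],\ \W(t) > t)$, and the bound $\leq \pcritical$ follows a fortiori; this is the cleanest way to phrase it and avoids any circularity with the independence claim used in Lemma \ref{l:3}.

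\begin{proof}
By definition, $\Gset_{\pcritical_i} \subseteq \{\omega \in \Omega ~|~ \omega \text{ satisfies }\textbf{C-2}\}$, and \textbf{C-2} is exactly the condition (\ref{e:asetworkload-1}). Hence, with $\W(t) := \C_{k,x} + \sum_{\tau_j \in hp(\tau_k)}\sum_{y=\beta_j}^{\beta_j-1+\lceil \frac{t}{T_j}\rceil}\C_{j,y}$ as in the proof of Theorem \ref{bound2},
\[
\prob(\Gset_{\pcritical_i}) \leq \prob(\forall t \in (0, D_k],\ \W(t) > t).
\]
By the Fréchet inequality, $\prob(\forall t \in (0, D_k],\ \W(t) > t) \leq \inf_{0 < t \leq D_k}\prob(\W(t) > t)$. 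Since $\C_{k,x}$ is an independent copy of $\C_k$ and each $\C_{j,y}$ is an independent copy of $\C_j$, replacing $\C_{k,x}$ by $\C_{k,1}$ and $\sum_{y=\beta_j}^{\beta_j-1+\lceil \frac{t}{T_j}\rceil}\C_{j,y}$ by $\sum_{y=1}^{\lceil \frac{t}{T_j}\rceil}\C_{j,y}$ does not change $\prob(\W(t)>t)$, so
\[
\inf_{0 < t \leq D_k}\prob(\W(t) > t) = \inf_{0 < t \leq D_k}\prob\Big(\C_{k,1} + \!\!\!\!\sum_{\tau_j \in hp(\tau_k)}\!\sum_{y=1}^{\lceil \frac{t}{T_j}\rceil}\C_{j,y} > t\Big) = \pcritical,
\]
by (\ref{e:p-critical}). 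Combining the displayed inequalities gives $\prob(\Gset_{\pcritical_i}) \leq \pcritical$.
\end{proof}
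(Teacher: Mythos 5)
Your proof is correct and follows essentially the same route as the paper: identify \(\Gset_{\pcritical_i}\) with condition \textbf{C-2}, i.e.\ (\ref{e:asetworkload-1}), and then reuse the Fr\'echet-inequality and i.i.d.-relabeling argument from Theorem \ref{bound2} to conclude the bound by \(\pcritical\) (the paper simply cites Theorem \ref{bound2} for that last step rather than re-deriving it). Your use of the subset inequality \(\prob(\Gset_{\pcritical_i}) \leq \prob(\forall t,\ \W(t)>t)\) instead of the paper's stated equality is a harmless (indeed slightly more careful) variation that does not change the argument.
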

\begin{proof}
  
  We define 
   \[\W(t):=\C_{k,x}+\sum_{\tau_j \in hp(\tau_k)}\sum _{y=\beta_j}^{\beta_j-1+\lceil \frac{t}{T_j}\rceil}\C_{j,y} \]
For any \(\omega\in \Gset_{\pcritical_i}\), it holds (\ref{e:asetworkload-1}), so
 \begin{equation*}
  \prob(\Gset_{\pcritical_i}) = \prob(\forall t \in (0, D_k], \W(t)>t)
\end{equation*}
By Theorem \ref{bound2},
\(\prob(\forall t \in (0, D_k], \W(t)>t)\leq \pcritical\)
\end{proof}

\begin{lemma}
  \label{3.7}
For any $i \in [1,m]$, it holds
\[\prob(\Aset_{\pcritical_i})/(\prob(\Bset_{\pcritical_i}) + \cdots + \prob(\Bset_{\pcritical_m}))\leq \pcritical\]
\end{lemma}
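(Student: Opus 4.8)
The plan is simply to chain together the four preceding lemmas (\ref{n3}, \ref{EandB}, \ref{l:3}, \ref{Gset}), since each was stated precisely so that their composition yields this statement. First I would dispose of the degenerate case: if the denominator $\prob(\Bset_{\pcritical_i}) + \cdots + \prob(\Bset_{\pcritical_m})$ is zero, then by Lemma \ref{EandB} we get $\prob(\Eset_{\pcritical_i}) = 0$; since $\Fset_{\pcritical_i} \subseteq \Eset_{\pcritical_i}$ this forces $\prob(\Fset_{\pcritical_i}) = 0$, and then Lemma \ref{n3} gives $\prob(\Aset_{\pcritical_i}) = 0$. In that situation the claimed bound is read as the trivially true inequality $\prob(\Aset_{\pcritical_i}) \leq \pcritical \cdot \big(\prob(\Bset_{\pcritical_i}) + \cdots + \prob(\Bset_{\pcritical_m})\big)$ with both sides equal to $0$. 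Hence I may assume the denominator, and therefore also $\prob(\Eset_{\pcritical_i})$, is strictly positive.

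Next I would assemble the inequality chain. By Lemma \ref{n3}, $\prob(\Aset_{\pcritical_i}) \leq \prob(\Fset_{\pcritical_i})$, so enlarging the numerator from $\prob(\Aset_{\pcritical_i})$ to $\prob(\Fset_{\pcritical_i})$ only increases the fraction. By Lemma \ref{EandB}, $\prob(\Eset_{\pcritical_i}) \leq \prob(\Bset_{\pcritical_i}) + \cdots + \prob(\Bset_{\pcritical_m})$, so shrinking the denominator to $\prob(\Eset_{\pcritical_i})$ again only increases the fraction. Putting these together,
\[
\frac{\prob(\Aset_{\pcritical_i})}{\prob(\Bset_{\pcritical_i}) + \cdots + \prob(\Bset_{\pcritical_m})} \leq \frac{\prob(\Fset_{\pcritical_i})}{\prob(\Eset_{\pcritical_i})} = \prob(\Gset_{\pcritical_i}) \leq \pcritical,
\]
where the middle equality is Lemma \ref{l:3} and the final inequality is Lemma \ref{Gset}. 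This completes the argument.

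I do not expect a genuine obstacle, as all the substantive work lies in the earlier lemmas — in particular the independence of conditions \textbf{C-1} and \textbf{C-2} exploited in Lemma \ref{l:3}, and the reduction of $\prob(\Gset_{\pcritical_i})$ to $\pcritical$ via Theorem \ref{bound2} in Lemma \ref{Gset}. The only points demanding a little care are (i) keeping the two inequalities pointing the right way (numerator up, denominator down, both pushing the ratio upward, which is exactly what is needed) and (ii) the zero-probability case handled above, which is the sole place where the bare ratio form of the statement would otherwise be ill-defined.
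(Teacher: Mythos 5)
Your proof is correct and follows exactly the paper's route: the paper proves this lemma simply by combining Lemmas \ref{n3}, \ref{EandB}, \ref{l:3} and \ref{Gset} into the same chain $\prob(\Aset_{\pcritical_i})/(\prob(\Bset_{\pcritical_i})+\cdots+\prob(\Bset_{\pcritical_m})) \leq \prob(\Fset_{\pcritical_i})/\prob(\Eset_{\pcritical_i}) = \prob(\Gset_{\pcritical_i}) \leq \pcritical$. Your explicit treatment of the zero-denominator case is a small extra precaution the paper omits, but it does not change the argument.
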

\begin{proof}
Proved by combining Lemma \ref{n3},
\ref{EandB}, \ref{l:3} and \ref{Gset}.
\end{proof}

Using Lemma \ref{3.7}, we are now ready to derive an upper bound for $\sum_{i=1}^m\prob(\Aset_{\pcritical_i})$.
\begin{lemma}
   \label{3.8}
     \[\sum_{i=1}^m\prob(\Aset_{\pcritical_i})\leq 1-(1-\pcritical)^m\]

 \end{lemma}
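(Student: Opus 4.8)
The plan is to prove $\sum_{i=1}^m \prob(\Aset_{\pcritical_i}) \leq 1 - (1-\pcritical)^m$ by induction on $m$, using Lemma \ref{3.7} as the crucial per-term estimate together with the fact that $\sum_{i=1}^m \prob(\Bset_{\pcritical_i}) = \prob(\Omega_\pcritical) \leq 1$. The key quantities to track are the partial ``tail sums'' $S_i := \prob(\Bset_{\pcritical_i}) + \cdots + \prob(\Bset_{\pcritical_m})$, which satisfy $S_1 \leq 1$, $S_{i+1} = S_i - \prob(\Bset_{\pcritical_i})$, and $S_{m+1} = 0$. Lemma \ref{3.7} rewrites as $\prob(\Aset_{\pcritical_i}) \leq \pcritical \, S_i$.

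First I would set up the induction on the number of remaining terms. Concretely, I claim that for every $\ell \in [0, m]$,
\begin{equation*}
\sum_{i=m-\ell+1}^{m} \prob(\Aset_{\pcritical_i}) \leq S_{m-\ell+1}\bigl(1 - (1-\pcritical)^{\ell}\bigr).
\end{equation*}
The base case $\ell = 0$ is the empty sum $0 \leq 0$. For the induction step, suppose the bound holds for $\ell$; consider $\ell+1$ and let $p := m-\ell$, so we must bound $\prob(\Aset_{\pcritical_p}) + \sum_{i=p+1}^m \prob(\Aset_{\pcritical_i})$. By Lemma \ref{3.7}, $\prob(\Aset_{\pcritical_p}) \leq \pcritical S_p$, and by the induction hypothesis the remaining sum is at most $S_{p+1}(1 - (1-\pcritical)^{\ell})$. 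Since $S_{p+1} = S_p - \prob(\Bset_{\pcritical_p}) \leq S_p$, this is at most $\pcritical S_p + S_p(1-(1-\pcritical)^\ell)(1) $ — wait, I should be more careful: using $S_{p+1} \leq S_p$ gives the sum $\leq \pcritical S_p + S_p(1 - (1-\pcritical)^\ell) = S_p(1 - (1-\pcritical)^\ell + \pcritical) = S_p(1 - (1-\pcritical)^\ell(1 - \pcritical) ) $ — hmm, that is not quite $S_p(1 - (1-\pcritical)^{\ell+1})$. The cleaner route is to combine differently: $\pcritical S_p + S_{p+1}(1 - (1-\pcritical)^\ell) = \pcritical S_p + S_{p+1} - S_{p+1}(1-\pcritical)^\ell \leq \pcritical S_p + (S_p - \prob(\Bset_{\pcritical_p})) - 0$; this is getting messy, so the real work is picking the right form of the inductive statement.

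Actually the slickest approach is a direct telescoping argument rather than the weighted form. Write $q := 1 - \pcritical$ and let $M_\ell := 1 - \sum_{i=m-\ell+1}^m \prob(\Aset_{\pcritical_i})$, the ``meeting-deadline'' lower bound after accounting for the last $\ell$ starting points. I would show $M_\ell \geq q^\ell$ by induction: $M_0 = 1 = q^0$, and $M_{\ell+1} = M_\ell - \prob(\Aset_{\pcritical_p})$ with $p = m-\ell$; by Lemma \ref{3.7}, $\prob(\Aset_{\pcritical_p}) \leq \pcritical S_p$, and one needs $S_p \leq M_\ell$, i.e., the tail sum of $\prob(\Bset)$'s is at most the current meeting probability. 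That last inequality is the genuine obstacle — I expect to need $\prob(\Bset_{\pcritical_i}) \leq \prob(\Aset_{\pcritical_i})$... no, that's false in general. Instead note $S_p = \sum_{i=p}^m \prob(\Bset_{\pcritical_i}) \leq \prob(\Omega_\pcritical) - \sum_{i=1}^{p-1}\prob(\Bset_{\pcritical_i}) \leq 1$, and separately $1 - \sum_{i=p+1}^m \prob(\Aset_{\pcritical_i}) \geq 1 - \pcritical\sum_{i=p+1}^m S_i \geq \ldots$; so I would bound $M_{\ell+1} = M_\ell - \prob(\Aset_{\pcritical_p}) \geq M_\ell - \pcritical S_p \geq M_\ell - \pcritical = M_\ell(1) - \pcritical$, which only gives $M_{\ell+1} \geq M_\ell - \pcritical$, not $q M_\ell$. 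The resolution is that Lemma \ref{3.7} is genuinely a \emph{conditional} bound and one must peel off starting points in the order $i = 1, 2, \ldots, m$ (smallest $\eta$ first, largest tail first), using at each step $\prob(\Aset_{\pcritical_i}) \leq \pcritical(S_i) $ where $S_i$ shrinks; the correct invariant is $\sum_{i=1}^{\ell}\prob(\Aset_{\pcritical_i}) \leq (S_1 - S_{\ell+1}) \cdot \frac{1 - q^\ell}{1 - q^\ell}$... I will instead carry the invariant $1 - \sum_{i=1}^\ell \prob(\Aset_{\pcritical_i}) \geq (1-\pcritical)^\ell + (1-\pcritical)^{\ell} \cdot(\text{something nonneg})$, or more robustly prove the equivalent statement $\sum_{i=1}^{\ell} \prob(\Aset_{\pcritical_i}) \leq 1 - (1 - \pcritical)^{\ell} (1 - (S_1 - S_{\ell+1})) $ — this is the technical heart, and closing it cleanly (which combination of $S_i \leq 1$ and $\sum S_i$ bookkeeping makes the geometric factor appear) is where I'd spend the effort; once the invariant is right, at $\ell = m$ we have $S_{m+1} = 0$, $S_1 \leq 1$, and the bound collapses exactly to $1 - (1-\pcritical)^m$.
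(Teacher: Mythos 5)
Your attempt contains a genuine gap: none of the three routes you sketch is carried to completion, and the final paragraph explicitly leaves the induction step open (``this is the technical heart \ldots is where I'd spend the effort''). The missing idea is that Lemma \ref{3.7} alone is not enough to make the geometric factor appear; you also need the elementary containment $\Aset_{\pcritical_i}\subseteq\Bset_{\pcritical_i}$, hence $\prob(\Aset_{\pcritical_i})\leq\prob(\Bset_{\pcritical_i})$, and you must use whichever of the two bounds is smaller at each peeling step. Your very first invariant is in fact the correct one: with $S_p=\prob(\Bset_{\pcritical_p})+\cdots+\prob(\Bset_{\pcritical_m})$, the claim $\sum_{i=p}^m\prob(\Aset_{\pcritical_i})\leq S_p\bigl(1-(1-\pcritical)^{\ell}\bigr)$ (for $\ell=m-p+1$) does close, but only by a case split on the size of $b:=\prob(\Bset_{\pcritical_p})$ relative to $s:=S_{p+1}$: if $b\geq\frac{\pcritical}{1-\pcritical}s$ use $\prob(\Aset_{\pcritical_p})\leq\pcritical S_p$, and if $b<\frac{\pcritical}{1-\pcritical}s$ use $\prob(\Aset_{\pcritical_p})\leq b$; in each case a short computation gives
\begin{equation*}
\min\{\pcritical(b+s),\,b\}+s\bigl(1-(1-\pcritical)^{\ell}\bigr)\;\leq\;(b+s)\bigl(1-(1-\pcritical)^{\ell+1}\bigr),
\end{equation*}
which is exactly the inductive step you could not obtain from $S_{p+1}\leq S_p$ alone (your own computation correctly shows that using only $\prob(\Aset_{\pcritical_p})\leq\pcritical S_p$ yields $S_p(1+\pcritical-(1-\pcritical)^{\ell})$, which exceeds the target). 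This is precisely what the paper does: it studies the ratio $f(z)$ of the tail sums, parametrizes the split by $\lambda=b/s$, uses both inequalities, optimizes over $\lambda$ in three cases to get the recursion $f(z)\leq(1-\pcritical)f(z-1)+\pcritical$, and then concludes $f(m)\leq 1-(1-\pcritical)^m$ together with $\sum_{i=1}^m\prob(\Bset_{\pcritical_i})\leq 1$.

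Two further points you would also need to handle to make the ratio formulation rigorous: the degenerate values $\pcritical=0$ and $\pcritical=1$ (the paper treats them separately before assuming $0<\pcritical<1$), and the possibility of zero denominators in the tail sums of the $\prob(\Bset_{\pcritical_i})$, which the quantity $\lambda$ and your $S_p$ both implicitly divide by. So the approach you started with is salvageable and is essentially the paper's argument in disguise, but as written the proof is incomplete at its central step.
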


\begin{proof}
It is clear that $0 \leq \pcritical \leq 1$. We first prove the lemma for the two simple corner cases $ \pcritical = 0$ and  $\pcritical = 1$, and then prove it for the difficult case $0 < \pcritical < 1$.

If $ \pcritical = 0$, then by Lemma \ref{3.7} we know $\forall i \in [1,m], \prob(\Aset_{\pcritical_i}) = 0$, 
which implies $\sum_{i=1}^m\prob(\Aset_{\pcritical_i})= 1-(1-\pcritical)^m = 0$, so the lemma holds.

If $ \pcritical = 1$, then $1-(1-\pcritical)^m = 1$, so the lemma also holds (since   
$\sum_{i=1}^m\prob(\Aset_{\pcritical_i})\leq 1-(1-\pcritical)^m)$ must be no larger than $1$).

Next, we focus on the case $0 < \pcritical < 1$.
We define 
    \[f(z):=\frac{\sum_{i=m-z+1}^m \prob(\Aset_{\pcritical_i})}{\sum_{i=m-z+1}^m \prob(\Bset_{\pcritical_i})}\]
Note that 
\[f(m) = \frac{\sum_{i=1}^m \prob(\Aset_{\pcritical_i})}{\sum_{i=1}^m \prob(\Bset_{\pcritical_i})} \geq \sum_{i=1}^m \prob(\Aset_{\pcritical_i}) \]
is the target that we want to upper-bound (recall that \(\sum_{i=1}^m \prob(\Bset_{\pcritical_i})\leq 1\)).

By Lemma \ref{3.7} we have \(\prob(\Aset_{\pcritical_i})\leq \pcritical * \sum_{j=i}^m \prob(\Bset_{\pcritical_j})\).

Applying this to $f(z)$:
 \begin{equation}
 \label{e3}
 \begin{split}
     f(z) & = \frac{\sum_{i=m-z+2}^m \prob(\Aset_{\pcritical_i})+ \prob(\Aset_{\pcritical_{m-z+1}})}{\sum_{i=m-z+1}^m \prob(\Bset_{\pcritical_i})} \\
     & \leq \frac{\sum_{i=m-z+2}^m \prob(\Aset_{\pcritical_i})}{\sum_{i=m-z+1}^m \prob(\Bset_{\pcritical_i})} + \pcritical
    \end{split}
 \end{equation}
We further define
\begin{equation}
 \label{e51}
\lambda = \frac{\prob(\Bset_{\pcritical_{m-z+1}})}{\sum_{i=m-z+2}^m\prob(\Bset_{\pcritical_i})}
\end{equation}
which implies
\begin{equation} \label{e52}
 \begin{split}
\sum_{i=m-z+1}^m \prob(\Bset_{\pcritical_i}) & = \prob(\Bset_{\pcritical_{m-z+1}}) + \sum_{i=m-z+2}^m \prob(\Bset_{\pcritical_i}) \\
& = (1+\lambda )\sum_{i=m-z+2}^m \prob(\Bset_{\pcritical_i})
\end{split}
\end{equation}
By combining (\ref{e52}) and (\ref{e3}) we have
 \begin{equation}
 \label{e53}
 \begin{split}
     f(z) & \leq \frac{\sum_{i=m-z+2}^m \prob(\Aset_{\pcritical_i}) }{\sum_{i=n-z+1}^m \prob(\Bset_{\pcritical_i})} + \pcritical \\
     & = \frac{ \sum_{i=m-z+2}^m \prob(\Aset_{\pcritical_i})}{(1+\lambda )\sum_{i=m-z+2}^m \prob(\Bset_{\pcritical_i})} + \pcritical\\
     & = \frac{f(z-1)}{1+\lambda} + \pcritical
    \end{split}
 \end{equation}
On the other hand, by the definition of $\Aset_{\pcritical_i}$ and $\Bset_{\pcritical_i}$, 
we know $\Aset_{\pcritical_{m-z+1}} \subseteq \Bset_{\pcritical_{m-z+1}}$, so 
$\prob(\Aset_{\pcritical_{m-z+1}})\leq\prob(\Bset_{\pcritical_{m-z+1}})$. 
Applying this to $f(z)$:
 \begin{equation}
 \label{e4}
     \begin{split}
     f(z) & = \frac{\sum_{i=m-z+2}^m \prob(\Aset_{\pcritical_i})+ \prob(\Aset_{\pcritical_{m-z+1}})}{\sum_{i=m-z+1}^m \prob(\Bset_{\pcritical_i})} \\
     & \leq\frac{\sum_{i=m-z+2}^m \prob(\Aset_{\pcritical_i}) + \prob(\Bset_{\pcritical_{m-z+1}})}{\sum_{i=m-z+1}^m \prob(\Bset_{\pcritical_i})} \\
     & = \frac{\sum_{i=m-z+2}^m \prob(\Aset_{\pcritical_i}) + \lambda \sum_{i=m-z+2}^m \prob(\Bset_{\pcritical_i})}{(1+\lambda )\sum_{i=m-z+2}^m \prob(\Bset_{\pcritical_i})} \\
     & = \frac{f(z-1)}{1+\lambda} +\frac{\lambda}{1+ \lambda}
     \end{split}
 \end{equation}

By combining (\ref{e53}) and (\ref{e4}) we have
\begin{equation}\label{e6}
    f(z) \leq  \frac{f(z-1)}{1+\lambda}  +\min \{\pcritical, \ \frac{\lambda}{1+\lambda}\}
 \end{equation}
 
Next, we will classify and discuss based on the size relationship between \(\pcritical\) and \(\frac{\lambda}{1+\lambda}\).
\begin{itemize}

\item \(\pcritical=\frac{\lambda}{1+\lambda}\).
In this case, \(\lambda=\frac{\pcritical}{1-\pcritical}\). By (\ref{e6}), 
\begin{equation*}
\begin{split}
f(z) \leq \frac{f(z-1)}{1 + \frac{\pcritical}{1 - \pcritical}} + \pcritical = (1-\pcritical) f(z-1) + \pcritical
\end{split}
\end{equation*}

\item \(\pcritical<\frac{\lambda}{1+\lambda}\).
In this case, \(\lambda>\frac{\pcritical}{1-\pcritical}\). By (\ref{e6}),
\[f(z)\leq \frac{f(z-1)}{(1+\lambda)}+\pcritical\]
Let $g(\lambda)=\frac{f(z-1)}{(1+\lambda)}+\pcritical$.
Differentiating \(g(\lambda)\):
\[g'(\lambda)=-\frac{f(z-1)}{(1+\lambda)^2}\leq 0\]
So \(g(\lambda)\) is a non-increasing function regarding $\lambda$ and \[f(z) \leq g(\frac{\pcritical}{1-\pcritical})=
\frac{f(z-1)}{1+\frac{\pcritical}{1-\pcritical}} + \pcritical
=(1-\pcritical)f(z-1)+\pcritical\]

\item \(\pcritical>\frac{\lambda}{1+\lambda}\).
In this case, \(\lambda<\frac{\pcritical}{1-\pcritical}\). By (\ref{e6}), \[f(z)\leq \frac{f(z-1)+\lambda}{(1+\lambda)}\]
Let \(h(\lambda)=\frac{f(z-1)+\lambda}{(1+\lambda)}\). Differentiating \(h(\lambda)\):
\[h'(\lambda)=\frac{1-f(z-1)}{(1+\lambda)^2}\geq 0\]
So \(h(\lambda)\) is a non-decreasing function regarding $\lambda$ and 
\[f(z)\!\leq\!h(\frac{\pcritical}{1\!-\!\pcritical})
= \frac{f(z\!-\!1) + \frac{\pcritical}{1-\pcritical}}{1 + \frac{\pcritical}{1-\pcritical}}
=(1\!-\!\pcritical)f(z\!-\!1)+\pcritical\]
\end{itemize}
In summary, in all the above three cases we have proved
\begin{equation}
\label{recurrence}
f(z)\leq (1 -\pcritical)f(z-1)+\pcritical
\end{equation}
Next, we apply (\ref{recurrence}) to prove $f(m) \leq 1 - (1 - \pcritical)^m$ by induction:
\begin{itemize}
\item \textbf{Base case}. 
\[f(1)=\frac{\prob(\Aset_{\pcritical_m})}{\prob(\Bset_{\pcritical_m})}\leq \pcritical=1-(1-\pcritical)\] 

\item \textbf{Induction step}. Assume \(f(z-1)\leq 1-(1-\pcritical)^{z-1}\), then by 
(\ref{recurrence}) we have
\[f(z)\leq (1-\pcritical)(1-(1-\pcritical)^{z-1})+\pcritical=1-(1-\pcritical)^{z}\]
\end{itemize}
Finally, we can conclude 
\[
\sum_{i=1}^m\prob(\Aset_{\pcritical_i})\leq f(m)\leq 1-(1-\pcritical)^m
\]
\end{proof}

\begin{theorem}
    \label{improved}
    
    \begin{equation*}
     \textrm{WCDFP}_{k}
  \! \leq\!\!\!\!\!\sum_{\Aset_i^j\in B1}\!\!\!\prob(\mathrm{BW}^{\xi^*}\!\!\geq (j-1)T_i\!+\!D_k)\!+\!1\!-\!(1\!-\!\pcritical)^m
\end{equation*}
where \(m\) is the number of \(\Aset_p^q \in B2\)
\end{theorem}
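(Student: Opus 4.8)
The plan is to assemble this bound directly from \textbf{Theorem \ref{3.9}} and \textbf{Lemma \ref{3.8}}, treating the former as providing the term-by-term decomposition and the latter as the nontrivial ingredient that collapses the $B2$ contribution.

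First I would invoke \textbf{Theorem \ref{3.9}} together with the rewriting in (\ref{WCDFP2}): since every $\Aset_j^i$ belongs to exactly one of $B1$ or $B2$, and since for $\Aset_i^j \in B1$ the applicable bound is $\prob(\mathrm{BW}^{\xi^*}\geq (j-1)T_i+D_k)$ (this being the smaller of the two bounds from Theorems \ref{bound1} and \ref{bound2} by the definition of $B1$), we obtain
\[
\textrm{WCDFP}_{k} \leq \sum_{\Aset_i^j\in B1}\prob(\mathrm{BW}^{\xi^*}\geq (j-1)T_i+D_k) + \sum_{\Aset_p^q\in B2}\prob(\Aset_p^q).
\]
Next I would recall that the elements of $B2$ were reindexed in ascending order of their busy-period starting points as $\Aset_{\pcritical_1},\dots,\Aset_{\pcritical_m}$, so that $\sum_{\Aset_p^q\in B2}\prob(\Aset_p^q) = \sum_{i=1}^m \prob(\Aset_{\pcritical_i})$. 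Then I apply \textbf{Lemma \ref{3.8}} to get $\sum_{i=1}^m \prob(\Aset_{\pcritical_i}) \leq 1-(1-\pcritical)^m$, and substituting this into the displayed inequality yields exactly the claimed bound.

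The only point requiring a word of care is that the $B2$ part, as reindexed, must genuinely be the object to which Lemma \ref{3.8} applies — i.e.\ the sets $\Bset_{\pcritical_i}$, $\Eset_{\pcritical_i}$, $\Gset_{\pcritical_i}$, $\Fset_{\pcritical_i}$ used in that lemma were all defined relative to this same ascending enumeration $\eta_{\pcritical_1}\leq\cdots\leq\eta_{\pcritical_m}$ of the $B2$ starting points, so no relabelling mismatch arises. I would also note explicitly that Lemma \ref{3.8} is stated for an arbitrary common value $\pcritical$ of the per-set bound, which is precisely the situation in $B2$ (every $\Aset_p^q\in B2$ uses $\pcritical$ as its bound by Theorem \ref{bound2}), so the hypothesis is met. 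Since $\tau_{k,x}$ and $\xi$ were arbitrary throughout, the resulting inequality is a bound on $\textrm{WCDFP}_k$ rather than merely on a single $\textrm{DFP}_{k,x}^\xi$, completing the proof.

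There is essentially no new obstacle here: all the analytical difficulty — the Fréchet-inequality step, the interference argument behind Theorem \ref{bound2}, and especially the recursive estimate $f(z)\leq(1-\pcritical)f(z-1)+\pcritical$ driving Lemma \ref{3.8} — has already been discharged. The remaining work is bookkeeping: confirming the $B1$/$B2$ partition is exhaustive and disjoint, that the minimum of the two bounds is being used on each piece, and that the reindexing is consistent.
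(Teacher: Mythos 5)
Your proposal is correct and follows essentially the same route as the paper's own proof: apply the decomposition from Theorem \ref{3.9} (in the form (\ref{WCDFP2})) to split the bound into the $B1$ and $B2$ parts, then invoke Lemma \ref{3.8} to replace $\sum_{\Aset_p^q\in B2}\prob(\Aset_p^q)$ by $1-(1-\pcritical)^m$. Your additional bookkeeping remarks (consistency of the reindexing and applicability of Lemma \ref{3.8} to the $B2$ sets) are sound and do not change the argument.
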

\begin{proof}
    By (\ref{WCDFP2}), 
    \[
    \begin{split}
   \textrm{WCDFP}_{k} &\leq \!\!\!\!
    \sum_{\Aset_i^j\in B1}\!\!\!\prob(\Aset_i^j)\!+\!\!\!\!\sum_{\Aset_p^q\in B2}\!\!\!\prob(\Aset_p^q) \\
  & =\!\!\!\!\sum_{\Aset_i^j\in B1}\!\!\!\prob(\textrm{BW}^{\xi^*}\!\!\geq (j-1)T_i+D_k)+\!\!\!\!\sum_{\Aset_p^q\in B2}\!\!\!\prob(\Aset_p^q)
     \end{split}
    \]
    By Lemma \ref{3.8},\(\sum_{\Aset_p^q\in B2}\prob(\Aset_p^q)\leq 1-(1-\pcritical)^m\)
\end{proof}

  The specific \(\textrm{WCDFP}_{k}\) calculation is shown in Algorithm \ref{alg1}.

\begin{algorithm}
\caption{Computing \(\textrm{WCDFP}_{k}\)}
\begin{algorithmic}[1]
\State \textbf{Input:} Task set $\mathcal{T}$
\State  \textbf{Output:} \(\textrm{WCDFP}_{k}\)
\State Calculate \(\textrm{BW}^{\xi^*}\) and \(\pcritical\)
\State \(\textrm{WCDFP}_{k}=0\), \(m=1\)
\For { \(i=1\) to \(k-1\)}
    \State \(n_i=\lceil \frac{\sum_{j=k-1}^i D_j}{T_i}\rceil\)
    \For{\(j=1\) to \(n_i\)}
        \If{\(\prob(\textrm{BW}^{\xi^*}\geq (j-1)T_i+D_k)<\pcritical\)}
            \State \small \(\textrm{WCDFP}_{k}\!\!=\!\!\textrm{WCDFP}_{k}\!+\!\prob(\textrm{BW}^{\xi^*}\!\!\!\geq (j-1)T_i+D_k)\)
        \Else \State \(m=m+1\)
      \EndIf
   \EndFor
\EndFor
\State \(\textrm{WCDFP}_{k}=\textrm{WCDFP}_{k}+1-(1-\pcritical)^m\)
\end{algorithmic}
\label{alg1}
\end{algorithm}

\section{Evaluations}\label{s:evaluation}
\begin{figure*}[h!] 
    \centering
	 \subfloat[{\( U=60\%, T\in [1,100] \)}]{
        \includegraphics[width=0.22\linewidth]{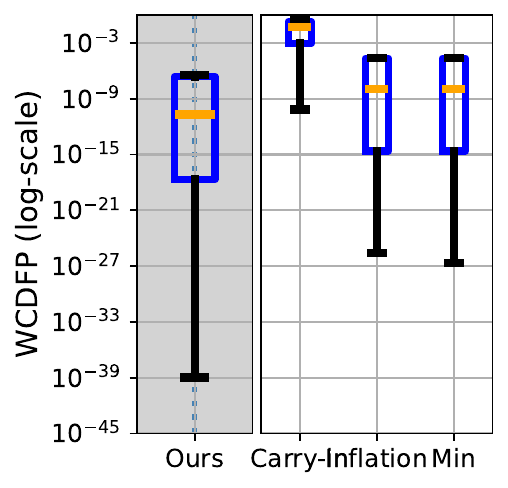}}
    \label{conv_a}\hfill
	  \subfloat[{\( U=60\%,T\in [1,10]\)}]{
        \includegraphics[width=0.22\linewidth]{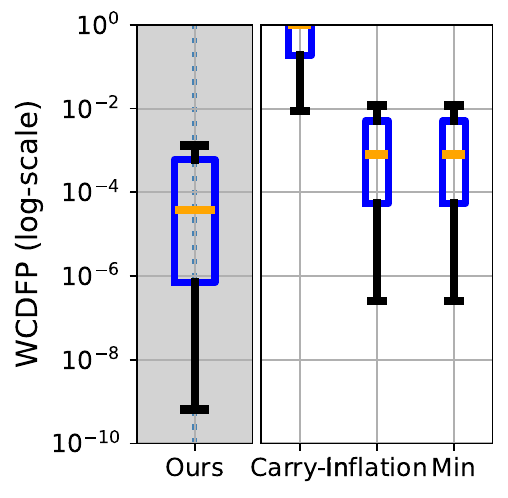}}
    \label{conv_b}\hfill
	  \subfloat[{\( U=80\%,T\in[1,100]\)}]{
        \includegraphics[width=0.22\linewidth]{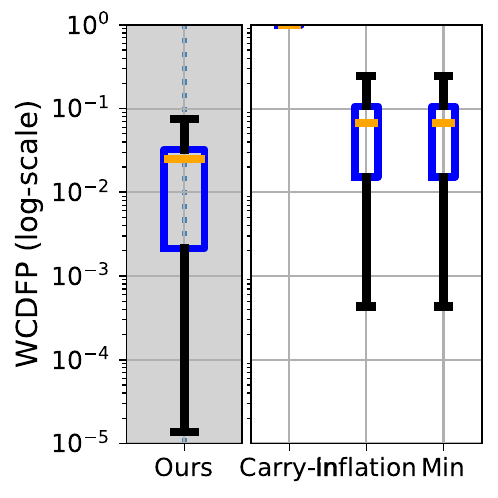}}
    \label{conv_c}\hfill
    \subfloat[{\( U=80\%, T\in[1,10]\)}]{
        \includegraphics[width=0.22\linewidth]{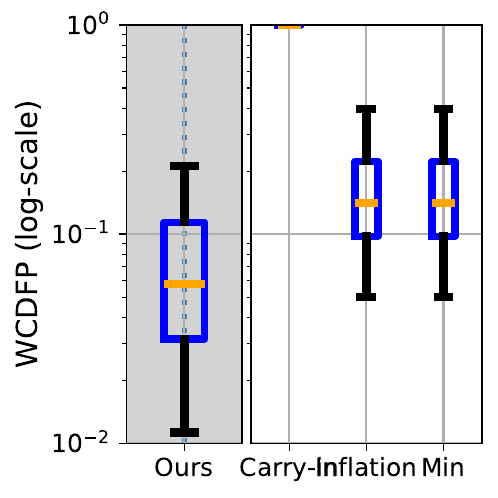}}
    \label{conv_d}\\
	  \caption{Experiments using convolution. \(U_{sum}\) is \(60\%\) or \(80\%\). The task number is $5$. Period $T$ range is $[1, 100]$ or $[1,10]$.} 
	  \label{conv} 
\end{figure*}

\begin{figure*}[h!] 
    \centering
	 \subfloat[\(n=5, U=60\% \)]{\includegraphics[width=0.22\linewidth]{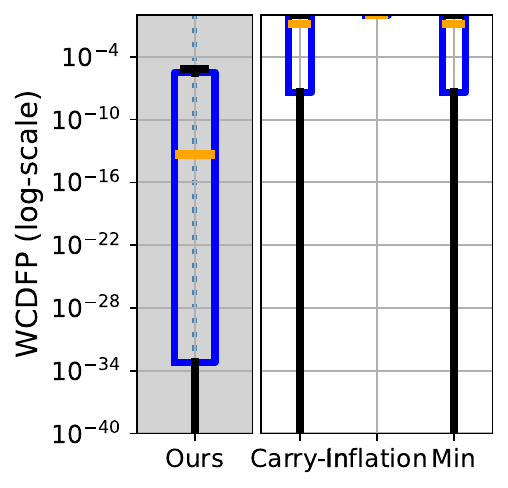}}
    \label{CB15a}\hfill
	  \subfloat[\(n=5, U=45\% \)]{
        \includegraphics[width=0.22\linewidth]{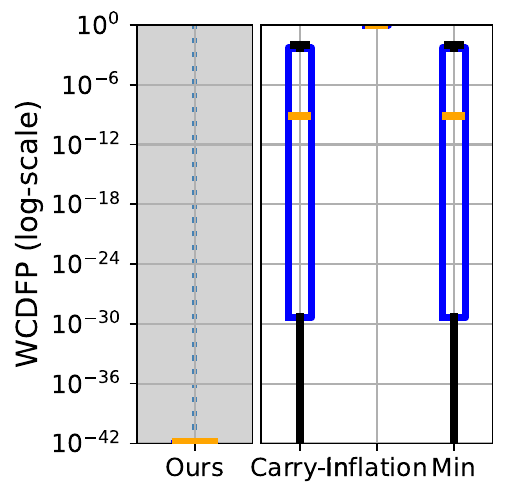}}
    \label{CB25a}\hfill
	  \subfloat[\(n=15, U=45\% \)]{
        \includegraphics[width=0.22\linewidth]{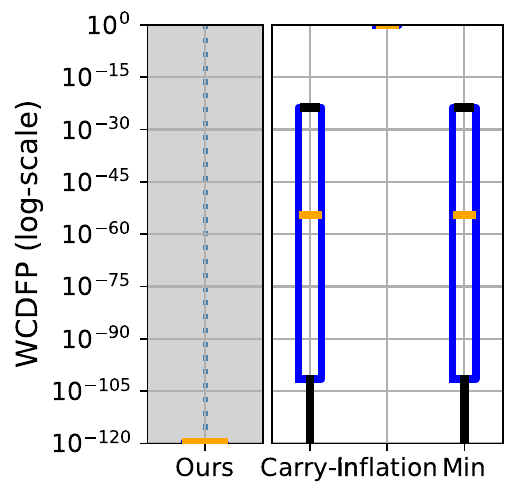}}
    \label{CB15b}\hfill
    \subfloat[\(n=25, U=45\% \)]{
        \includegraphics[width=0.22\linewidth]{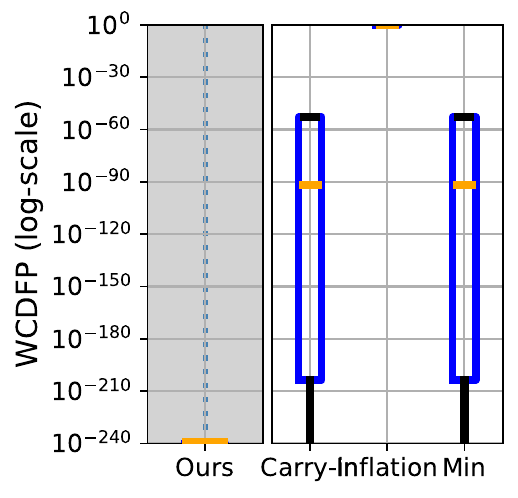}}
    \label{CB25b}\\
	  \caption{Experiments using Chernoff Bound. \(U_{sum}\) is \(60\%\) or \(45\%\). The task number $n$ is $5$ or $15$ or $25$. Period $T$ range is $[1, 100]$.}
	  \label{CB15} 
\end{figure*}
This section reports evaluations of our new method compared with the state-of-the-art. Specifically, we compare the analysis results by the following methods:
\begin{itemize}
\item \textbf{Ours}: Our new WCDFP bound in Theorem \ref{improved}.
\item \textbf{Carry-in}: The WCDFP bound based on carry-in jobs \cite{CJJ}.
\item \textbf{Inflation}: The WCDFP bound based on execution time inflation \cite{CJJ}.
\item \textbf{Min}: Choose the better (smaller) result between \textbf{Carry-in} and \textbf{Inflation} for each task.
\end{itemize}
For each of them, we calculate the WCDFP bound 
by two approaches: the convolution-based approach \cite{von2018} (without \emph{resampling} \cite{re}, i.e., not losing any precision during the calculation procedure) and the Chernoff-bound-based approach \cite{CB}.

\subsection{Experiments with the Same Setting as \cite{CJJ}}
For a fair comparison with the state-of-the-art techniques developed in \cite{CJJ}, we 
conduct experiments using exactly the same parameter settings as in \cite{CJJ}.
The implicit-deadline task sets are generated by the UUniFast method \cite{UU} to determine the \emph{utilization} $U_i$ of each task based on the settled sum of utilization \(U_{sum}=\sum_{\tau_i \in \mathcal{T}}U_i\). The periods are generated using the log-uniform distribution \cite{2010}. Each task has two possible execution times, the normal execution $c_i^N$ with probability $p_i^N$ and the abnormal execution $c_i^A$ with probability $p_i^A$, i.e.,
$\C_i=\begin{pNiceMatrix}[small]
c_i^N & c_i^A  \\
p_i^N & p_i^A \end{pNiceMatrix}$.
The normal execution time \(c_i^N\) is set according to the task's utilization: \(c_i^N=U_i  \times T_i\). The abnormal execution time is set by \(c_i^A=r \times c_i^N\), where $r = 1.83$. Moreover, \(p_i^A=0.025\)  and \(p_i^N =1-p_i^A\).
The task set is scheduled by the RM (Rate-Monotonic) scheduling algorithm \cite{liu}, i.e., the shorter period, the higher priority. For each task set, the analysis only focuses on the lowest-priority task. For each parameter setting, $200$ task sets are generated and analyzed.

Fig. \ref{conv} shows the experiment results (in boxplot)
using the convolution-based calculation approach, where each task set has $5$ tasks, $U_{sum}$ is $60\%$ or $80\%$, and the period range is $[1, 10]$ or $[1, 100]$. The y-axis is the WCDFP bound in the log-scale. 
Fig. \ref{CB15} show the experiment results 
using the Chernoff-bound-based calculation approach. 
In Fig. \ref{CB15}, the number of tasks is set to be 5 or $15$ or $25$, and $U_{sum}$ is set to be either $45\%$ or $60\%$.
In Fig. \ref{CB15}-(b) , Fig. \ref{CB15}-(c) and Fig. \ref{CB15}-(d), the results by our new method \textbf{Ours}
are all smaller than the minimum value on the axis.
From the above experiment results we can see that our new method \textbf{Ours}
is significantly more precise than \textbf{Carry-in} and \textbf{Inflation}, as well as their combination \textbf{Min}. Especially, the analysis precision improvement by our new method \textbf{Ours} is more significant when 
using the Chernoff-bound-based calculation approach.
\begin{figure*} 
    \centering
	 \subfloat[changing $r$; by convolution]{\includegraphics[width=0.33\linewidth, height=3.8cm]{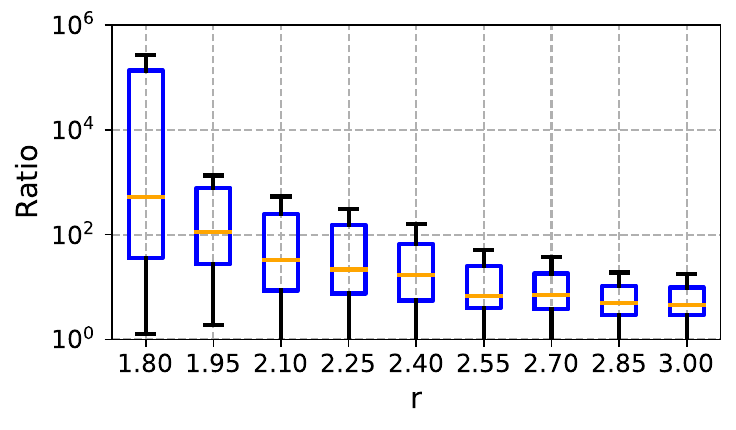}}
    \label{ratio1}\hfill
    \subfloat[changing $p_i^A$; by convolution]{\includegraphics[width=0.33\linewidth,height=3.7cm ]{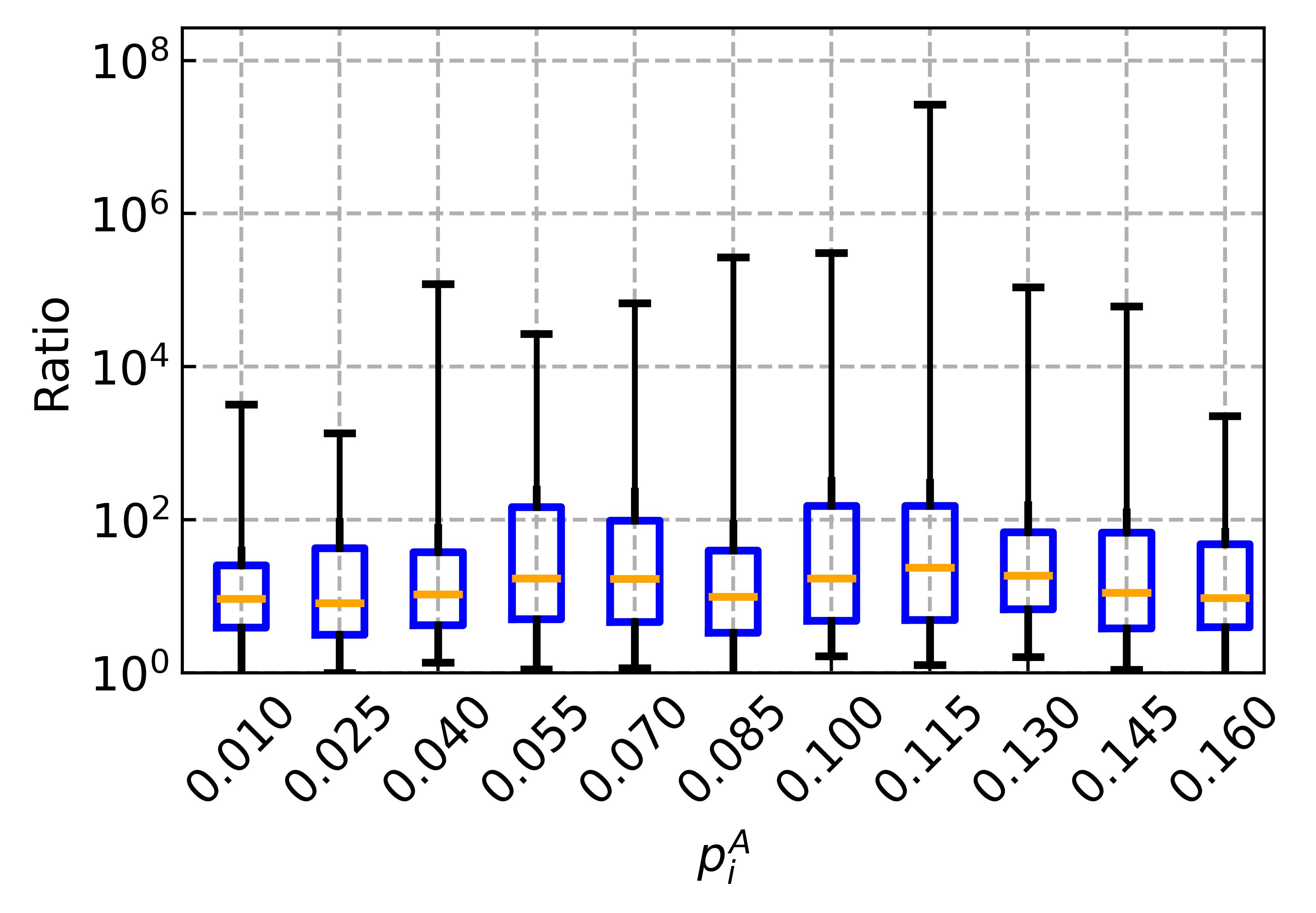}}
    \label{ratio3}\hfill
    \subfloat[changing period range; by convolution]{\includegraphics[width=0.33\linewidth, height=3.9cm]{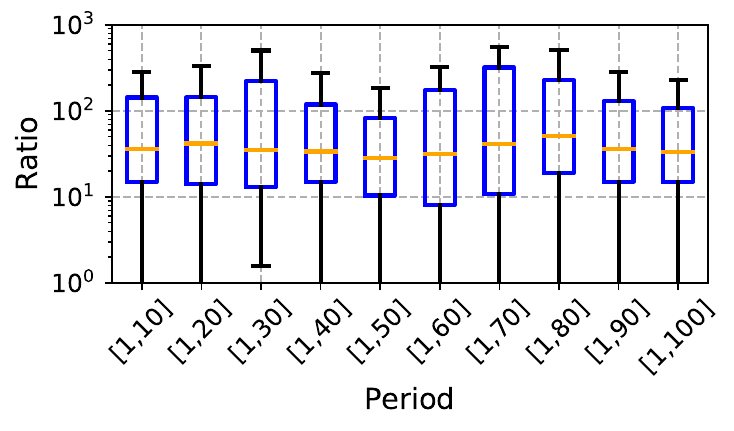}}
    \label{ratio5}\\
    \subfloat[changing $U_{sum}$; by convolution]{\includegraphics[width=0.33\linewidth,height=3.8cm ]{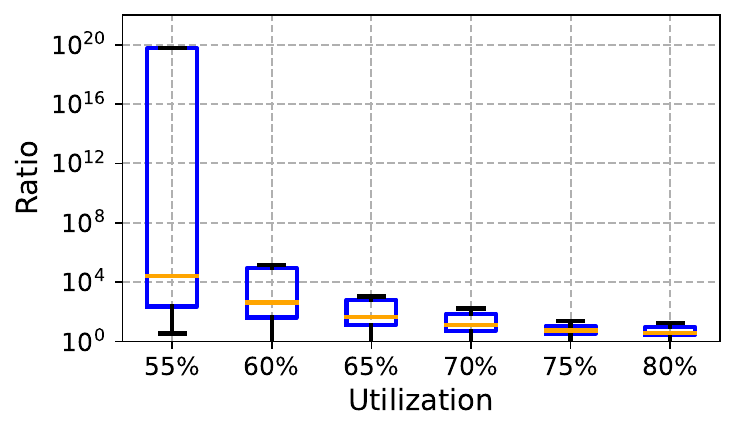}}
    \label{ratio7}\hfill
	 \subfloat[changing $r$; by Chernoff Bound]{\includegraphics[width=0.33\linewidth, height=3.8cm]{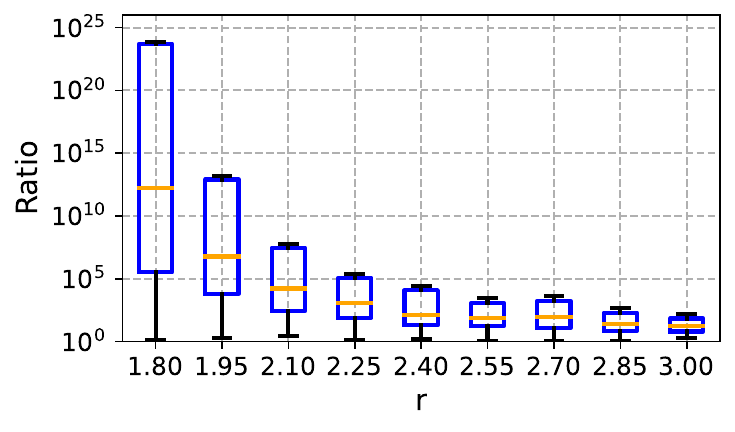}}
    \label{ratio2}\hfill
	 \subfloat[changing $p_i^A$; by Chernoff Bound]{\includegraphics[width=0.33\linewidth, height=3.8cm]{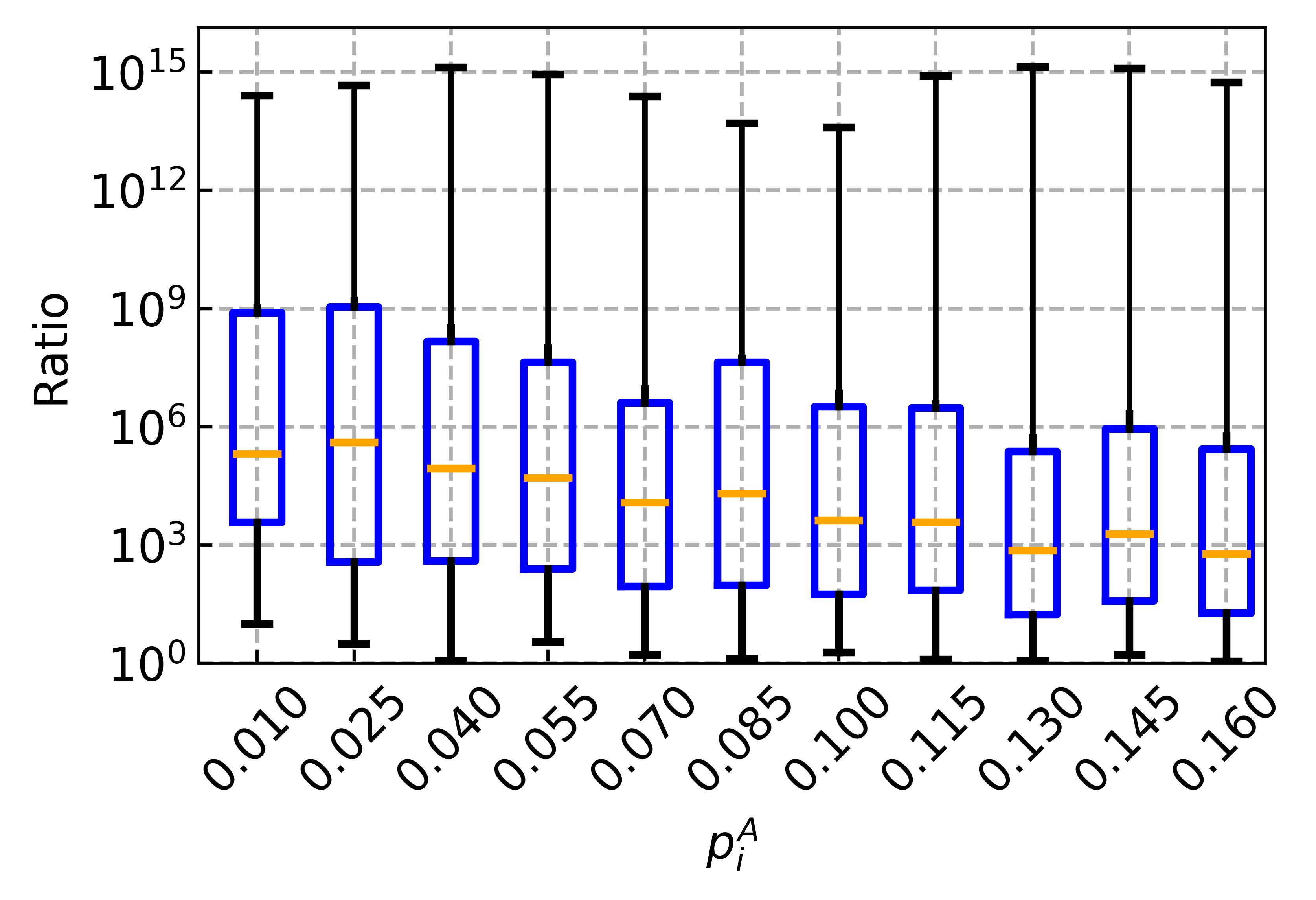}}
    \label{ratio4}\\
	 
	  \subfloat[changing of period; by Chernoff Bound]{\includegraphics[width=0.33\linewidth, height=3.9cm]{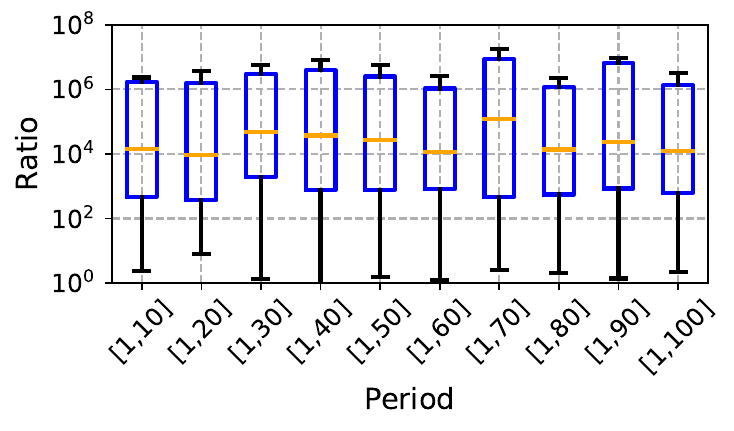}}
    \label{ratio6}\hfill 
	 \subfloat[changing $U_{sum}$; by Chernoff Bound]{\includegraphics[width=0.33\linewidth,height=3.8cm ]{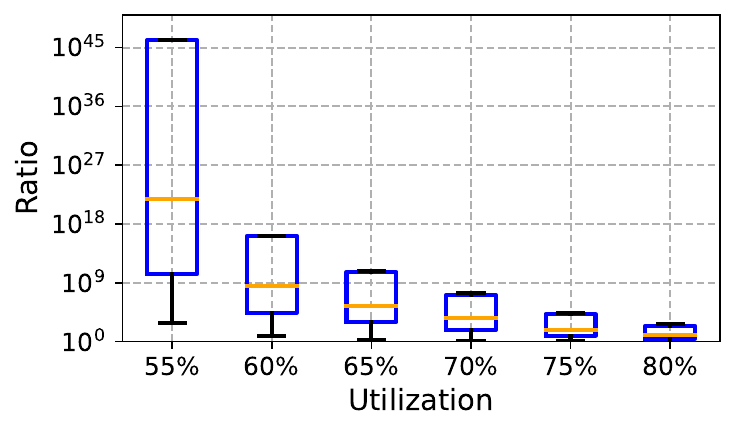}}
    \label{ratio8}\hfill
	  \subfloat[changing task numbers; by Chernoff Bound]{\includegraphics[width=0.33\linewidth,height=3.8cm ]{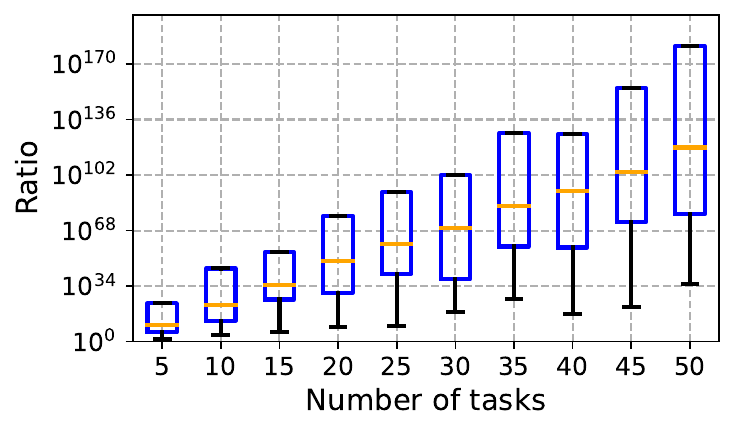}}
    \label{ratio9}\\  
    	  \caption{Improvement ratio between \textbf{Ours} and \textbf{Min}.} 
	  \label{figratio} 
\end{figure*}

\subsection{Experiments with Wider Parameter Ranges}
Next we conduct more comprehensive evaluations by 
systematically covering wider ranges along different parameter dimensions. 
In Fig. \ref{figratio}, we use a default setting 
and change one parameter at a time. The default setting is the same one used in Fig. \ref{conv}-(a). The results shown in each boxplot are the \emph{improvement ratio}, i.e., the WCDFP bounds obtained by \textbf{Min} divided by our new method \textbf{Ours}.
For each parameter setting, we generate and analyze $200$ task sets.
Fig.~\ref{figratio}-(a) and \ref{figratio}-(e) are the results with different $r$ (recall that \(c_i^A=r \times c_i^N\)), 
using the convolution-based and Chernoff-bound-based calculation approach, respectively.
In Fig. \ref{figratio}-(b) and \ref{figratio}-(f), the 
parameter $p_i^A$ (the probability of the abnormal execution time) is changing.
In Fig. \ref{figratio}-(c) and \ref{figratio}-(g), the task period range is changing.
In Fig. \ref{figratio}-(d) and \ref{figratio}-(h), $U_{sum}$ is changing. In Fig. \ref{figratio}-(i), the number of tasks is changing. Note that for changing task numbers, we only conduct experiments with the Chernoff-bound-based calculation approach, 
since the convolution-based calculation is very time-consuming and we are not able to 
complete the experiments in reasonable time when the 
number of tasks is larger. The above experiment results show that
the analysis precision improvement by our new method \textbf{Ours} is significant. The improvement is more than $10$ times in most cases, and often achieves several orders of magnitude.

In Fig \ref{improve}, we demonstrate the effectiveness of the proposed improvements from Theorem \ref{improved} through comparative analysis with Theorem \ref{3.9}. 'Improved' denotes our final WCDFP from Theorem \ref{improved}, while 'Non-improved' refers to WCDFP described in Theorem \ref{3.9}.
$U_{sum}$ is set to be $80\%$, the period is drawn from $[1, 10]$ and the number of tasks is $5$ or 10.

  \begin{figure}   
  \centering          
  \subfloat[\(n=5,U=80\%\)]  
  {
     \includegraphics[width=0.22\textwidth,height=3.5cm]{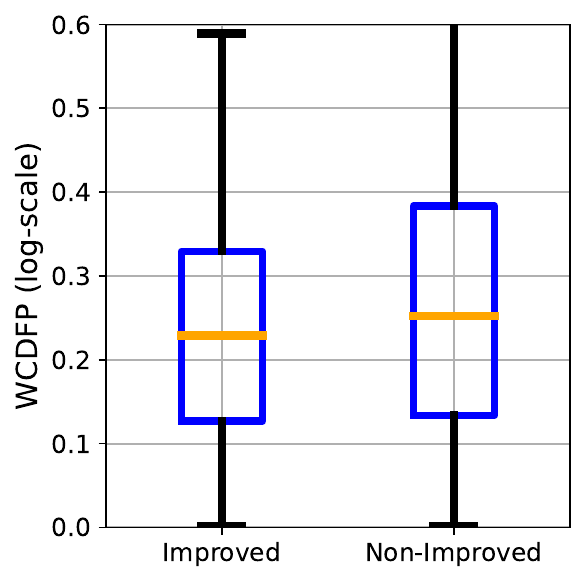}
   \label{improve-a}
     }
\hfill
  \subfloat[\(n=10,U=80\%\)]
  {
      \includegraphics[width=0.22\textwidth,height=3.5cm]{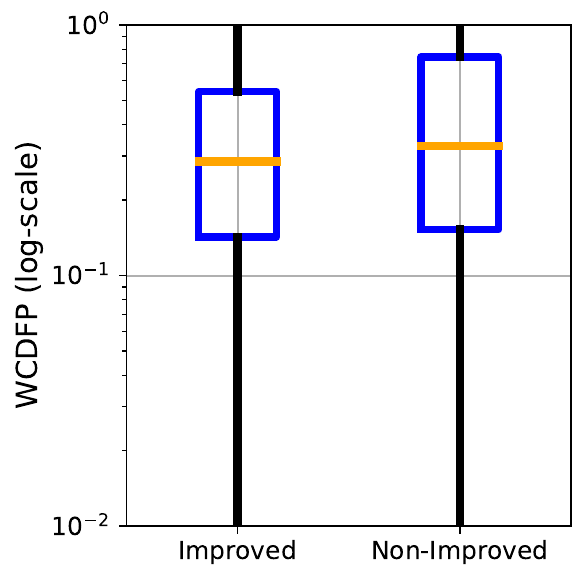}
      \label{improve-b}
  }
  \caption{
  Experiments using Convolution. \(U_{sum}\) is \(80\%\). The task number $n$ is $5$ or 10. Period $T$ range is $[1, 10]$}
  \label{improve}            
\end{figure}

\section{Related Work}

Probabilistic real-time scheduling analysis can trace back to early work \cite{1988}, which presented an analysis method for calculating the probability of missing deadlines for periodic tasks. 
From then on, research in this area has primarily followed two paths. The first path involves static methods for the exact computation or approximation of response time distributions before system deployment, as explored in \cite{2013,CJJ,diaz2002,kim2005,carnevali}. The second path utilizes measurement-based applications of extreme value theory (EVT) to approximate the distribution of maximum response times after system deployment, as discussed in \cite{2013evt,lu2012statistical,lesage2015,altmeyer2015static,kosmidis2013multi,lima2017valid}.
\cite{survey} provides a comprehensive review of probabilistic real-time scheduling analysis techniques. 

Some theories and methods in classical (deterministic) real-time system model are extended to probabilistic real-time system analysis, such as real-time queuing theory, Real-Time Calculus \cite{96queue,queue,santinelli, santinelli2015probabilistic}. 
However, many key results in the classical real-time scheduling theory do not directly extend to probabilistic real-time scheduling analysis.
A representative example is about the usage of critical instant \cite{liu} 
probabilistic real-time scheduling analysis.
Several previous works \cite{CB,2013,2017RTNS,ren2019,axer2013,markovic2018} developed various probabilistic real-time scheduling analysis techniques focusing on the 
concept of critical instant. 
However, recently \cite{CJJ} 
revealed that and developed two safe analysis methods by over-approximating the task workload to fix this problem. This 
is currently the state-of-the-art for the analysis of probabilistic fixed-priority preemptive scheduling, and used to handle different problem models \cite{markovic2022,cta,CAA}.

Probabilistic response-time analysis can be implemented by convolution over convolving multiple random variables \cite{2013,axer2013}. The high complexity of this analysis has led to the development of various techniques to address the problem of intractability. These techniques include down-sampling \cite{markovic2021,re}, using concentration inequalities \cite{CB,cta,von2018}, task-level convolution \cite{von2018}, and Monte-Carlo simulation \cite{monte}.

In recent years, researchers have increasingly considered probabilistic scheduling analysis for more complex real-time systems, such as multiprocessor systems \cite{ueter2021,deng2024,goh2024} and mixed-criticality systems\cite{von2022, guo2021mixed,ji2024}. 
Additionally, there has been a growing interest in end-to-end probabilistic analysis for industrial systems\cite{lee2022,toba2024,han2023}. 

\section{Conclusions}

This paper presents a novel analytical method for the WCDFP of sporadic real-time tasks with probabilistic execution times under fixed-priority preemptive scheduling.
The key insight lies in characterizing deadline miss probabilities across all possible starting points of busy period. For the joint probability of deadline misses across all possible busy period starting points, we propose two upper bounding methods.Through extensive empirical evaluation, our new technique significantly outperforms state-of-the-art techniques in terms of analysis precision. 


\bibliographystyle{IEEEtran}
\bibliography{main}

\begin{thebibliography}{10}
\providecommand{\url}[1]{#1}
\csname url@samestyle\endcsname
\providecommand{\newblock}{\relax}
\providecommand{\bibinfo}[2]{#2}
\providecommand{\BIBentrySTDinterwordspacing}{\spaceskip=0pt\relax}
\providecommand{\BIBentryALTinterwordstretchfactor}{4}
\providecommand{\BIBentryALTinterwordspacing}{\spaceskip=\fontdimen2\font plus
\BIBentryALTinterwordstretchfactor\fontdimen3\font minus \fontdimen4\font\relax}
\providecommand{\BIBforeignlanguage}[2]{{%
\expandafter\ifx\csname l@#1\endcsname\relax
\typeout{** WARNING: IEEEtran.bst: No hyphenation pattern has been}%
\typeout{** loaded for the language `#1'. Using the pattern for}%
\typeout{** the default language instead.}%
\else
\language=\csname l@#1\endcsname
\fi
#2}}
\providecommand{\BIBdecl}{\relax}
\BIBdecl

\bibitem{liu}
C.~L. Liu and J.~W. Layland, ``Scheduling algorithms for multiprogramming in a hard-real-time environment,'' \emph{Journal of the ACM (JACM)}, 1973.

\bibitem{CJJ}
K.-H. Chen, M.~Günzel, G.~von~der Brüggen, and J.-J. Chen, ``Critical instant for probabilistic timing guarantees: Refuted and revisited,'' in \emph{2022 IEEE Real-Time Systems Symposium (RTSS)}, 2022.

\bibitem{leugn82}
J.~W.~a. Joseph Y.T.~Leung, ``On the complexity of fixed-priority scheduling of periodic, real-time tasks,'' in \emph{Performance Evaluation}, 1982.

\bibitem{2013}
D.~Maxim and L.~Cucu-Grosjean, ``Response time analysis for fixed-priority tasks with multiple probabilistic parameters,'' in \emph{2013 IEEE 34th Real-Time Systems Symposium}, 2013.

\bibitem{bozhko2023really}
S.~Bozhko, F.~Markovi{\'c}, G.~von~der Br{\"u}ggen, and B.~B. Brandenburg, ``What really is pwcet? a rigorous axiomatic proposal,'' in \emph{2023 IEEE Real-Time Systems Symposium (RTSS)}, 2023.

\bibitem{CB}
K.-H. Chen and J.-J. Chen, ``Probabilistic schedulability tests for uniprocessor fixed-priority scheduling under soft errors,'' in \emph{2017 12th IEEE International Symposium on Industrial Embedded Systems}, 2017.

\bibitem{2017RTNS}
D.~Maxim, R.~I. Davis, L.~Cucu-Grosjean, and A.~Easwaran, ``Probabilistic analysis for mixed criticality systems using fixed priority preemptive scheduling,'' in \emph{Proceedings of the 25th International Conference on Real-Time Networks and Systems}, 2017.

\bibitem{ren2019}
J.~Ren, Z.~Xu, C.~Yu, C.~Lin, G.~Wu, and G.~Tan, ``Execution allowance based fixed priority scheduling for probabilistic real-time systems,'' \emph{Journal of Systems and Software}, 2019.

\bibitem{prob}
Y.~S. Chow and H.~Teicher, \emph{Probability theory: independence, interchangeability, martingales}, 2012.

\bibitem{axer2013}
P.~Axer and R.~Ernst, ``Stochastic response-time guarantee for non-preemptive, fixed-priority scheduling under errors,'' in \emph{Proceedings of the 50th Annual Design Automation Conference}, 2013.

\bibitem{diaz2002}
J.~L. D{\'\i}az, D.~F. Garc{\'\i}a, K.~Kim, C.-G. Lee, L.~L. Bello, J.~M. L{\'o}pez, S.~L. Min, and O.~Mirabella, ``Stochastic analysis of periodic real-time systems,'' in \emph{23rd IEEE Real-Time Systems Symposium, 2002. RTSS 2002.}, 2002.

\bibitem{von2018}
G.~von~der Br{\"u}ggen, N.~Piatkowski, K.-H. Chen, J.-J. Chen, and K.~Morik, ``Efficiently approximating the probability of deadline misses in real-time systems,'' in \emph{30th Euromicro Conference on Real-Time Systems (ECRTS)}, 2018.

\bibitem{re}
D.~Maxim, M.~Houston, L.~Santinelli, G.~Bernat, R.~I. Davis, and L.~Cucu-Grosjean, ``Re-sampling for statistical timing analysis of real-time systems,'' in \emph{Proceedings of the 20th International Conference on Real-Time and Network Systems}, 2012.

\bibitem{UU}
E.~Bini and G.~C. Buttazzo, ``Measuring the performance of schedulability tests,'' \emph{Real-time systems}, 2005.

\bibitem{2010}
P.~Emberson, R.~Stafford, and R.~I. Davis, ``Techniques for the synthesis of multiprocessor tasksets,'' in \emph{proceedings 1st International Workshop on Analysis Tools and Methodologies for Embedded and Real-time Systems (WATERS 2010)}, 2010.

\bibitem{1988}
M.~H. Woodbury and K.~G. Shin, ``Evaluation of the probability of dynamic failure and processor utilization for real-time systems,'' in \emph{Proceedings. Real-Time Systems Symposium}, 1988, pp. 222--223.

\bibitem{kim2005}
K.~Kim, J.~L. Diaz, L.~L. Bello, J.~M. L{\'o}pez, C.-G. Lee, and S.~L. Min, ``An exact stochastic analysis of priority-driven periodic real-time systems and its approximations,'' \emph{IEEE Transactions on Computers}, 2005.

\bibitem{carnevali}
L.~Carnevali, A.~Melani, L.~Santinelli, and G.~Lipari, ``Probabilistic deadline miss analysis of real-time systems using regenerative transient analysis,'' in \emph{Proceedings of the 22nd International Conference on Real-Time Networks and Systems}, 2014.

\bibitem{2013evt}
M.~Liu, M.~Behnam, and T.~Nolte, ``An evt-based worst-case response time analysis of complex real-time systems,'' in \emph{2013 8th IEEE International Symposium on Industrial Embedded Systems (SIES)}, 2013.

\bibitem{lu2012statistical}
Y.~Lu, T.~Nolte, I.~Bate, and L.~Cucu-Grosjean, ``A statistical response-time analysis of real-time embedded systems,'' in \emph{2012 IEEE 33rd Real-Time Systems Symposium}, 2012.

\bibitem{lesage2015}
B.~Lesage, D.~Griffin, S.~Altmeyer, and R.~I. Davis, ``Static probabilistic timing analysis for multi-path programs,'' in \emph{2015 IEEE Real-Time Systems Symposium}, 2015.

\bibitem{altmeyer2015static}
S.~Altmeyer, L.~Cucu-Grosjean, and R.~I. Davis, ``Static probabilistic timing analysis for real-time systems using random replacement caches,'' \emph{Real-Time Systems}, 2015.

\bibitem{kosmidis2013multi}
L.~Kosmidis, J.~Abella, E.~Quinones, and F.~J. Cazorla, ``Multi-level unified caches for probabilistically time analysable real-time systems,'' in \emph{2013 IEEE 34th Real-Time Systems Symposium}, 2013.

\bibitem{lima2017valid}
G.~Lima and I.~Bate, ``Valid application of evt in timing analysis by randomising execution time measurements,'' in \emph{2017 IEEE Real-Time and Embedded Technology and Applications Symposium (RTAS)}, 2017.

\bibitem{survey}
R.~I. Davis and L.~Cucu-Grosjean, ``A survey of probabilistic schedulability analysis techniques for real-time systems,'' \emph{LITES: Leibniz Transactions on Embedded Systems}, 2019.

\bibitem{96queue}
J.~P. Lehoczky, ``Real-time queueing theory,'' in \emph{17th IEEE Real-Time Systems Symposium}, 1996.

\bibitem{queue}
K.~Zagalo, Y.~Abdedda{\"\i}m, A.~Bar-Hen, and L.~Cucu-Grosjean, ``Response time stochastic analysis for fixed-priority stable real-time systems,'' \emph{IEEE Transactions on Computers}, 2022.

\bibitem{santinelli}
L.~Santinelli and L.~Cucu-Grosjean, ``Toward probabilistic real-time calculus,'' \emph{ACM SIGBED Review}, 2011.

\bibitem{santinelli2015probabilistic}
------, ``A probabilistic calculus for probabilistic real-time systems,'' \emph{ACM Transactions on Embedded Computing Systems (TECS)}, 2015.

\bibitem{markovic2018}
F.~Markovic, J.~Carlson, R.~Dobrin, B.~Lisper, and A.~Thekkilakattil, ``Probabilistic response time analysis for fixed preemption point selection,'' in \emph{2018 IEEE 13th International Symposium on Industrial Embedded Systems (SIES)}.\hskip 1em plus 0.5em minus 0.4em\relax IEEE, 2018, pp. 1--10.

\bibitem{markovic2022}
F.~Markovi{\'c}, T.~Nolte, and A.~V. Papadopoulos, ``Analytical approximations in probabilistic analysis of real-time systems,'' in \emph{2022 IEEE Real-Time Systems Symposium (RTSS)}, 2022.

\bibitem{cta}
F.~Markovi{\'c}, P.~Roux, S.~Bozhko, A.~V. Papadopoulos, and B.~B. Brandenburg, ``Cta: A correlation-tolerant analysis of the deadline-failure probability of dependent tasks,'' in \emph{2023 IEEE Real-Time Systems Symposium (RTSS)}, 2023.

\bibitem{CAA}
F.~Markovi{\'c}, G.~von~der Br{\"u}ggen, M.~G{\"u}nzel, J.-J. Chen, and B.~B. Brandenburg, ``A distribution-agnostic and correlation-aware analysis of periodic tasks,'' in \emph{2024 IEEE Real-Time Systems Symposium (RTSS)}.\hskip 1em plus 0.5em minus 0.4em\relax IEEE, 2024, pp. 215--228.

\bibitem{markovic2021}
F.~Markovi{\'c}, A.~V. Papadopoulos, and T.~Nolte, ``On the convolution efficiency for probabilistic analysis of real-time systems,'' in \emph{33rd Euromicro Conference on Real-Time Systems (ECRTS 2021)}, 2021.

\bibitem{monte}
S.~Bozhko, G.~von~der Br{\"u}ggen, and B.~Brandenburg, ``Monte carlo response-time analysis,'' in \emph{IEEE 42nd Real-Time Systems Symposium}, 2021.

\bibitem{ueter2021}
N.~Ueter, M.~G{\"u}nzel, and J.-J. Chen, ``Response-time analysis and optimization for probabilistic conditional parallel dag tasks,'' in \emph{2021 IEEE Real-Time Systems Symposium (RTSS)}, 2021.

\bibitem{deng2024}
X.~Deng, A.~H. Sifat, S.-Y. Huang, S.~Wang, J.-B. Huang, C.~Jung, R.~Williams, and H.~Zeng, ``Partitioned scheduling with safety-performance trade-offs in stochastic conditional dag models,'' \emph{Journal of Systems Architecture}, vol. 153, p. 103189, 2024.

\bibitem{goh2024}
J.~Goh and J.~H. Anderson, ``Towards principled budget enforcement in real-time systems,'' in \emph{2024 IEEE Real-Time Systems Symposium (RTSS)}.\hskip 1em plus 0.5em minus 0.4em\relax IEEE, 2024, pp. 256--266.

\bibitem{von2022}
G.~von~der Br{\"u}ggen, S.~Bozhko, M.~G{\"u}nzel, K.-H. Chen, J.-J. Chen, and B.~B. Brandenburg, ``Probabilistic real-time scheduling and its possible link to mixed-criticality systems,'' 2022.

\bibitem{guo2021mixed}
Z.~Guo, S.~Vaidhun, L.~Satinelli, S.~Arefin, J.~Wang, and K.~Yang, ``Mixed-criticality scheduling upon permitted failure probability and dynamic priority,'' \emph{IEEE Transactions on Computer-Aided Design of Integrated Circuits and Systems}, 2021.

\bibitem{ji2024}
D.~Ji, Y.~Guo, S.~Qin, S.~Chang, and S.~Wang, ``Probabilistic task degradation for deadline assurance in mixed-criticality systems,'' in \emph{2024 IEEE 14th International Symposium on Industrial Embedded Systems (SIES)}.\hskip 1em plus 0.5em minus 0.4em\relax IEEE, 2024, pp. 93--98.

\bibitem{lee2022}
H.~Lee, Y.~Choi, T.~Han, and K.~Kim, ``Probabilistically guaranteeing end-to-end latencies in autonomous vehicle computing systems,'' \emph{IEEE Transactions on Computers}, 2022.

\bibitem{toba2024}
H.~Toba and T.~Azumi, ``Deadline miss early detection method for dag tasks considering variable execution time,'' in \emph{36th Euromicro Conference on Real-Time Systems (ECRTS 2024)}.\hskip 1em plus 0.5em minus 0.4em\relax Schloss Dagstuhl--Leibniz-Zentrum f{\"u}r Informatik, 2024, pp. 8--1.

\bibitem{han2023}
T.~Han and K.~Kim, ``Minimizing probabilistic end-to-end latencies of autonomous driving systems,'' in \emph{2023 IEEE 29th Real-Time and Embedded Technology and Applications Symposium (RTAS)}.\hskip 1em plus 0.5em minus 0.4em\relax IEEE, 2023, pp. 27--39.

\end{thebibliography}

\vfill

\end{document}